\newtheorem{lem}{Lemma}
\newtheorem{theorem}{Theorem}
\newcommand{\Id}[1]{\boldsymbol{\mathbb{I #1}}}
\begin{document}

\preprint{APS/123-QED}

\title{Windowed multipole representation \\ of \\ $R$-matrix cross sections}

\author{Pablo Ducru}
\email{p\_ducru@mit.edu ; pablo.ducru@polytechnique.org}
\altaffiliation[]{Also from \'Ecole Polytechnique, France. \& Schwarzman Scholars, Tsinghua University, China.}
\affiliation{%
Massachusetts Institute of Technology\\
Department of Nuclear Science \& Engineering\\
77 Massachusetts Avenue, Cambridge, MA, 02139 U.S.A.\\
}%
\author{Vladimir Sobes}%
\email{sobesv@utk.edu}
\affiliation{%
University of Tennessee\\
Department of Nuclear Engineering\\
1412 Circle Drive, Knoxville, TN, 37996, U.S.A.
}%

\author{Abdulla Alhajri}
\email{alhajri@mit.edu}
\author{Isaac Meyer}
\email{icmeyer@mit.edu}
\author{Benoit Forget}
\email{bforget@mit.edu}
\affiliation{%
Massachusetts Institute of Technology\\
Department of Nuclear Science \& Engineering\\
77 Massachusetts Avenue, Cambridge, MA, 02139 U.S.A.\\
}%

\author{Colin Josey}
\email{cjosey@lanl.gov}
\affiliation{%
Los Alamos National Laboratory\\
P.O. Box 1663 MS A143, Los Alamos, NM 87545, U.S.A.
}%

\author{Jingang Liang}
\email{jingang@mit.edu}
\affiliation{%
Tsinghua University\\
Institute of Nuclear and New Energy Technology, Beijing 100084, China
}


%


\date{\today}

\begin{abstract}

Nuclear cross sections are basic inputs to any nuclear computation. 
Campaigns of experiments are fitted with the parametric R-matrix model of quantum nuclear interactions, and the resulting cross sections are documented -- both point-wise and as resonance parameters (with uncertainties) -- in standard evaluated nuclear data libraries (ENDF, JEFF, BROND, JENDL, CENDL, TENDL): these constitute our common knowledge of fundamental nuclear physics. 
In the past decade, a collaborative effort has been deployed to establish a new nuclear cross section library format --- the \textit{Windowed Multipole Library} --- with the goal of considerably reducing the cost of cross section calculations in nuclear transport simulations. 

This article lays the theoretical foundations underpinning these efforts. 
From general R-matrix scattering theory, we derive the \textit{windowed multipole representation} of nuclear cross sections.
Though physically and mathematically equivalent, the windowed multipole representation is particularly well suited for subsequent temperature treatment of angle-integrated cross sections: we show that accurate Doppler broadening can be performed analytically up to the first reaction threshold; and we derive cross sections temperature derivatives to any order. 
Furthermore, we here establish a way of converting the R-matrix resonance parameters uncertainty (covariance matrices) into windowed multipole parameters uncertainty. 
We show that generating stochastic nuclear cross sections by sampling from the resulting windowed multipole covariance matrix can reproduce the cross section uncertainty in the original nuclear data file. 

Through this foundational article, we hope to make the Windowed Multipole Representation accessible, reproducible, and usable for the nuclear physics community, as well as provide the theoretical basis for future research on expanding its capabilities.   

\end{abstract}

\maketitle


\clearpage

\section{\label{sec:Introduction}Introduction}

Our knowledge of nuclear reactions is progressively built-up by undertaking experiments and analyzing their outcomes through the prism of a quantum model of nuclear collisions called R-matrix theory \cite{Kapur_and_Peierls_1938, Wigner_and_Eisenbud_1947, Bloch_1957, Lane_and_Thomas_1958}. This is known as the nuclear data evaluation process.
Evaluators conduct campaigns to measure nuclear cross sections and fit them with R-matrix parameters.
To account for the epistemic uncertainty introduced, evaluators generate nuclear resonance parameters covariance matrices to reproduce the variance observed in the measurements. 

Other parametrizations of nuclear cross sections exist -- such as the Humblet-Rosenfeld pole expansions in wavenumber space \cite{Theory_of_Nuclear_Reactions_I_resonances_Humblet_and_Rosenfeld_1961, Theory_of_Nuclear_Reactions_II_optical_model_Rosenfeld_1961, Theory_of_Nuclear_Reactions_III_Channel_radii_Humblet_1961_channel_Radii, Theory_of_Nuclear_Reactions_IV_Coulomb_Humblet_1964, Theory_of_Nuclear_Reactions_V_low_energy_penetrations_Jeukenne_1965, Theory_of_Nuclear_Reactions_VI_unitarity_Humblet_1964, Theory_of_Nuclear_Reactions_VII_Photons_Mahaux_1965, Theory_of_Nuclear_Reactions_VIII_evolutions_Rosenfeld_1965,Theory_of_Nuclear_Reactions_IX_few_levels_approx_Mahaux_1965} -- but none have proven as practical to document or use as R-matrix theory, which is why our standard evaluated nuclear data libraries (ENDF\cite{ENDFBVIII8th2018brown}, JEFF\cite{JEFF_2020_plompenJointEvaluatedFission2020}, BROND\cite{BROND_2016}, JENDL\cite{JENDL_shibataJENDL4NewLibrary2011}, CENDL\cite{CENDLProjectChinese2017}, TENDL\cite{TENDLkoningModernNuclearData2012, koningTENDLCompleteNuclear2019}) are constituted of R-matrix parameters (and their covariance uncertainties). \\
At the end of the 20$^{\text{th}}$ century, R. Hwang from Argonne National Laboratory found a way to calculate from R-matrix parameters the Humblet-Rosenfeld pole expansion of neutron cross sections without thresholds, where the wavenumber is proportional to the square root of energy $k_c(E) \propto \sqrt{E}$. He also showed that this \textit{pole representation} in $z \triangleq \sqrt{E}$ space presents a major advantage for subsequent temperature treatment: integral Doppler broadening can be accurately computed with analytic expressions \cite{Hwang_1987, Hwang_1991, Hwang_1992, Hwang_1998, Hwang_2003}. This formalism was further developed into the \textit{windowed multipole representation} in order to perform efficient on-the-fly computations of no-threshold neutron cross sections with a lesser computational memory footprint \cite{Forget_2013, joseyWindowedMultipoleSensitivity2015, Josey_JCP_2016, Ducru_PHYSOR_conversion_2016, Multipole_regulatized_VF_2018}. 
In this article, we extend the windowed multipole representation to all cross sections in the context of R-matrix theory: be they Coulomb, photon, neutrons, with or without thresholds. We also provide means of converting resonance parameters uncertainties into windowed multipole uncertainties. We thus lay the foundations to constitute a full \textit{Windowed Multipole Library}, encompassing all present nuclear data \cite{MitcrpgWMPLibrary2019}. 

In section \ref{sec:From R-matrix to Windowed Multipole}, we derive the windowed multipole representation from general R-matrix theory, showing it is the meromorphic continuation of cross sections to complex energies, and discuss numerical ways of computing the mutipoles, either from resonance parameters or point-wise cross section data.
In section \ref{sec::Windowed Multipole Covariance}, we expand the windowed multipole representation to account for the epistemic uncertainty of the nuclear cross sections \cite{Ducru_MnC_2019_Embedded_Monte_Carlo, Alhajri_MnC_2019_Sensitivity_Monte_Carlo}. We establish the analytic sensitivities of the windowed multipole parameters to the Wigner-Eisenbud R-matrix resonance parameters. This enables us to convert to first-order the standard resonance parameters covariance matrix into a windowed multipoles covariance matrix, and show the latter reproduces the statistical properties of nuclear cross section uncertainties.
Having done so, we consider temperature effects in section \ref{sec:DB of WMP}, showing how to analytically Doppler broaden windowed multipole angle-integrated cross sections, and how to compute arbitrary-order temperature derivatives that can prove useful in multiphysics simulations \cite{harperCalculatingReactionRate2016, Sterling_Harper_Cancun_2018_PHYSOR}.

By deriving conversion methods of R-matrix resonance parameters and their uncertainties (covariance matrices) to windowed multipoles, and showing how to account for temperature effects, we thus establish the windowed multipole representation as a general, physically equivalent parametrization of R-matrix cross sections.
By its efficient on-the-fly treatment of uncertainty and Doppler broadening, the windowed multipole representation can achieve considerable computational gains, and has already found several new nuclear reactor physics applications, from the establishment of a new analytic benchmark for neutron slowing down that resolves nuclear resonances overlap \cite{Analytic_Benchmark_1_2020}, or explicit resonance treatment for thermal up-scattering of angular cross sections \cite{Liang_Ducru_ANS_2017}, to differential temperature tallies for higher-order neutronics-thermohydraulics coupling schemes in nuclear transport solvers \cite{harperCalculatingReactionRate2016, Sterling_Harper_Cancun_2018_PHYSOR}, or enabling new uncertainty inference and propagation methods across intractable nuclear systems \cite{Ducru_MnC_2019_Embedded_Monte_Carlo}. 


\section{\label{sec:From R-matrix to Windowed Multipole}From R-matrix to \\ Windowed Multipole}

We here establish the \textit{Windowed Multipole Representation}, deriving it from general R-matrix scattering theory. 
In doing so, we show that R-matrix cross sections are the sum of two phenomena: thresholds and resonances. 
Thresholds have a behavior in the wavenumber $k_c$ space of the channel $c$, so that in the vicinity of a threshold the cross section admits a Laurent expansion in powers of $k_c$ (starting at $k_c^{-2}$).
Resonances have a behavior in the energy space $E$, and can thus be locally expressed as a sum of Single-Level Breit-Wigner (SLBW) resonances, with both symmetric and anti-symmetric Lorenztian functions. 
In \cite{Ducru_Scattering_Matrix_of_Complex_Wavenumbers_2019}, we linked the R-matrix parametrization of the scattering matrix $\boldsymbol{U}(E)$ to its wavenumber $k_c$ expansion, established by Humblet and Rosenfeld in their \textit{Theory of Nuclear Reactions} \cite{Theory_of_Nuclear_Reactions_I_resonances_Humblet_and_Rosenfeld_1961, Theory_of_Nuclear_Reactions_II_optical_model_Rosenfeld_1961, Theory_of_Nuclear_Reactions_III_Channel_radii_Humblet_1961_channel_Radii, Theory_of_Nuclear_Reactions_IV_Coulomb_Humblet_1964, Theory_of_Nuclear_Reactions_V_low_energy_penetrations_Jeukenne_1965, Theory_of_Nuclear_Reactions_VI_unitarity_Humblet_1964, Theory_of_Nuclear_Reactions_VII_Photons_Mahaux_1965, Theory_of_Nuclear_Reactions_VIII_evolutions_Rosenfeld_1965,Theory_of_Nuclear_Reactions_IX_few_levels_approx_Mahaux_1965}.
In this article, we use this connection to establish the Windowed Multipole Representation, which is the meromorphic continuation of R-matrix cross sections in $z \triangleq \sqrt{E}$ space, locally expressing open channels as pole expansions. 
We build upon our previous work on such expansions \cite{Ducru_shadow_Brune_Poles_2019, Ducru_Scattering_Matrix_of_Complex_Wavenumbers_2019}, using the same consistent notation as reference.

\subsection{\label{subsec:R-matrix cross section parametrization}R-matrix cross section parametrization}

R-matrix theory models two-body-in/two-body-out scattering events interacting with a ``black-box'' Hamiltonian \cite{Kapur_and_Peierls_1938, Wigner_and_Eisenbud_1947, Bloch_1957, Lane_and_Thomas_1958}. 
Each pair of possible two-body-inputs/two-body-outputs, along with all the corresponding quantum numbers that describe then, constitutes a channel $c$. 
It is assumed that for each channel, the Hamiltonian can be partitioned into two regions: within an ``inner region'' sphere of channel radius $a_c$, the many bodies interacting through the strong nuclear forces are considered an intractable ``black-box'' Hamiltonian; past the channel radius $a_c$, the ``outer region'' Hamiltonian is well known (say Coulomb potential or free-wave).
For each channel $c$, R-matrix theory studies the many-body scattering event into the reduced one-body system, where the solution of the Schr\"odinger equation is a superposition of an incoming wavefunction $I_c$ and an outgoing wavefunction $O_c$, both function of the wavenumber $k_c$.
The latter can be multiplied by the arbitrary (but fixed) channel radius $a_c$ to yield the \textit{dimensionless wavenumber} 
\begin{equation}
\begin{IEEEeqnarraybox}[][c]{rcl}
\rho_c  & \ \triangleq  \ &  k_c a_c 
\IEEEstrut\end{IEEEeqnarraybox}
\label{eq:rho_c def}
\end{equation}
and we define the corresponding diagonal matrix over all the channels $\boldsymbol{\rho}=\boldsymbol{\mathrm{diag}}\left(\rho_c\right)$.

\subsubsection{\label{subsubsec:Wavenumber-Energy mapping}Wavenumber-Energy mapping}

Each wavenumber is related to the total energy $E$ of the system, which is an eigenvalue of the Hamiltonian in the reduced center-of-mass frame.
In the semi-classical limit, a two massive particles channel (i.e. not photons) of respective masses $m_{c,1}$ and $ m_{c,2}$ will have a wavenumber $k_c$ of:
\begin{equation}
\begin{IEEEeqnarraybox}[][c]{rcl}
k_c(E)  & \ = \ & \sqrt{\frac{2m_{c,1} m_{c,2}}{\left(m_{c,1}+m_{c,2}\right) \mathrm{\hbar}^2} \left(E - E_{T_c}\right)}
\IEEEstrut\end{IEEEeqnarraybox}
\label{eq:rho_c massive}
\end{equation}
where $E_{T_c}$ denotes a threshold energy below which the channel $c$ is closed, as energy conservation cannot be respected ($E_{T_c} = 0 $ for reactions without threshold).
In the same semi-classical limit, a photon particle interacting with a massive body of mass $m_{c,1}$, the center-of-mass wavenumber $k_c$ is linked to the total center-of-mass energy $E$ according to:
\begin{equation}
k_c(E) = \frac{\left( E - E_{T_c} \right)}{2 \mathrm{\hbar}\mathrm{c}}\left[ 1 + \frac{m_{c,1} \mathrm{c}^2}{\left( E - E_{T_c} \right) + m_{c,1}\mathrm{c}^2}\right]
\label{eq:rho_c photon}
\end{equation}
These two semi-classical limits can be encompassed within a single relativistic framework as discussed in equations (4) and (5), section II.A. of \cite{Ducru_shadow_Brune_Poles_2019}. 
Because one must choose the sign of the square root $\pm\sqrt{\cdot}$ in (\ref{eq:rho_c massive}), these $k_c(E)$ relations engender a wavenumber-energy mapping
\begin{equation}
\rho_c(E)  \quad \longleftrightarrow \quad E
\label{eq:rho_c(E) mapping}
\end{equation}
which forms a complex multi-sheeted Riemann surface with branch-points at (or close to) the threshold energies $E_{T_c}$, as discussed in section II.A. p.2 of \cite{Ducru_shadow_Brune_Poles_2019}. 

\subsubsection{\label{subsubsec:Transmission matrix and cross section expressions} Transmission matrix and cross section expressions}

General scattering theory expresses the incoming channel $c$ and outgoing channel $c'$ angle-integrated partial cross section $\sigma_{c,c'}(E)$ at energy $E$ as a function of the probability \textit{transmission matrix} $T_{cc'}(E)$, according to eq.(3.2d) VIII.3. p.293 of \cite{Lane_and_Thomas_1958}:
\begin{equation}
\begin{IEEEeqnarraybox}[][c]{rcl}
\sigma_{cc'}(E) & \ = \ & 4\pi g_{J^\pi_c} \left|\frac{T_{cc'}(E)}{k_c(E)} \right|^2 
\IEEEstrut\end{IEEEeqnarraybox}
\label{eq:partial sigma_cc'}
\end{equation}
where the \textit{spin statistical factor} is defined eq.(3.2c) VIII.3. p.293. of \cite{Lane_and_Thomas_1958} as:
\begin{equation}
    g_{J^\pi_c} \ \triangleq \ \frac{2 J + 1 }{\left(2 I_1 + 1 \right)\left(2 I_2 + 1 \right) }
    \label{eq: spin statistical factor}
\end{equation}
The transmission matrix is itself derived from the \textit{scattering matrix} $\boldsymbol{U}$ of the interaction:
\begin{equation}
\begin{IEEEeqnarraybox}[][c]{rcl}
\boldsymbol{T} & \ \triangleq \ & \frac{ \Id{} - \boldsymbol{\mathrm{e}}^{-\mathrm{i}\boldsymbol{\omega}} \boldsymbol{U} \boldsymbol{\mathrm{e}}^{-\mathrm{i}\boldsymbol{\omega}} }{2}
\IEEEstrut\end{IEEEeqnarraybox}
\label{eq: Transmission matrix}
\end{equation}
where $\boldsymbol{\omega} \triangleq \boldsymbol{\mathrm{diag}}\big( \omega_c \big)$ is the diagonal matrix composed of $\omega_c \triangleq \sigma_{\ell_c}(\eta_c) - \sigma_{0}(\eta_c) $, that is the difference in \textit{Coulomb phase shift}, $\sigma_{\ell_c}(\eta_c) $, which are linked to the phases (argument) of the Gamma function as defined by Ian Thompson in eq.(33.2.10) of \cite{NIST_DLMF} for angular momentum $\ell_c$
\begin{equation}
\begin{IEEEeqnarraybox}[][c]{rcl}
\sigma_{\ell_c}(\eta_c) & \ \triangleq  \ &  \mathrm{arg}\Big( \Gamma\left(1+ \ell_c + \mathrm{i}\eta_c\right) \Big)
\IEEEstrut\end{IEEEeqnarraybox}
\label{eq: Coulomb phase shift def}
\end{equation}
 and \textit{dimensionless Coulomb field parameter}:
\begin{equation}
\begin{IEEEeqnarraybox}[][c]{rcl}
\eta_c & \ \triangleq  \ &   \frac{Z_1 Z_2 e^2 M_\alpha a_c}{\hbar^2 \rho_c}
\IEEEstrut\end{IEEEeqnarraybox}
\label{eq: eta_c def}
\end{equation}
Note that this transmission matrix (\ref{eq: Transmission matrix}) definition $ T_{cc'} \triangleq \frac{\delta_{cc'} - \mathrm{e}^{-\mathrm{i}\omega_c} U_{cc'} \mathrm{e}^{-\mathrm{i}\omega_{c'}} }{2}$ is a scaled rotation of the one defined by Lane and Thomas $T_{cc'}^{\text{L\&T}} \ \triangleq \  \delta_{cc'}\mathrm{e}^{2\mathrm{i}\omega_c} - U_{cc'}$ (c.f. eq. (2.3), VIII.2. p.292 and eq.(3.2d) VIII.3. p.293 of \cite{Lane_and_Thomas_1958}). We introduce definition (\ref{eq: Transmission matrix}) for better physical interpretability, algebraic simplicity and numerical stability.

Unitarity of the scattering matrix entails that $ \sum_{c'}\left|\delta_{cc'} - \mathrm{e}^{-\mathrm{i}\omega_{c} }  U_{cc'} \mathrm{e}^{-\mathrm{i}\omega_{c'} }   \right|^2 =  2\left(1 - \Re\left[ \mathrm{e}^{-2\mathrm{i}\omega_{c} }  U_{cc} \right]\right)$, which in turn leads to the following expression for the total cross section of a given channel:
\begin{equation}
    \begin{IEEEeqnarraybox}[][c]{rcl}
      \sigma_{c}(E)  \ \triangleq \ \sum_{c'}\sigma_{cc'}(E) & \ = \ & 4 \pi g_{J^\pi_c} \frac{ \Re\left[  T_{cc}(E) \right] }{ \left|k_c(E)\right|^2}
    \IEEEstrut\end{IEEEeqnarraybox}
     \label{eq:total σ_c}
\end{equation}
In both cross section expressions (\ref{eq:partial sigma_cc'}) and (\ref{eq:total σ_c}), the $1/\left|k_c\right|^{2}$ term links the cross section to the probability of interaction, and expresses the channel reversibility equivalence:
\begin{equation}
    \begin{IEEEeqnarraybox}[][c]{rcl}
      \frac{k_c^2\sigma_{cc'}}{g_{J^\pi_c} }  \ = \  \frac{k_{c'}^2\sigma_{c'c}}{g_{J^\pi_{c'}} } 
    \IEEEstrut\end{IEEEeqnarraybox}
     \label{eq:cross section reversibility equivalence}
\end{equation}

The incoming $I_c$ and outgoing $O_c$ waves are functions of the dimensionless wavenumber $\rho_c \triangleq a_c k_c$ and are linked to the regular and irregular Coulomb wave functions (or Bessel functions in the case of neutral particle channels), defined in eq.(2.13a)-(2.13b) III.2.b p.269 \cite{Lane_and_Thomas_1958}:
\begin{equation}
\begin{IEEEeqnarraybox}[][c]{rcl}
O_c  & \ = \ &   {H_{+}}_c \mathrm{e}^{- \mathrm{i} \omega_c} = \left(G_c + \mathrm{i} F_c\right)\mathrm{e}^{- \mathrm{i} \omega_c} \\
I_c & \ = \ &  {H_{-}}_c \mathrm{e}^{\mathrm{i} \omega_c}  = \left(G_c - \mathrm{i} F_c\right) \mathrm{e}^{\mathrm{i} \omega_c}
\IEEEstrut\end{IEEEeqnarraybox}
\label{eq:def H_pm I and O}
\end{equation}
and for properties of which we refer to Ian J. Thompson's Chapter 33, eq.(33.2.11) in \cite{NIST_DLMF}, or Abramowitz \& Stegun chapter 14, p.537 \cite{Abramowitz_and_Stegun}.
In polar notation:
\begin{equation}
\begin{IEEEeqnarraybox}[][c]{rcl}
{H_{+}}_c & \ = \ &  \left| {H_{+}}_c\right|\mathrm{e}^{ \mathrm{i} \phi_c}  \\
\left| {H_{+}}_c\right| & =  &  \left|\sqrt{\left|G_c\right|^2 + \left|F_c\right|^2} \right| \\
\phi_c & \ \triangleq \ &  \mathrm{arg}\left({H_{+}}_c\right) = 2 \; \mathrm{arctan}\left( \frac{\left|F_c\right|}{ \left| {H_{+}}_c\right| + \left|F_c\right|}\right) 
\IEEEstrut\end{IEEEeqnarraybox}
\label{eq:def H_+ polar decomposition}
\end{equation}


\subsubsection{\label{subsubsec:R-matrix scattering matrix parametrization} R-matrix scattering matrix parametrization}

R-matrix theory parametrizes the energy dependence of the scattering matrix $\boldsymbol{U}(E)$ as:
\begin{equation}
\begin{IEEEeqnarraybox}[][c]{rcl}
\boldsymbol{U} & \ = \ & \boldsymbol{O}^{-1}\boldsymbol{I} + 2\mathrm{i} \boldsymbol{\rho}^{1/2} \boldsymbol{O}^{-1} \boldsymbol{R}_{L} \boldsymbol{O}^{-1} \boldsymbol{\rho}^{1/2}
\IEEEstrut\end{IEEEeqnarraybox}
\label{eq:U expression}
\end{equation}
where the incoming and outgoing wavefunctions, $\boldsymbol{I}=\boldsymbol{\mathrm{diag}}\left(I_c\right)$ and $\boldsymbol{O}=\boldsymbol{\mathrm{diag}}\left(O_c\right)$, are subject to the following Wronksian condition for all channel $ c$, $ w_c \triangleq O_c^{(1)}I_c - I_c^{(1)}O_c = 2\mathrm{i} $, and where $\boldsymbol{R}_{L}$ is the \textit{Kapur-Peierls operator}, defined as (see equation (20) section II.D of \cite{Ducru_shadow_Brune_Poles_2019}):

\begin{equation}
\begin{IEEEeqnarraybox}[][c]{rcl}
\boldsymbol{R}_{L}  & \ \triangleq  \ &  \left[ \Id{} -  \boldsymbol{R}\boldsymbol{L^0} \right]^{-1} \boldsymbol{R}  =   \boldsymbol{\gamma}^\mathsf{T} \boldsymbol{A} \boldsymbol{\gamma}
\label{eq: def Kapur-Peierls operator}
\IEEEstrut\end{IEEEeqnarraybox}
\end{equation}
where $\boldsymbol{R}$ is the Wigner-Eisenbud \textit{R-matrix} \cite{Wigner_and_Eisenbud_1947}:
\begin{equation}
\begin{IEEEeqnarraybox}[][c]{rcl}
R_{cc'}(E) & \; \triangleq \; &  \sum_{\lambda=1}^{N_\lambda}\frac{\gamma_{\lambda,c}\gamma_{\lambda,c'}}{E_\lambda - E}
\IEEEstrut\end{IEEEeqnarraybox}
\label{eq:R expression}
\end{equation}
parametrized by the real \textit{resonance energies} $E_\lambda \in \mathbb{R}$ and the real \textit{resonance widths} $\gamma_{\lambda,c} \in \mathbb{R}$ -- of which we respectively build the diagonal matrix $\boldsymbol{e} = \boldsymbol{\mathrm{diag}}\left( E_\lambda \right)$ of size the number of levels (resonances) $N_\lambda$, and the rectangular matrix $\boldsymbol{\gamma} = \boldsymbol{\mathrm{mat}}\left( \gamma_{\lambda,c} \right)$ of size $N_\lambda \times N_c$ where $N_c$ is the number of channels. 
The Kapur-Peierls operator (\ref{eq: def Kapur-Peierls operator})
is thus a function of $\boldsymbol{R}$ and $ \boldsymbol{L^0} \triangleq \boldsymbol{L} - \boldsymbol{B}$, where $\boldsymbol{B} = \boldsymbol{\mathrm{diag}}\left( B_c \right)$ is the diagonal matrix of real arbitrary boundary conditions $B_c$, and $\boldsymbol{L} = \boldsymbol{\mathrm{diag}}\left( L_c \right)$ where $L_c(\rho_c)$ is the dimensionless reduced logarithmic derivative of the outgoing-wave function at the channel surface:
\begin{equation}
    L_c(\rho_c)   \triangleq  \frac{\rho_c}{O_c} \frac{\partial O_c}{\partial \rho_c}
    \label{eq: L operator} 
\end{equation}
An equivalent definition (\ref{eq: def Kapur-Peierls operator}) of the Kapur-Peierls operator $\boldsymbol{R}_{L}$ can be expressed with the \textit{level matrix} $\boldsymbol{A}$  (see equations (17) and (18) of section II.C of \cite{Ducru_shadow_Brune_Poles_2019}):
\begin{equation}
\begin{IEEEeqnarraybox}[][c]{rcl}
\boldsymbol{A^{-1}} & \ \triangleq \ & \boldsymbol{e} - E\Id{} - \boldsymbol{\gamma}\left( \boldsymbol{L} - \boldsymbol{B} \right)\boldsymbol{\gamma}^\mathsf{T}
\IEEEstrut\end{IEEEeqnarraybox}
\label{eq:inv_A expression}
\end{equation}

As such, provided with the threshold energies, the channel radius, boundary conditions, and Wigner-Eisenbud resonance energies and widths, which we can collectively call the set of R-matrix parameters $\Big\{ E_{T_c}, a_c, B_c, E_{\lambda}, \gamma_{\lambda,c} \Big\}$, one can entirely determine the energy behavior of the scattering matrix $\boldsymbol{U}$ through (\ref{eq:U expression}), and therefore the cross sections through (\ref{eq:partial sigma_cc'}) and (\ref{eq:total σ_c}).

\subsubsection{\label{subsubsec:Reich-Moore and Breit-Wigner approximations to R-matrix theory} Reich-Moore and Breit-Wigner approximations to R-matrix theory}

In practice, many evaluations in standard nuclear data libraries are carried out with approximations of R-matrix theory.
The most important and common is the \textit{Reich-Moore} approximation. It reduces the R-matrix to only the channels of interest, 
and accounts for the effect of all the other channels not explicitly treated by means of the Teichmann and Wigner channel elimination method (c.f. \cite{Teichmann_and_Wigner_1952} or section X, p.299 of \cite{Lane_and_Thomas_1958}). 
Usually, photon channels ($\gamma$ ``gamma capture'') are eliminated, so that in practice the Reich-Moore approximation of R-matrix theory \cite{Reich_Moore_1958} consists of adding a partial eliminated capture width $\Gamma_{\lambda,\gamma}$  to every resonance energy $E_\lambda$, shifting the latter into the complex plane (c.f. section IV.A of \cite{Ducru_shadow_Brune_Poles_2019}):
\begin{equation}
\begin{IEEEeqnarraybox}[][c]{rcl}
\boldsymbol{e}_{\mathrm{R.M.}} & \ \triangleq \ & \boldsymbol{\mathrm{diag}}\left( E_\lambda - \mathrm{i}\frac{\Gamma_{\lambda,\gamma}}{2}\right)
\IEEEstrut\end{IEEEeqnarraybox}
\label{eq:e diagonal matrix Reich-Moore}
\end{equation}
The R-matrix (\ref{eq:R expression}) without the eliminated photon channels becomes:
\begin{equation}
\begin{IEEEeqnarraybox}[][c]{rcl}
R_{c,c'\not\in\gamma_{\mathrm{elim.}} } & \; \triangleq \; &  \sum_{\lambda=1}^{N_\lambda}\frac{\gamma_{\lambda,c}\gamma_{\lambda,c'}}{E_\lambda - \mathrm{i}\frac{\Gamma_{\lambda,\gamma}}{2} - E}  \\
 \mathrm{i.e.}  \quad \boldsymbol{R}_{\mathrm{R.M.}} & \; = \; & \boldsymbol{\gamma}^\mathsf{T} \left(\boldsymbol{e}_{\mathrm{R.M.}} - E\Id{}\right)^{-1}\boldsymbol{\gamma}
\IEEEstrut\end{IEEEeqnarraybox}
\label{eq:R expression Reich Moore}
\end{equation}
and the Reich-Moore inverse level matrix (\ref{eq:inv_A expression}) becomes:
\begin{equation}
\begin{IEEEeqnarraybox}[][c]{rcl}
\boldsymbol{A^{-1}}_{\mathrm{R.M.}} & \ \triangleq \ & \boldsymbol{e}_{\mathrm{R.M.}} - E\, \Id{} - \boldsymbol{\gamma}\left( \boldsymbol{L} - \boldsymbol{B} \right)\boldsymbol{\gamma}^\mathsf{T}
\IEEEstrut\end{IEEEeqnarraybox}
\label{eq:inv_A expression Reich-Moore}
\end{equation}
All the other R-matrix expressions linking these operators to the scattering matrix (\ref{eq:U expression}), and therefore the cross sections, remain unchanged.
Practically, the only consequence of the Reich-Moore formalism is to introduce complex resonance energies (\ref{eq:e diagonal matrix Reich-Moore}). In this sense, one can consider the Reich-Moore formalism as a generalization of R-matrix theory, even though it finds its source in the elimination of intractable channels. It can thus also be seen as a compression algorithm. 
Indeed, it is possible to convert Reich-Moore parameters into standard R-matrix ones (not complex resonance energies) by means of the \textit{Generalized Reich-Moore} formalism, as established in \cite{Generalized_Reich_Moore_2017}. Yet this comes at the cost of introducing many more parameters, thereby considerably increasing memory requirements. This is because Generalized Reich-Moore converts the eliminated channels R-matrix ($N_c \times N_c$ with $c \not \in \gamma_{\mathrm{elim.}}$) into a square R-matrix of the size of the levels ($N_\lambda \times N_\lambda$), and we often have $N_\lambda \gg N_c$, specially for large nuclides (c.f. \cite{Generalized_Reich_Moore_2017}).

Also, note that some older evaluations are made in the \textit{Multi-Level Breit-Wigner} approximation, which simply consists of assuming the level matrix (\ref{eq:inv_A expression}) is diagonal. 
This can be expressed using the Hadamard product `` $\circ$ '' with the identity matrix as:
\begin{equation}
\begin{IEEEeqnarraybox}[][c]{rcl}
\boldsymbol{A^{-1}}_{\mathrm{MLBW}} & \ \triangleq \ & \boldsymbol{A^{-1}} \circ \Id{}
\IEEEstrut\end{IEEEeqnarraybox}
\label{eq:inv_A expression MLBW}
\end{equation}

Apart from these modified expressions of the level matrix, neither the Reich-Moore nor the Multi-Level Breit-Wigner approximations have any further incidence on how to convert R-matrix cross sections to Windowed Multipole Representation: it suffices to take the corresponding level matrix and proceed as follows.

\subsubsection{\label{subsubsec:Parametrizing R-matrix cross sections} Parametrizing R-matrix cross sections}

By substituting the R-matrix parametrization (\ref{eq:U expression}) of the scattering matrix $\boldsymbol{U}$ into the transmission matrix $\boldsymbol{T}$ definition (\ref{eq: Transmission matrix}), and noticing that wavefunction relations (\ref{eq:def H_pm I and O}) entail $\frac{\boldsymbol{H_+} - \boldsymbol{H_-}}{2\mathrm{i}} = \boldsymbol{F}$, one finds the transmission matrix can be decomposed into the rotated (by a factor of imaginary $\mathrm{i}$) difference between a diagonal \textit{potential matrix} $\boldsymbol{D}$ and a full \textit{resonance matrix} $\boldsymbol{Z}$:
\begin{equation}
\begin{IEEEeqnarraybox}[][c]{rcl}
\boldsymbol{T} & \ = \ & \mathrm{i}\left(\boldsymbol{D} - \boldsymbol{Z} \right) \\
\boldsymbol{Z} & \ \triangleq \ &  \boldsymbol{H_+}^{-1} \boldsymbol{\rho}^{1/2} \boldsymbol{R}_{L} \boldsymbol{\rho}^{1/2} \boldsymbol{H_+}^{-1} \\
\boldsymbol{D} & \ \triangleq \ &  \boldsymbol{H_+}^{-1}\boldsymbol{F} = \frac{\Id{} - \boldsymbol{Y} }{2\mathrm{i}} \\
\boldsymbol{Y} & \ \triangleq \ &  \boldsymbol{H_+}^{-1}\boldsymbol{H_-}
\IEEEstrut\end{IEEEeqnarraybox}
\label{eq:Transmission matrix decomposition Z and D}
\end{equation}
From cross section expression (\ref{eq:partial sigma_cc'}), the transmission probabilities from channel $c$ to channel $c'$ are then the square-modulus $\left| T_{cc'}\right|^2$. Decomposition (\ref{eq:Transmission matrix decomposition Z and D}) expresses this as: 
\begin{equation}
    \left| T_{cc'}\right|^2 = \left| Z_{cc'}\right|^2 + \left| D_{c}\right|^2 \delta_{cc'} - 2 \Re\left[ Z_{cc'} D_{c}
^*\right] \delta_{cc'} 
\end{equation}
where $\left[ \; \cdot\; \right]^*$ designates the complex conjugate. 
For the total cross section (\ref{eq:total σ_c}), it is the real part of the transmission matrix that appears: $\Re\left[T_{cc}\right] = \Re\left[\mathrm{i}D_{c}\right] - \Re\left[\mathrm{i}Z_{cc}\right]$.
Note that $\boldsymbol{D}$ definition (\ref{eq:Transmission matrix decomposition Z and D}) entails $2 \boldsymbol{D}^* =  \mathrm{i}\left( \Id{} - \boldsymbol{Y}^{*} \right)$ and $ \left|\boldsymbol{D}\right|^2  \ = \  \Re\left[\mathrm{i}\boldsymbol{D} \right]$, since definition (\ref{eq:def H_+ polar decomposition}) yields
\begin{equation}
\begin{IEEEeqnarraybox}[][c]{rcl}
\Re\left[ \mathrm{i} D_c \right] & \ = \ & \frac{\left|F_c\right|^2}{\left|G_c\right|^2 + \left|F_c\right|^2} = \left|D_c\right|^2 = \sin^2\left(\phi_c\right)
\IEEEstrut\end{IEEEeqnarraybox}
\label{eq: |D|^2 to Re[iD]}
\end{equation}
We can thus decompose the cross sections into the following components, all expressed as the real part of some matrix elements calculable from R-matrix theory:
\begin{itemize}
    \item Potential cross section (of channel $c$):
\begin{equation}
\begin{IEEEeqnarraybox}[][l]{rcl}
\sigma_c^{\text{pot}}(E) & \ \triangleq \ & 4\pi g_{J^\pi_c} \left|\frac{D_{c}}{k_c} \right|^2  =  4\pi g_{J^\pi_c} \frac{\Re\left[\mathrm{i}D_{c}\right]}{\left|k_c\right|^2}  
\IEEEstrut\end{IEEEeqnarraybox}
\label{eq: potential cross section}
\end{equation}
    \item Total cross section (of channel $c$):
\begin{equation}
\begin{IEEEeqnarraybox}[][l]{rcl}
\sigma_c(E) & \ \triangleq \ & \sigma_c^{\text{pot}}(E) + 4\pi g_{J^\pi_c} \frac{\Re\left[-\mathrm{i}Z_{cc}\right]}{\left|k_c\right|^2}
\IEEEstrut\end{IEEEeqnarraybox}
\label{eq: total cross section}
\end{equation}
    \item Self-scattering cross section (of channel $c$):
\begin{equation}
\begin{IEEEeqnarraybox}[][l]{rcl}
\sigma_c^{\text{scat}}(E) & \ \triangleq \ &  4\pi g_{J^\pi_c} \frac{\Re\left[-2 Z_{cc}D_c^*\right]}{\left|k_c\right|^2}
\IEEEstrut\end{IEEEeqnarraybox}
\label{eq: scatering cross section}
\end{equation}
    \item Interference cross section (of channel $c$):
\begin{equation}
\begin{IEEEeqnarraybox}[][l]{rcl}
\sigma_c^{\text{int}}(E) & \ \triangleq \ &  4\pi g_{J^\pi_c} \frac{\Re\left[-\mathrm{i}Z_{cc}Y_c^*\right]}{\left|k_c\right|^2}
\IEEEstrut\end{IEEEeqnarraybox}
\label{eq: interference cross section}
\end{equation}
    \item Reaction cross section (from channel $c$ to $c'$):
\begin{equation}
\begin{IEEEeqnarraybox}[][l]{rcl}
\sigma_{cc'}^{\text{react}}(E) & \ \triangleq \ & 4\pi g_{J^\pi_c} \left|\frac{Z_{cc'}}{k_c} \right|^2 
\IEEEstrut\end{IEEEeqnarraybox}
\label{eq: reaction cross section}
\end{equation}
    \item Partial (angle-integrated) cross section (from channel $c$ to $c'$):
\begin{equation}
\begin{IEEEeqnarraybox}[][l]{rcl}
\sigma_{cc'}(E) & \ \triangleq \ & \Big( \sigma_c^{\text{pot}}(E) + \sigma_c^{\text{scat}}(E) \Big)\delta_{cc'}  + \sigma_{cc'}^{\text{react}}(E)  \\
& \ = \ & \Big( \sigma_c^{\text{tot}}(E) - \sigma_c^{\text{int}}(E) \Big)\delta_{cc'}  + \sigma_{cc'}^{\text{react}}(E)
\IEEEstrut\end{IEEEeqnarraybox}
\label{eq: partial cross section breakdown}
\end{equation}
\end{itemize}

Writing these expressions as functions of the dimensionless wavenumbers of each channel, $\rho_c \triangleq k_c a_c$, cross sections appear as proportional to the area of the channel radius disc $ \sigma_c(E) \propto 4\pi a_c^2$, and the modulation of this area is linked to both the transmission matrix amplitudes $\left| T_{cc'}(E) \right|^2$ -- which exhibit the resonance behavior -- and the $1/k_c^2$ wavenumber effect that dominates the total cross section close to the zero-energy threshold.

\subsection{\label{subsec:Kapur-Peierls operator pole expansion in Siegert-Humblet radioactive states}Kapur-Peierls operator pole expansion in Siegert-Humblet radioactive states}

The first step towards the Windowed Multipole Representation consists of performing the pole expansion of the Kapur-Peierls operator $\boldsymbol{R}_{L}$ into what are called the Siegert-Humblet radioactive states \cite{Siegert_1939,Breit_radioactive_1940, Radioactive_Mahaux_1969, Eigenchannel_Treatment_Of_R_Matrix_Theory_1997,Radioactive_states_Roumania_2012}. We here summarize this process for the usual case of non-degenerate solutions, and we refer to sections II and IV of \cite{Ducru_Scattering_Matrix_of_Complex_Wavenumbers_2019} for a detailed study. 

The \textit{radioactive states problem} consists of finding the poles $\big\{\mathcal{E}_j\big\}$ and residue widths vectors  $\left\{\boldsymbol{r_j}\right\}$ of the Kapur-Peierls operator $\boldsymbol{R}_{L}$, that is solving the following generalized eigenvalue problem \cite{Siegert_1939,Breit_radioactive_1940, Radioactive_Mahaux_1969, Eigenchannel_Treatment_Of_R_Matrix_Theory_1997,Radioactive_states_Roumania_2012}:
\begin{equation}
\left.\boldsymbol{R}_{L}^{-1}(E)\right|_{E = \mathcal{E}_j} \boldsymbol{r_j} = \boldsymbol{0}
\label{eq:R_L radioactive problem}
\end{equation}
where the residue widths vectors $\left\{\boldsymbol{r_j}\right\}$ are subject to the following normalization:
\begin{equation}
\boldsymbol{r_j}^\mathsf{T} \left( \left. { \frac{\partial \boldsymbol{R}_{L}^{-1}}{\partial E} }\right|_{E=\mathcal{E}_j} \right) \boldsymbol{r_j} = 1 
\label{eq:R_L residues normalization}
\end{equation}
which can be calculated using
\begin{equation}
\left. { \frac{\partial \boldsymbol{R}_{L}^{-1}}{\partial E} }\right|_{E=\mathcal{E}_j} = \frac{\partial \boldsymbol{R}^{-1} }{\partial E} (\mathcal{E}_j) - \frac{\partial \boldsymbol{L} }{\partial E} (\mathcal{E}_j)
\end{equation}
where the R-matrix $\boldsymbol{R}$ is invertible (for $E \neq E_\lambda$) as 
\begin{equation}
\frac{\partial \boldsymbol{R}^{-1} }{\partial E} (E) = - \boldsymbol{R}^{-1} \boldsymbol{\gamma}^\mathsf{T} \left(\boldsymbol{e} - E\Id{}\right)^{-2} \boldsymbol{\gamma} \boldsymbol{R}^{-1}
\end{equation}
These \textit{radioactive poles} $\left\{\mathcal{E}_j\right\}$ and \textit{radioactive widths} $\left\{\boldsymbol{r_{j}} = \left[ r_{{j,c_1}}, \hdots, r_{{j,c}} , \hdots , r_{{j,c_{N_c}}} \right]^\mathsf{T} \right\}$, are the Siegert-Humblet parameters.
They can equivalently be obtained by solving the level matrix $\boldsymbol{A}$ radioactive eigenproblem:
\begin{equation}
\left.\boldsymbol{A}^{-1}(E)\right|_{E=\mathcal{E}_j}\boldsymbol{a_j} = \boldsymbol{0}
\label{eq::invA det roots}
\end{equation}
where the eigenvectors $\boldsymbol{a_j}$ are subject to normalization:
\begin{equation}
{\boldsymbol{a_j}}^\mathsf{T} \left( \left. { \frac{\partial \boldsymbol{A}^{-1}}{\partial E} }\right|_{E=\mathcal{E}_j} \right) \boldsymbol{a_j} = 1 
\label{eq:A residues normalization}
\end{equation}
which is readily calculable from
\begin{equation}
\frac{\partial \boldsymbol{A}^{-1}}{\partial E} (\mathcal{E}_j) = - \Id{} - \boldsymbol{\gamma} \frac{\partial \boldsymbol{L} }{\partial E} (\mathcal{E}_j) \boldsymbol{\gamma}^\mathsf{T}
\label{eq: level matrix energy derivatives}
\end{equation}
The level-matrix residues widths vectors are then linked to the radioactive widths by the following relation:
\begin{equation}
\boldsymbol{r_j} = \boldsymbol{\gamma}^\mathsf{T}\boldsymbol{a_j}
\label{eq::radioactive widths rj from aj}
\end{equation}
The radioactive energy poles are complex and usually decomposed as:
\begin{equation}
\mathcal{E}_j \triangleq E_j - \mathrm{i}\frac{\Gamma_j}{2}
\label{eq:E_j pole def}
\end{equation}
It can be shown (c.f. discussion section IX.2.d pp.297--298 in \cite{Lane_and_Thomas_1958}, or section 9.2 eq. (9.11) in \cite{Theory_of_Nuclear_Reactions_I_resonances_Humblet_and_Rosenfeld_1961}) that fundamental physical properties (conservation of probability, causality and time reversal) ensure that the poles reside either on the positive semi-axis of purely-imaginary $k_c \in \mathrm{i}\mathbb{R}_+$ -- corresponding to bound states for real sub-threshold energies, i.e. $E_j < E_{T_c} $ and $\Gamma_j = 0$ -- or that all the other poles are on the lower-half $k_c$ plane, with $\Gamma_j > 0$, corresponding to ``resonance'' or ``radioactively decaying'' states. All poles enjoy the specular symmetry property: if $k_c \in \mathbb{C}$ is a pole of the Kapur-Peierls operator, then $-k_c^*$ is too.
Additional discussion on these radioactive poles and residues can be found in \cite{Lane_and_Thomas_1958}, sections IX.2.c-d-e p.297-298, or in \cite{Siegert_1939, Breit_radioactive_1940, Radioactive_Mahaux_1969, Eigenchannel_Treatment_Of_R_Matrix_Theory_1997, Radioactive_states_Roumania_2012}.

For our purpose of constructing the Windowed Multipole Representation for R-matrix cross sections, the key property of the radioactive states is that they allow, by virtue of the Mittag-Leffler theorem \cite{Mittag-Leffler_1884, Pacific_Journal_Mittag_Leffler_and_spectral_theory_1960}, to locally decompose the Kapur-Peierls operator into a sum of poles and residues and a holomorphic entire part $\boldsymbol{\mathrm{Hol}}_{\boldsymbol{R}_{L}}(E)$, in the neighborhood $\mathcal{W}(E)$ (vicinity) of any complex energy $E\in\mathbb{C}$ away from the branch points (threshold energies $E_{T_c}$) of mapping (\ref{eq:rho_c(E) mapping}):
\begin{equation}
\boldsymbol{R}_{L}(E) \underset{\mathcal{W}(E)}{=} \sum_{j\geq 1} \frac{\boldsymbol{r_j}\boldsymbol{r_j}^\mathsf{T}}{E - \mathcal{E}_j} + \boldsymbol{\mathrm{Hol}}_{\boldsymbol{R}_{L}}(E)
\label{eq::RL Mittag Leffler}
\end{equation}
Theorem 1 of \cite{Ducru_Scattering_Matrix_of_Complex_Wavenumbers_2019} presents the branch structure of the radioactive poles $\mathcal{E}_j$ on the Riemann surface of the energy-wavenumber mapping (\ref{eq:rho_c(E) mapping}). We also show that when solving in dimensionless wavenumber space $\rho_c$, there are $N_L$ number of solutions to the radioactive problem (\ref{eq:R_L radioactive problem}). In the case of massive neutral particles (neutrons and neutrinos) we have
\begin{equation}
N_L = \left(2 N_\lambda  + \sum_{c=1}^{N_c} \ell_c \right)\; \times 2^{(N_{E_{T_c} \neq E_{T_{c'}}} - 1 ) }
\label{eq::NL number of poles}
\end{equation}
where $N_{E_{T_c} \neq E_{T_{c'}}} $ denotes the number of channels with different thresholds.
For charged particles, there is an infinite number (countable) of radioactive poles: $N_L = \infty$. This is in essence because the R-matrix contributes to $N_\lambda$ poles in energy space, and the reduced logarithmic derivative $L_c(\rho_c)$ contributes its poles in $\rho_c$ space. 
Nonetheless, the key is that these poles are of two types: there are $2N_\lambda$ ``principal'' poles, corresponding to the observable resonances, and all the rest which are ``distant poles''. Moreover, the ``principal poles'' themselves come in pairs, of which usually only one significantly contributes to the resonance of the cross section. 

A critical property of the radioactive poles $\mathcal{E}_j$ is that these are exactly all the poles of the scattering matrix $\boldsymbol{U}(E)$ (proof in theorem 3 of \cite{Ducru_Scattering_Matrix_of_Complex_Wavenumbers_2019}). From decomposition (\ref{eq:Transmission matrix decomposition Z and D}), this entails that the transmission matrix readily admits the following Mittag-Leffler expansion:
\begin{equation}
\boldsymbol{T}(E) \underset{\mathcal{W}(E)}{=} -\mathrm{i}\sum_{j\geq 1} \frac{\boldsymbol{\tau_j}\boldsymbol{\tau_j}^\mathsf{T}}{E - \mathcal{E}_j} + \boldsymbol{\mathrm{Hol}}_{\boldsymbol{T}}(E)
\label{eq::T Mittag Leffler}
\end{equation}
where the residue width vectors are obtained by evaluating the functions in (\ref{eq:Transmission matrix decomposition Z and D}) at the pole values:
\begin{equation}
  \boldsymbol{\tau_j} = \boldsymbol{H_+}^{-1}\left(\mathcal{E}_j\right) \boldsymbol{\rho}^{1/2}\left(\mathcal{E}_j\right) \cdot \boldsymbol{r_j}
  \label{eq: T_j as function of r_j}
\end{equation}

\subsection{\label{subsec:Resonance matrix Z expansion in square root of energy space}Transmission matrix $T$ and resonance matrix $\boldsymbol{Z}$ expansions in square root of energy $z$-space}

Though energy $E$-space expansion (\ref{eq::T Mittag Leffler}) is correct, we will nonetheless also introduce expansions in the square root of energy $z$-space:
\begin{equation}
    z \triangleq \sqrt{E}
    \label{def: z = sqrt(E)}
\end{equation}
We do this to better express the behavior of massive particles (not massless photons) near the zero-energy threshold, and in order to perform analytic Doppler broadening of massive particles. Indeed, for the zero threshold $E_{T_c} = 0$, the wavenumber of massive particles is simply proportional to the square root of energy: $k \propto z$. Hwang noticed this entails a remarkable property: for neutral particles without threshold, the Kapur-Peierls operator $R_L(z)$ is a rational function of $z$ (c.f. \cite{Hwang_1987}), and therefore the radioactive problem (\ref{eq:R_L radioactive problem}) can be completely solved using polynomial root finders (c.f. section \ref{subsec:Hwang's special case: zero-threshold neutron cross sections}).

The general Mittag-Leffler expansion (\ref{eq::RL Mittag Leffler}) of the Kapur-Peierls operator in $z$-space is 
\begin{equation}
\boldsymbol{R}_{L}(z) \underset{\mathcal{W}(z)}{=} \sum_{j\geq 1} \frac{\boldsymbol{\kappa_j}\boldsymbol{\kappa_j}^\mathsf{T}}{z - p_j} + \boldsymbol{\mathrm{Hol}}_{\boldsymbol{R}_{L}}(z)
\label{eq::RL Mittag Leffler in z-space}
\end{equation}
where square root of energy $z$-space poles are 
\begin{equation}
    p_j \triangleq \sqrt{\mathcal{E}_j}
    \label{def: p_j = sqrt(E_j)}
\end{equation}
and  the residue widths are connected to the poles as:
\begin{equation}
    \boldsymbol{\kappa_j} \triangleq \frac{\boldsymbol{r_j}}{\sqrt{2p_j}}
    \label{eq: link kappa to r residues}
\end{equation}
This is readily obtained from previous $E$-space expressions using partial fraction decomposition of simple poles:
\begin{equation*}
    \frac{\boldsymbol{r_j}\boldsymbol{r_j}^\mathsf{T}}{E - \mathcal{E}_j} = \frac{\frac{\boldsymbol{r_j}\boldsymbol{r_j}^\mathsf{T}}{2\sqrt{\mathcal{E}_j}}}{\sqrt{E} - \sqrt{\mathcal{E}_j}} +  \frac{-\frac{\boldsymbol{r_j}\boldsymbol{r_j}^\mathsf{T}}{2\sqrt{\mathcal{E}_j}}}{\sqrt{E} + \sqrt{\mathcal{E}_j}} 
\end{equation*}
The poles $p_j$ come in opposite pairs ($p_j^+ = + \sqrt{\mathcal{E}_j}$ and $p_j^- = - \sqrt{\mathcal{E}_j}$), and the corresponding residue widths are thus rotated by $\pm\pi/2$ (multiplication by $\pm\mathrm{i}$): $\boldsymbol{\kappa_j}^- \triangleq \boldsymbol{r_j}/\sqrt{2p_j^-} =  - \mathrm{i}\boldsymbol{r_j}/\sqrt{2p_j^+}$. The same $\boldsymbol{r_j}$ is shared by both poles $p_j^+$ and $p_j^-$, so that  $\boldsymbol{\kappa_j}^-{\boldsymbol{\kappa_j}^-}^\mathsf{T} = - \boldsymbol{\kappa_j}^+{\boldsymbol{\kappa_j}^+}^\mathsf{T}$.

Alternatively, Mittag Leffler expansion (\ref{eq::RL Mittag Leffler in z-space}) can also be directly obtained by solving the radioactive problem in square-root-of-energy $z$ space: 
\begin{equation}
\left.\boldsymbol{R}_{L}^{-1}(z)\right|_{z = p_j} \boldsymbol{\kappa_j} = \boldsymbol{0}
\label{eq:R_L radioactive problem in z-space}
\end{equation}
where the residue widths vectors $\left\{\boldsymbol{\kappa_j}\right\}$ are subject to the following normalization:
\begin{equation}
\boldsymbol{\kappa_j}^\mathsf{T} \left( \left. { \frac{\partial \boldsymbol{R}_{L}^{-1}}{\partial z} }\right|_{z=p_j} \right) \boldsymbol{\kappa_j} = 1 
\label{eq:R_L residues normalization z-space}
\end{equation}
which yields relationship (\ref{eq: link kappa to r residues}) (from $z = \sqrt{E}$), and can be calculated directly using
\begin{equation}
\left. { \frac{\partial \boldsymbol{R}_{L}^{-1}}{\partial z} }\right|_{z= p_j} = \frac{\partial \boldsymbol{R}^{-1} }{\partial z} (p_j) - \frac{\partial \boldsymbol{L} }{\partial z} (p_j)
\end{equation}
where $\boldsymbol{R}$ is invertible at $z$-space radioactive poles $\left\{p_j\right\}$ as
\begin{equation}
\frac{\partial \boldsymbol{R}^{-1} }{\partial z} (z) = - 2 z \boldsymbol{R}^{-1} \boldsymbol{\gamma}^\mathsf{T} \left(\boldsymbol{e} - z^2\Id{}\right)^{-2} \boldsymbol{\gamma} \boldsymbol{R}^{-1}
\end{equation}
and where the partial derivatives $\frac{\partial \boldsymbol{L} }{\partial z} (p_j)$ can be derived from the Mittag-Leffler expansion of $\boldsymbol{L}(\rho)$ established in theorem 1 of \cite{Ducru_shadow_Brune_Poles_2019}:
\begin{equation}
\frac{\partial \boldsymbol{L} }{\partial z} = \left[ \mathrm{i} +  \sum_{n \geq 1}  \frac{1}{\rho-\omega_n} + \frac{\rho}{\left(\rho-\omega_n\right)^2} \right] \frac{\partial \rho}{\partial z} 
\label{eq: partial L partial z}
\end{equation}
where $\left\{\omega_n \right\}$ are the roots of the $O_c(\rho)$ outgoing wavefunctions, also roots of ${H_{+}}_c(\rho)$ from (\ref{eq:def H_pm I and O}): $\forall n , \; {H_{+}}_c(\omega_n) = 0$.
For neutral particles, there are a finite number of such roots, reported in table \ref{tab::L_values_neutral}.

Equivalently, we can solve for the level matrix $\boldsymbol{A}$ radioactive problem in $z$-space:
\begin{equation}
\left.\boldsymbol{A}^{-1}(z)\right|_{z=p_j}\boldsymbol{\alpha_j} = \boldsymbol{0}
\label{eq::invA det roots z-space}  
\end{equation}
with eigenvectors $\boldsymbol{\alpha_j} \triangleq \frac{\boldsymbol{a_j}}{\sqrt{2p_j}}$ subject to normalization:
\begin{equation}
{\boldsymbol{\alpha_j}}^\mathsf{T} \left( \left. { \frac{\partial \boldsymbol{A}^{-1}}{\partial z} }\right|_{z=p_j} \right) \boldsymbol{\alpha_j} = 1 
\label{eq:A residues normalization z-space}
\end{equation}
which is readily calculable from
\begin{equation}
\frac{\partial \boldsymbol{A}^{-1}}{\partial z} (p_j) = - 2z \Id{} - \boldsymbol{\gamma} \frac{\partial \boldsymbol{L} }{\partial z} (p_j) \boldsymbol{\gamma}^\mathsf{T}
\label{eq: level matrix energy derivatives z-sapce}
\end{equation}
The level-matrix residues widths vectors are then linked to the radioactive widths by the following relation:
\begin{equation}
\boldsymbol{\kappa_j} = \boldsymbol{\gamma}^\mathsf{T}\boldsymbol{\alpha_j}
\label{eq::radioactive widths rj from aj z-space}
\end{equation}

Regardless of the method deployed to obtain (\ref{eq::RL Mittag Leffler in z-space}), the latter entails the following Mittag Leffler expansion for resonance matrix $\boldsymbol{Z}$ 
\begin{equation}
\boldsymbol{Z}(z) \underset{\mathcal{W}(z)}{=} \sum_{j\geq 1} \frac{\boldsymbol{\zeta_j}\boldsymbol{\zeta_j}^\mathsf{T}}{z - p_j} + \boldsymbol{\mathrm{Hol}}_{\boldsymbol{Z}}(z)
\label{eq::Z Mittag Leffler in z-space}
\end{equation}
Where the residue widths are connected to the poles as:
\begin{equation}
\begin{IEEEeqnarraybox}[][c]{rCl}
    \boldsymbol{\zeta_j} & = & \frac{\boldsymbol{\tau_j}}{\sqrt{2p_j}} \\
    & = & \boldsymbol{H_+}^{-1}\left(p_j\right) \boldsymbol{\rho}^{1/2}\left(p_j\right) \cdot \boldsymbol{\kappa_j} \\
    & =  & \boldsymbol{H_+}^{-1}\left(p_j\right) \boldsymbol{\rho}^{1/2}\left(p_j\right) \cdot \frac{\boldsymbol{r_j}}{\sqrt{2p_j}}
    \label{eq: link zeta to r residues}
\IEEEstrut\end{IEEEeqnarraybox}
\end{equation}
This links back to the transmission matrix Mittag Leffler expansion (\ref{eq::T Mittag Leffler}), which in $z$-space entails: 
\begin{equation}
\boldsymbol{T}(z) \underset{\mathcal{W}(z)}{=} -\mathrm{i}\sum_{j\geq 1} \frac{\boldsymbol{\zeta_j}\boldsymbol{\zeta_j}^\mathsf{T}}{z - p_j} + \boldsymbol{\mathrm{Hol}}_{\boldsymbol{T}}(z)
\label{eq::T Mittag Leffler z-space}
\end{equation}
This transmission matrix Mittag Leffler expansion (\ref{eq::T Mittag Leffler z-space}) corresponds to the Humblet-Rosenfeld scattering matrix expansion in equation (1.54) section I.1.4, p.538, of \cite{Theory_of_Nuclear_Reactions_I_resonances_Humblet_and_Rosenfeld_1961}, where they denote the holomorphic (entire) part $ \boldsymbol{\mathrm{Hol}}_{\boldsymbol{T}}(z)$ as $Q_\ell(k)$. 
As they discuss, the natural variable for this non-resonant part is indeed the wavenumber $k_c$. 
Equations (\ref{eq::T Mittag Leffler z-space}) and (\ref{eq: link zeta to r residues}) thus explicitly link the residues of the Humblet-Rosenfeld expansions to the Wigner-Eisenbud R-matrix parameters. 
 Unfortunately, there exists no simple general method to express the expansion coefficients of this entire part directly from R-matrix parameters.

\subsection{\label{subsec:Hwang's conjugate continuation}Hwang's conjugate continuation}

The Windowed Multipole Representation is essentially an analytic continuation of R-matrix cross sections into the complex plane, in $z$-space.
R-matrix cross sections (\ref{eq:partial sigma_cc'}) and (\ref{eq:total σ_c}) are the square moduli and real parts of the transmission matrix $T_{cc'}(E)$ and the wavenumber $k_c(E)$, yielding real cross sections. 
Yet one can analytically continue these cross sections by performing the \textit{conjugate continuation} of all R-matrix operators, which consists of taking the value of the modulus and real parts on the real axis $z \in \mathbb{R}$, and continuing them to the complex plane. This was the key insight introduced by Hwang in \cite{Hwang_1987}.

For any meromorphic function $f(z)$, we define its \textit{continued conjugate} $f^*(z)$ as:
\begin{equation}
f^*(z) \triangleq f(z^*)^* 
\label{eq::continued conjugate}
\end{equation}
As such, the continued conjugate real part is defined as
\begin{equation}
\Re_{\mathrm{conj}}\left[f(z)\right] \triangleq \frac{f(z) + f^*(z)}{2} 
\label{eq::continued conjugate real part}
\end{equation}
and the continued conjugate square modulus as
\begin{equation}
\left|f\right|_{\mathrm{conj}}^2(z) \triangleq f(z)\times f^*(z) 
\label{eq::continued square modulus definition}
\end{equation}
These are meromorphic complex functions: $\Re_{\mathrm{conj}}\left[f(z)\right] \in \mathbb{C}$ and $\left|f\right|_{\mathrm{conj}}^2(z) \in \mathbb{C} $. They are the analytic continuation to complex $z\in\mathbb{C}$ of the real part and the square modulus, which they match on the real axis $z\in\mathbb{R}$.
Consider a meromorphic function $f(z)$ with simple poles and Mittag-Leffler expansion 
\begin{equation}
f(z) \underset{\mathcal{W}(z)}{=} \sum_{j\geq 1} \frac{r_j}{z - p_j} + \sum_{n\geq 0} a_n z^n
\label{eq::f Mittag Leffler in z-space}
\end{equation}
Its continued conjugate square modulus is thus
\begin{equation}
\begin{IEEEeqnarraybox}[][c]{rCl}
\left|f\right|_{\mathrm{conj}}^2(z) &  \underset{\mathcal{W}(z)}{=} & \left(\sum_{j\geq 1} \frac{r_j}{z - p_j} + \sum_{n\geq 0} a_n z^n\right) \\ 
& & \times \left(\sum_{j\geq 1} \frac{r_j^*}{z - p_j^*} + \sum_{n\geq 0} a_n^* z^n\right)
\label{eq::f Mittag Leffler in z-space squared modulus}
\IEEEstrut\end{IEEEeqnarraybox}
\end{equation}
The unicity of poles and residues entails all the poles of $\left|f\right|_{\mathrm{conj}}^2(z)$ are the poles $p_j$ of $f(z)$ and their complex conjugate $p_j
^*$. By evaluating the corresponding residues, one finds the following Mittag-Leffler expansion for the conjugate continuation:
\begin{equation}
\left|f\right|_{\mathrm{conj}}^2(z) \underset{\mathcal{W}(z)}{=} \sum_{j\geq 1} \frac{r_j \cdot f(p_j^*)^*}{z - p_j} + \frac{r_j^* \cdot f(p_j^*)}{z - p_j^*}  + \sum_{n\geq 0} c_n z^n
\label{eq::|f|2 conjugate continuation}
\end{equation}
where
\begin{equation}
   c_n \triangleq \Re\left[ \sum_{k=0}^n a_{n-k}a_k^* + 2\sum_{j\geq 1} \frac{ r_j \cdot \left[ f(p_j^*)^* - a_n^* \right] }{p_j}  \right]
   \label{eq: conjugate continuation entire coefficients}
\end{equation}
which can be obtained by developing (\ref{eq::f Mittag Leffler in z-space squared modulus}) and applying Cauchy's residues theorem to (\ref{eq::|f|2 conjugate continuation}) with contour integrations of $\frac{\left| f\right|^2(z)}{z^{n+1}}$ . 
In (\ref{eq::|f|2 conjugate continuation}), one recognizes the remarkable property that the continued square modulus can be expressed as a continued conjugate real part
\begin{equation}
\left|f\right|_{\mathrm{conj}}^2(z) \underset{\mathcal{W}(z)}{=} \Re_{\mathrm{conj}}\left[ \sum_{j\geq 1} \frac{\widetilde{r}_j}{z - p_j} + \sum_{n\geq 0} c_n z^n \right]
\label{eq::|f|2 conjugate continuation as real part}
\end{equation}
with
\begin{equation}
    \widetilde{r}_j \triangleq 2 \cdot r_j \cdot f(p_j^*)^*
\end{equation}

Therefore, by using Hwang's conjugate continuation, one can express all R-matrix cross sections as the continued conjugate real part of conjugate continued R-matrix operators: this is the key to converting R-matrix cross sections to Windowed Multipole Representation.

\subsection{\label{subsec:Windowed Multipole Representation}Windowed Multipole Representation}

The Windowed Multipole Representation is the analytic continuation of the pole expansion of R-matrix cross sections.
For open channels (energies above thresholds $E > E_{T_c}$), the energy dependence of R-matrix cross sections -- described by equations (\ref{eq:partial sigma_cc'}) and (\ref{eq:total σ_c}) -- is expanded along the real energy axis $E\in \mathds{R}$, and the corresponding expressions are analytically continued to all complex energies $E\in \mathds{C}$. The Windowed Multipole Representation can thus be seen as a generalization of R-matrix cross sections to the complex plane, for open channels, as shown in figure \ref{fig: Windowed Multipole Representation}. As such, windowed multipole cross sections only match R-matrix cross sections for real energies above the channel threshold: $E > E_{T_c}$.

\subsubsection{\label{subsubsec:Windowed Pole Representation: Transmission matrix approach} Windowed Pole Representation: Transmission matrix approach}

The most straightforward approach is to consider the transmission matrix $\boldsymbol{T}(E)$ Mittag-Leffler expansion (\ref{eq::T Mittag Leffler}), and apply Hwang's conjugate continuation in energy space, which yields:
\begin{equation}
\left|\boldsymbol{T}\right|^2_{\mathrm{conj}}(E) \underset{\mathcal{W}(E)}{=} \Re_{\mathrm{conj}}\left[ \sum_{j\geq 1} \frac{-\mathrm{i} \boldsymbol{\widetilde{\tau}_j} }{E - \mathcal{E}_j} + \boldsymbol{\mathrm{Hol}}_{\left|\boldsymbol{T}\right|^2}(E) \right]
\label{eq:: |T|^2 Mittag Leffler}
\end{equation}
where we use the Hadamard product `` $\circ$ '' to express the residues as:
\begin{equation}
   \boldsymbol{\widetilde{\tau}_j} \triangleq 2\cdot \boldsymbol{\tau_j}\boldsymbol{\tau_j}^\mathsf{T} \circ \boldsymbol{T}\left(\mathcal{E}_j^*\right)^*
   \label{eq: tau tilde}
\end{equation}
Thus, for real energies with open channels, the partial and total cross sections can be expressed respectively as
\begin{equation}
\begin{IEEEeqnarraybox}[][c]{rcl}
\sigma_{cc'}(E) & \ \underset{\mathcal{W}(E)}{=} \ &  \frac{4\pi g_{J^\pi_c}}{\left|k_c(E)\right|^2}   \Re_{\mathrm{conj}}\mkern-6mu\left[ \sum_{j\geq 1} \frac{-\mathrm{i} \left[\boldsymbol{\widetilde{\tau}_j}\right]_{cc'} }{E - \mathcal{E}_j} + \boldsymbol{\mathrm{Hol}}_{\left|\boldsymbol{T}\right|^2}(E) \mkern-2mu \right]
\IEEEstrut\end{IEEEeqnarraybox}
\label{eq:partial sigma_cc' Energy expansion}
\end{equation}
and
\begin{equation}
    \begin{IEEEeqnarraybox}[][c]{rcl}
      \sigma_{c}(E)  & \ \underset{\mathcal{W}(E)}{=} \ & \frac{4 \pi g_{J^\pi_c} }{ \left|k_c(E)\right|^2}  \Re_{\mathrm{conj}}\mkern-6mu\left[ \sum_{j\geq 1} \frac{-\mathrm{i}\left[\boldsymbol{\tau_j}\boldsymbol{\tau_j}^\mathsf{T}\right]_{cc}}{E - \mathcal{E}_j} + \boldsymbol{\mathrm{Hol}}_{\boldsymbol{T}}(E) \mkern-2mu \right]
    \IEEEstrut\end{IEEEeqnarraybox}
     \label{eq:total σ_c Energy expansion}
\end{equation}
Expressions (\ref{eq:partial sigma_cc' Energy expansion}) and (\ref{eq:total σ_c Energy expansion}) are general, they apply to any cross section described by R-matrix theory (be it massless photons or massive charged or neutral particle channels). They are local expressions, only valid on the neighborhood $\mathcal{W}(E)$ of any given energy $E$ away from the thresholds (branch points $E_{T_c}$ of (\ref{eq:rho_c(E) mapping}) mapping), though this neighborhood can be as large as the distance between thresholds (for more discussion on this point, we refer to the penultimate paragraph of section II.D in \cite{Ducru_Scattering_Matrix_of_Complex_Wavenumbers_2019}). They reflect the fact that two physical phenomena dictate the behavior of R-matrix cross sections: resonances and thresholds.

Away from threshold energies $E_{T_c}$ -- the branch-points of wavenumber-energy mapping (\ref{eq:rho_c(E) mapping}) -- each resonance can be accurately represented by a Single-Level Breit-Wigner (SLBW) profile in energy space $E$, that is the combination of symmetric and anti-symmetric Lorenztian functions. These are made evident by recalling definition (\ref{eq:E_j pole def}), which splits the radioactive poles into real and imaginary components $\mathcal{E}_j \triangleq E_j - \mathrm{i}\frac{\Gamma_j}{2} $, and noticing that each resonance of the Windowed Multipole Representation (\ref{eq:partial sigma_cc' Energy expansion}) can be expressed as: 
\begin{equation}
\begin{IEEEeqnarraybox}[][c]{C}
     \Re\left[\frac{a + \mathrm{i}b}{ E - \mathcal{E}_j}\right] = a \frac{\left( E - E_j \right)}{\left( E - E_j \right)^2 + \frac{\Gamma_j^2}{4}} + b \frac{\frac{\Gamma_j}{2}}{\left( E - E_j \right)^2 + \frac{\Gamma_j^2}{4}}
\label{eq:: SLBW Lorenzian profiles for resonances}
\IEEEstrut\end{IEEEeqnarraybox}
\end{equation}

The sum of resonances is complemented by the holomorphic background term $\boldsymbol{\mathrm{Hol}}_{\boldsymbol{T}}$, and modulated by the $\frac{1}{\left|k_c(E)\right|^2}$ term. This illustrates the fact that the wavenumber $k_c$ dominates the behavior of R-matrix cross sections near thresholds $E_{T_c}$, where $k_c \to 0$. Moreover, the holomorphic (entire) part is itself more naturally described as a function of the wavenumber $k_c$ rather then the energy: $\boldsymbol{\mathrm{Hol}}_{\boldsymbol{T}}(\boldsymbol{k})$, as explained by Humblet and Rosenfeld through equations (1.64) and (1.67) section I.1.4, p.539-540, of \cite{Theory_of_Nuclear_Reactions_I_resonances_Humblet_and_Rosenfeld_1961}.
The threshold behavior of R-matrix cross sections was detailed by Wigner in \cite{Wigner_Thresholds_1948}. Depending on the angular momenta $\ell$ and $\ell'$ and the charges of the particles, reaction and scattering cross sections either: a) have threshold behaviors in powers of $k_c^{N(\ell, \ell')} $, where $N(\ell, \ell') \in \mathbb{Z}$ is some integer depending on the different angular momenta, but never smaller than negative two  ($N(\ell, \ell') \geq -2$); or b) in some cases of Coulomb repulsion, modulate this with an exponential decay $\propto \exp(- a / k_c) $ with some real positive $a > 0$ (see section III of \cite{Wigner_Thresholds_1948} for more details). 
This means we can represent in all generality the threshold behavior as a Laurent expansion around the threshold: $\sigma_{cc'} \underset{k_c \to 0}{\sim} \sum_{n\geq - 2} a_n k_c^n$. 

By thus expressing the threshold behavior explicitly, we can constitute the \textit{Windowed Multipole Representation} of R-matrix cross sections:
\begin{equation}
\begin{IEEEeqnarraybox}[][c]{rcl}
\sigma_{cc'}(E) & \ \underset{\mathcal{W}(E)}{\triangleq} \ & \sum_{n\geq -2} \widetilde{a}_n^{cc'} k_c^n(E) +  \frac{1}{E}  \Re_{\mathrm{conj}}\left[ \sum_{j\geq 1} \frac{\widetilde{R}_j^{cc'} }{E - \mathcal{E}_j}  \right]
\IEEEstrut\end{IEEEeqnarraybox}
\label{eq:partial sigma_cc' WMP Energy}
\end{equation}
and
\begin{equation}
    \begin{IEEEeqnarraybox}[][c]{rcl}
      \sigma_{c}(E)  & \ \underset{\mathcal{W}(E)}{\triangleq} \ & \sum_{n\geq -2} a_n^{c} k_c^n(E) +  \frac{1}{E}  \Re_{\mathrm{conj}}\left[ \sum_{j\geq 1} \frac{R_j^{c} }{E - \mathcal{E}_j}  \right]
    \IEEEstrut\end{IEEEeqnarraybox}
     \label{eq:total σ_c WMP Energy}
\end{equation}
where the residues are obtained by evaluating at the pole values as: 
\begin{equation}
\begin{IEEEeqnarraybox}[][c]{rcl}
\widetilde{R}_j^{cc'} & \ \triangleq \ &  -\mathrm{i}  \frac{4\pi g_{J^\pi_c} \mathcal{E}_j}{\left|k_c(\mathcal{E}_j)\right|^2}\left[\boldsymbol{\widetilde{\tau}_j}\right]_{cc'}  
\IEEEstrut\end{IEEEeqnarraybox}
\label{eq:partial sigma_cc' WMP Residues Energy}
\end{equation}
and
\begin{equation}
    \begin{IEEEeqnarraybox}[][c]{rcl}
      R_j^{c} & \ \triangleq \ & -\mathrm{i}  \frac{4 \pi g_{J^\pi_c}\mathcal{E}_j}{ \left|k_c(\mathcal{E}_j)\right|^2} \left[\boldsymbol{\tau_j}\boldsymbol{\tau_j}^\mathsf{T}\right]_{cc}
    \IEEEstrut\end{IEEEeqnarraybox}
     \label{eq:total σ_c WMP Residues Energy }
\end{equation}

Equivalently, the Windowed Multipoles Representation can be carried out in square root of energy $z$-space, in what constitutes theorem \ref{theo::WMP Representation}.

\begin{theorem}\label{theo::WMP Representation}\textsc{Windowed Multipole Formalism} \\
Let $\mathcal{E}_j$ be the energy-space poles of the Kapur-Peierls operator $\boldsymbol{R}_L$, defined in (\ref{eq: def Kapur-Peierls operator}), and let $z \triangleq \sqrt{E}$ be the square root of energy.
The energy dependence of R-matrix cross sections can be exactly expressed as a Laurent expansion in wavenumber $k_c$, of order no less than $k_c^{-2}$, plus the conjugate continuation real part (\ref{eq::continued conjugate real part}) of a sum of energy-space resonances with poles $\mathcal{E}_j$, which in $z$-space yield pairs of opposite poles $p_j = \pm \sqrt{\mathcal{E}_j}$, so that partial cross sections (\ref{eq:partial sigma_cc'}) take the Windowed Multipole Representation:
\begin{equation}
\begin{IEEEeqnarraybox}[][c]{rcl}
\sigma_{cc'}(z) & \ \underset{\mathcal{W}(z)}{\triangleq} \ & \sum_{n\geq -2} \widetilde{a}_n^{cc'} k_c^n(z) +  \frac{1}{z^2}  \Re_{\mathrm{conj}}\left[ \sum_{j\geq 1} \frac{\widetilde{r}_j^{cc'} }{z - p_j}  \right]
\IEEEstrut\end{IEEEeqnarraybox}
\label{eq:partial sigma_cc' WMP z-space}
\end{equation}
and the total cross section (\ref{eq:total σ_c}) takes the form:
\begin{equation}
    \begin{IEEEeqnarraybox}[][c]{rcl}
      \sigma_{c}(z)  & \ \underset{\mathcal{W}(z)}{\triangleq} \ & \sum_{n\geq -2} a_n^{c} k_c^n(z) +  \frac{1}{z^2}  \Re_{\mathrm{conj}}\left[ \sum_{j\geq 1} \frac{r_j^{c} }{z - p_j}  \right]
    \IEEEstrut\end{IEEEeqnarraybox}
     \label{eq:total σ_c WMP z-space}
\end{equation}
where the partial residues can be constructed from R-matrix parameters as:
\begin{equation}
\begin{IEEEeqnarraybox}[][c]{rcl}
\widetilde{r}_j^{cc'} & \ \triangleq \ &  -\mathrm{i}  \frac{4\pi g_{J^\pi_c} p_j^2}{\left|k_c(p_j)\right|^2}\left[2\cdot \boldsymbol{\zeta_j}\boldsymbol{\zeta_j}^\mathsf{T} \circ \boldsymbol{T}\left(p_j^*\right)^*\right]_{cc'}  
\IEEEstrut\end{IEEEeqnarraybox}
\label{eq:partial sigma_cc' WMP Residues z-space}
\end{equation}
and the total residues as
\begin{equation}
    \begin{IEEEeqnarraybox}[][c]{rcl}
      r_j^{c} & \ \triangleq \ & -\mathrm{i}  \frac{4 \pi g_{J^\pi_c} p_j^2}{ \left|k_c(p_j)\right|^2} \left[\boldsymbol{\zeta_j}\boldsymbol{\zeta_j}^\mathsf{T}\right]_{cc}
    \IEEEstrut\end{IEEEeqnarraybox}
     \label{eq:total σ_c WMP Residues z-space}
\end{equation}
where the $\boldsymbol{\zeta_j}$ residue widths vectors are linked to the Kapur-Peierls operator $\boldsymbol{R}_L$ poles and residues through relations (\ref{eq: link zeta to r residues}).

Alternatively, the residues can be numerically obtained through Cauchy's residues theorem contour integrals
\begin{equation}
\begin{IEEEeqnarraybox}[][c]{rcl}
\widetilde{r}_j^{cc'} & \ = \ &  \frac{1}{\mathrm{i}\pi}\oint_{\mathfrak{C}_{p_j}} z^2 \sigma_{cc'}(z) \mathrm{d}z
\IEEEstrut\end{IEEEeqnarraybox}
\label{eq: WMP partial residues contour integrals}
\end{equation}
where $\mathfrak{C}_{p_j}$ designates a positively oriented simple closed contour containing only pole $p_j$. 
For instance, if $\mathfrak{C}_{p_j}$ is a circle of small radius $\epsilon > 0 $ around pole $p_j$, this yields
\begin{equation}
\begin{IEEEeqnarraybox}[][c]{rcl}
r_j^{c} & \ = \ &  \frac{\epsilon}{\pi}\int_{\theta = 0}^{2\pi} \left( p_j + \epsilon \mathrm{e}^{\mathrm{i}\theta}\right)^2 \sigma_{c}\left( p_j + \epsilon \mathrm{e}^{\mathrm{i}\theta}\right) \mathrm{e}^{\mathrm{i}\theta} \mathrm{d}\theta
\IEEEstrut\end{IEEEeqnarraybox}
\label{eq: WMP total residues contour integrals}
\end{equation}
In order to perform these contour integrals, R-matrix cross sections (\ref{eq:partial sigma_cc'}) and (\ref{eq:total σ_c}) must have been meromorphically continued to complex energies by means of conjugate continuations (\ref{eq::continued square modulus definition}) and (\ref{eq::continued conjugate real part}) respectively. 
\end{theorem}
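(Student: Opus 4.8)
The plan is to derive the representation by pushing the already-established $z$-space transmission-matrix pole expansion through Hwang's conjugate continuation and then isolating the threshold kinematics. First I would start from the Mittag--Leffler expansion (\ref{eq::T Mittag Leffler z-space}) of $\boldsymbol{T}(z)$, which holds \emph{exactly} on any window $\mathcal{W}(z)$ away from thresholds because the radioactive poles $\mathcal{E}_j$ --- equivalently the opposite pairs $p_j=\pm\sqrt{\mathcal{E}_j}$ --- are precisely \emph{all} the poles of $\boldsymbol{U}$, hence of $\boldsymbol{T}$ (Theorem 3 of \cite{Ducru_Scattering_Matrix_of_Complex_Wavenumbers_2019}). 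Reading $f(z)=T_{cc'}(z)$ as a meromorphic function with simple poles at the $p_j$ and residues $-\mathrm{i}\,[\boldsymbol{\zeta_j}\boldsymbol{\zeta_j}^\mathsf{T}]_{cc'}$, I would apply the conjugate-continuation identities (\ref{eq::|f|2 conjugate continuation as real part})--(\ref{eq: conjugate continuation entire coefficients}) to write $\left|T_{cc'}\right|_{\mathrm{conj}}^2(z)$ as the conjugate real part of a pole expansion over the $p_j$ with residues $\widetilde{r}_j = 2\,r_j\,f(p_j^*)^* = -\mathrm{i}\,[\,2\,\boldsymbol{\zeta_j}\boldsymbol{\zeta_j}^\mathsf{T}\circ\boldsymbol{T}(p_j^*)^*\,]_{cc'}$, plus an entire part. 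For the total cross section I would instead conjugate-continue $\Re[T_{cc}]$ directly via (\ref{eq::continued conjugate real part}); since the diagonal potential piece $\boldsymbol{D}$ is non-resonant, only the $\boldsymbol{Z}$-contribution survives as residues $-\mathrm{i}\,[\boldsymbol{\zeta_j}\boldsymbol{\zeta_j}^\mathsf{T}]_{cc}$ at each $p_j$.

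Next I would reinstate the kinematic prefactor $4\pi g_{J^\pi_c}/|k_c|^2$ of (\ref{eq:partial sigma_cc'}) and (\ref{eq: total cross section}), meromorphically continued through $\left|k_c\right|_{\mathrm{conj}}^2(z)=k_c(z)\,k_c^*(z)$. For a zero-threshold massive channel $k_c^2\propto E=z^2$, so this prefactor continues to a constant multiple of $1/z^2$ --- precisely the $1/z^2$ of (\ref{eq:partial sigma_cc' WMP z-space})--(\ref{eq:total σ_c WMP z-space}) --- and dividing the conjugate-continued residues by that constant reproduces the $p_j^2/|k_c(p_j)|^2$ normalisation of (\ref{eq:partial sigma_cc' WMP Residues z-space})--(\ref{eq:total σ_c WMP Residues z-space}). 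The remaining non-resonant term --- the entire part $\boldsymbol{\mathrm{Hol}}_{\boldsymbol{T}}$ together with the $\boldsymbol{D}$-diagonal piece, still modulated by $1/|k_c|^2$ --- I would expand about the threshold: by Wigner's threshold analysis \cite{Wigner_Thresholds_1948} each channel-to-channel cross section behaves near $k_c\to 0$ as a power $k_c^{N(\ell,\ell')}$ with $N\geq-2$ (up to, for Coulomb repulsion, an exponential factor analytic in $1/k_c$), and since $\boldsymbol{\mathrm{Hol}}_{\boldsymbol{T}}$ is naturally a power series in $k_c$ \cite{Theory_of_Nuclear_Reactions_I_resonances_Humblet_and_Rosenfeld_1961}, this whole contribution collects into the Laurent tail $\sum_{n\geq-2}a_n^{c}k_c^n$. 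This yields (\ref{eq:total σ_c WMP z-space}), and identically (\ref{eq:partial sigma_cc' WMP z-space}); the energy-space forms (\ref{eq:partial sigma_cc' WMP Energy})--(\ref{eq:total σ_c WMP Energy}) are then equivalent via the simple-pole partial-fraction identity already used to pass from (\ref{eq::RL Mittag Leffler}) to (\ref{eq::RL Mittag Leffler in z-space}).

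For the contour-integral characterisation I would note that, once conjugate-continued, $z^2\sigma_{cc'}(z)$ is meromorphic with the Laurent part holomorphic at every $p_j$ and with $\Re_{\mathrm{conj}}\left[\sum_j \widetilde{r}_j^{cc'}/(z-p_j)\right]$ contributing a single simple pole of residue $\widetilde{r}_j^{cc'}/2$ at $p_j$ (the conjugated term carrying its pole at $p_j^*\neq p_j$, away from thresholds). Cauchy's residue theorem on a small positively oriented loop $\mathfrak{C}_{p_j}$ enclosing only $p_j$ then yields $\oint_{\mathfrak{C}_{p_j}} z^2\sigma_{cc'}(z)\,\mathrm{d}z = \mathrm{i}\pi\,\widetilde{r}_j^{cc'}$, which is (\ref{eq: WMP partial residues contour integrals}); parametrising $\mathfrak{C}_{p_j}$ as a circle of radius $\epsilon$ gives (\ref{eq: WMP total residues contour integrals}) and, incidentally, an internal consistency check against the parametric residue formulas.

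I expect the main obstacle to be the clean separation of the non-resonant part into a $k_c$-Laurent expansion of order exactly $\geq-2$. That step must reconcile the $1/|k_c|^2$ kinematic singularity, the threshold power laws of \cite{Wigner_Thresholds_1948}, and the fact that $\boldsymbol{\mathrm{Hol}}_{\boldsymbol{T}}$ is a function of $k_c$ rather than of $z$ or $E$; moreover, for a channel with $E_{T_c}\neq 0$ one must show that the discrepancy between the genuine $1/(z^2-E_{T_c})$-type kinematics and the ``$1/z^2$'' prefactor written in the theorem is itself absorbable into the Laurent tail on the window $\mathcal{W}(z)$. By contrast, the pole and residue bookkeeping and the Cauchy-integral step are routine once the conjugate-continuation machinery of Section~\ref{subsec:Hwang's conjugate continuation} is in hand.
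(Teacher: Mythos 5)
Your proposal is correct and follows essentially the same route as the paper: the Mittag--Leffler expansion (\ref{eq::T Mittag Leffler z-space}) of $\boldsymbol{T}$ over the radioactive poles, Hwang's conjugate continuation (\ref{eq::|f|2 conjugate continuation as real part}) to get the residues $-\mathrm{i}\,[2\,\boldsymbol{\zeta_j}\boldsymbol{\zeta_j}^\mathsf{T}\circ\boldsymbol{T}(p_j^*)^*]$, evaluation of the kinematic prefactor at the poles, absorption of the non-resonant part into a Wigner-type Laurent tail in $k_c$, and Cauchy's residue theorem for (\ref{eq: WMP partial residues contour integrals})--(\ref{eq: WMP total residues contour integrals}). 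The ``main obstacle'' you flag --- rigorously absorbing the discrepancy between the true $1/|k_c|^2$ kinematics and the $1/z^2$ prefactor into the $\sum_{n\geq-2}a_n k_c^n$ tail for non-zero thresholds --- is handled in the paper at exactly the same level of rigor (by appeal to \cite{Wigner_Thresholds_1948}), so your proposal is not weaker than the paper's own argument.
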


Therefore, by solving the radioactive problem (\ref{eq:R_L radioactive problem}) -- or level-matrix one (\ref{eq::invA det roots}) -- to find the poles $\mathcal{E}_j$ and residues $\boldsymbol{r_j}$ of the Kapur-Peierls operator (respectively $p_j$ and $\boldsymbol{\kappa_j}$ from (\ref{eq:R_L residues normalization z-space}) or level-matrix equivalent (\ref{eq::invA det roots z-space}) in $z$-space), we can compute the transmission matrix residues $\boldsymbol{\tau_j}$ from (\ref{eq: T_j as function of r_j}) and the conjugate continuation ones $\boldsymbol{\widetilde{\tau}_j}$ from (\ref{eq: tau tilde}) (respectively $\boldsymbol{\zeta_j}$ from (\ref{eq: link zeta to r residues}) in $z$-space), to find the poles and residues of the Windowed Multipole Representation of R-matrix cross sections, through equations (\ref{eq:partial sigma_cc' WMP Energy}), (\ref{eq:total σ_c WMP Energy}), (\ref{eq:partial sigma_cc' WMP Residues Energy}), and (\ref{eq:total σ_c WMP Residues Energy }); or respectively equations (\ref{eq:partial sigma_cc' WMP z-space}), (\ref{eq:total σ_c WMP z-space}), (\ref{eq:partial sigma_cc' WMP Residues z-space}), and (\ref{eq:total σ_c WMP Residues z-space}) for $z$-space. 

\begin{figure}[ht!!] 
  \centering
  \subfigure[\ $^{\text{238}}\text{U}$ first resonances (3 s-waves and 4 p-waves).]{\includegraphics[width=0.42\textwidth]{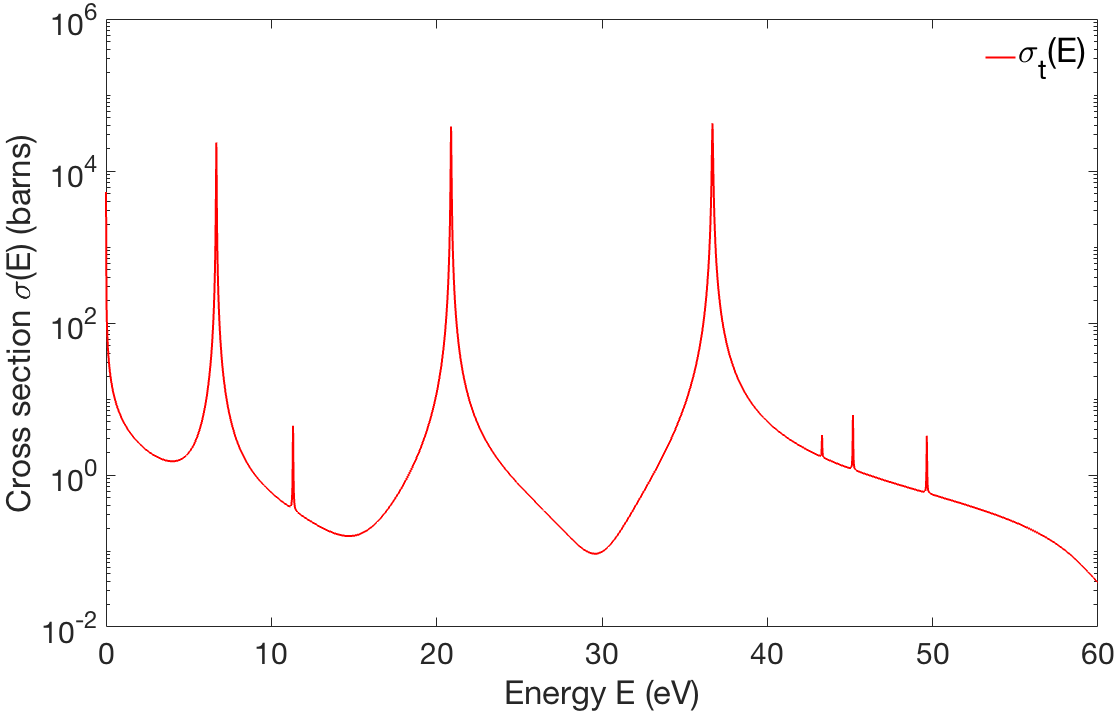}}
  \subfigure[\ $^{\text{238}}\text{U}$ windowed multipole cross section surface.\label{fig: Windowed Multipole Representation with belox-threshold continuation}]{\includegraphics[width=0.47\textwidth]{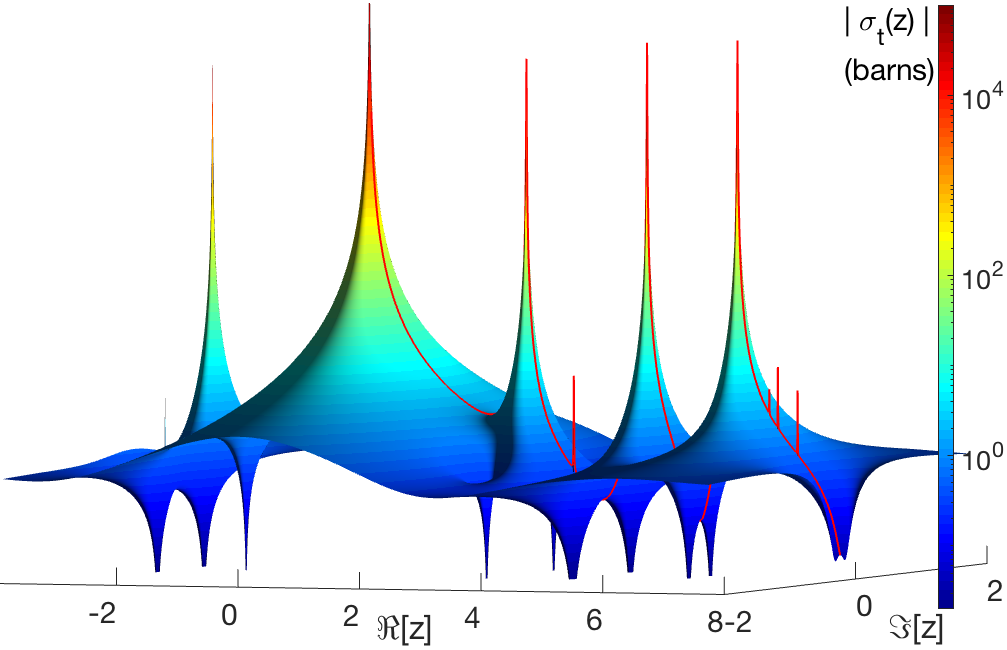}}
  \subfigure[\ $^{\text{238}}\text{U}$ first s-wave resonance peak.\label{fig: Windowed Multipole Representation s-wave peak and contour}]{\includegraphics[width=0.46\textwidth]{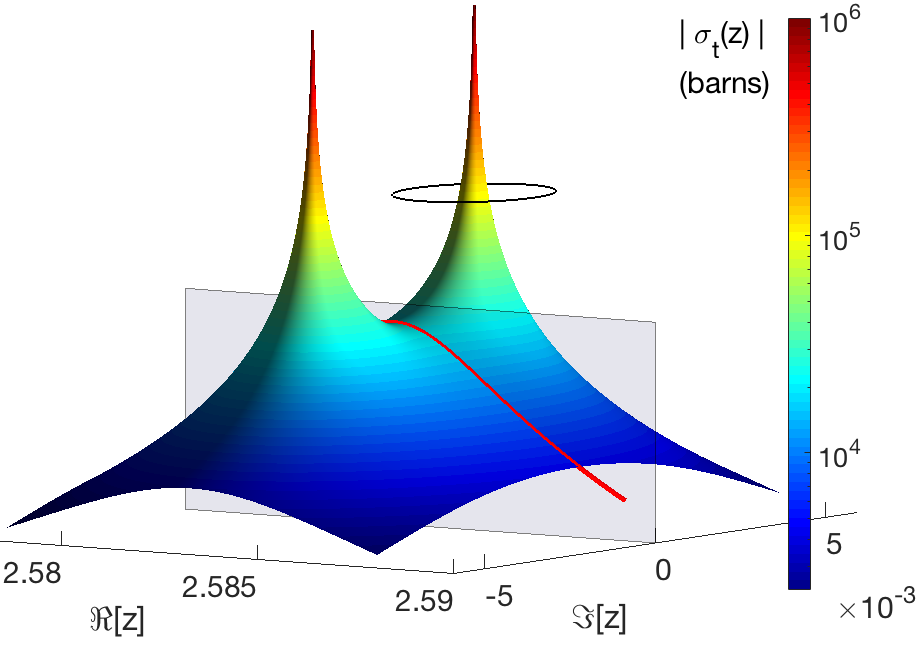}}
  \caption{\small{Windowed multipole representation of R-matrix cross sections: $^{\text{238}}\text{U}$ total cross section (minus potential scattering) meromorphic continuation into the complex $z$-plane, for $z = \pm \sqrt{E}$ in ($\mathrm{\sqrt{eV}}$). This surface's crest and thalweg line along the real axis is the R-matrix cross section above the zero threshold. FIG. \ref{fig: Windowed Multipole Representation with belox-threshold continuation} shows the resonance peaks are the saddle points between the complex conjugate poles. Negative $z$ in FIG.\ref{fig: Windowed Multipole Representation with belox-threshold continuation} are on the shadow branch $\left\{ E, - \right\}$ of mapping (\ref{eq:rho_c massive}). The black circle in FIG.\ref{fig: Windowed Multipole Representation s-wave peak and contour} represents the contour integrals around the poles of the complex cross section which enable both conversion to windowed multipole covariances (theorem \ref{theo::WMP covariance}) and analytic Doppler broadening (theorem \ref{theo::WMP Doppler broadening}).}}
  \label{fig: Windowed Multipole Representation}
\end{figure}

\subsubsection{\label{subsubsec:Windowed Pole Representation: potential and resonance matrices approach} Windowed Pole Representation: potential and resonance matrices approach}

The transmission matrix approach is exact, but it has three drawbacks: 1) it is not simple to interpret physically; 2) it does not give us information on the ``background'' behavior (non-resonant Laurent expansion $\sum_{n\geq-2} a_n k_c^n$); 3) it can be numerically unstable. 
Decomposition (\ref{eq:Transmission matrix decomposition Z and D}) of the transmission matrix helps us separate the cross sections into parts we can interpret physically: the potential cross section $\sigma_c^{\text{pot}}$ has no resonances (\ref{eq: potential cross section}); the reaction cross section $\sigma_{cc'}^{\text{react}}$ has all the resonances (\ref{eq: reaction cross section}); and both the partial cross section $\sigma_{cc'}$ from (\ref{eq: partial cross section breakdown}) and the total cross section $\sigma_c$ from (\ref{eq:total sigma_c WMP z-space decomposed}) also have interference resonances from the real part of the resonance matrix $\boldsymbol{Z}$. This means all the resonances of R-matrix cross sections can be recovered from the resonance matrix $\boldsymbol{Z}$ Mittag Leffler expansion (\ref{eq::Z Mittag Leffler in z-space}).
Applying Hwang's conjugate continuation method to construct the Windowed Multipole Representation then yields:

\begin{itemize}
    \item Potential scattering cross section (of channel $c$):
\begin{equation}
\begin{IEEEeqnarraybox}[][l]{rcl}
\sigma_c^{\text{pot}}(E) &  \ \underset{\mathcal{W}(z)}{=}  \ &  4\pi g_{J^\pi_c} \frac{\Re\left[\mathrm{i}D_{c}\right]}{\left|k_c\right|^2}  
\IEEEstrut\end{IEEEeqnarraybox}
\label{eq: potential cross section WMP}
\end{equation}
    \item Total cross section (of channel $c$):
\begin{equation}
\begin{IEEEeqnarraybox}[][l]{rcl}
\sigma_c(z) &  \underset{\mathcal{W}(z)}{\triangleq}   &  \sigma_c^{\text{pot}}(z) +  \frac{1}{z^2}  \Re_{\mathrm{conj}}\mkern-4mu\left[ \sum_{j\geq 1} \frac{r_j^{c} }{z - p_j}  \right] \mkern-3mu  +  \mkern-6mu \sum_{n\geq -2} b_n^{c} k_c^n(z) 
\IEEEstrut\end{IEEEeqnarraybox}
\label{eq:total sigma_c WMP z-space decomposed}
\end{equation}
where the total residues $r_j^{c} $ are defined in (\ref{eq:total σ_c WMP Residues z-space}).
    \item Self-scattering cross section (of channel $c$):
\begin{equation}
\begin{IEEEeqnarraybox}[][l]{rcl}
\sigma_c^{\text{scat}}(E) &  \ \underset{\mathcal{W}(z)}{=}  \ &  \frac{1}{z^2}  \Re_{\mathrm{conj}}\left[ \sum_{j\geq 1} \frac{^{\text{scat}}r_j^{c} }{z - p_j}  \right] + \sum_{n\geq -2} c_n^{c} k_c^n(z)\delta_{cc'}
\IEEEstrut\end{IEEEeqnarraybox}
\label{eq: scatering cross section WMP}
\end{equation}
with scattering residues:
\begin{equation}
\begin{IEEEeqnarraybox}[][c]{rcl}
^{\text{scat}}r_j^{c} & \ \triangleq \ &  -\frac{4\pi g_{J^\pi_c} p_j^2}{\left|k_c(p_j)\right|^2}\left[ 2\cdot \boldsymbol{\zeta_j}\boldsymbol{\zeta_j}^\mathsf{T} \circ \boldsymbol{D}\left(p_j^*\right)^*\right]_{cc'}\delta_{cc'}
\IEEEstrut\end{IEEEeqnarraybox}
\label{eq: scattering residues WMP}
\end{equation}
    \item Interference cross section (of channel $c$):
\begin{equation}
\begin{IEEEeqnarraybox}[][l]{rcl}
\sigma_c^{\text{int}}(E) &  \ \underset{\mathcal{W}(z)}{=}  \ &  \frac{1}{z^2}  \Re_{\mathrm{conj}}\left[ \sum_{j\geq 1} \frac{^{\text{int}}r_j^{c} }{z - p_j}  \right] + \sum_{n\geq -2} d_n^{c} k_c^n(z)\delta_{cc'}
\IEEEstrut\end{IEEEeqnarraybox}
\label{eq: interference cross section WMP}
\end{equation}
with interference residues:
\begin{equation}
\begin{IEEEeqnarraybox}[][c]{rcl}
^{\text{int}}r_j^{c} & \ \triangleq \ &  -\mathrm{i} \frac{4\pi g_{J^\pi_c} p_j^2}{\left|k_c(p_j)\right|^2}\left[ \boldsymbol{\zeta_j}\boldsymbol{\zeta_j}^\mathsf{T} \circ \boldsymbol{Y}\left(p_j^*\right)^*\right]_{cc'}\delta_{cc'}
\IEEEstrut\end{IEEEeqnarraybox}
\label{eq: interference residues WMP}
\end{equation}
    \item Reaction cross section (from channel $c$ to $c'$):
\begin{equation}
\begin{IEEEeqnarraybox}[][l]{rcl}
\sigma_{cc'}^{\text{react}}(z)  & \ \underset{\mathcal{W}(z)}{=}  \ &   \frac{1}{z^2}  \Re_{\mathrm{conj}}\left[ \sum_{j\geq 1} \frac{^{\text{react}}r_j^{cc'} }{z - p_j}  \right] + \sum_{n\geq -2} \widetilde{b}_n^{cc'} k_c^n(z)
\IEEEstrut\end{IEEEeqnarraybox}
\label{eq: reaction cross section WMP}
\end{equation}
with reaction residues:
\begin{equation}
\begin{IEEEeqnarraybox}[][c]{rcl}
^{\text{react}}r_j^{cc'} & \ \triangleq \ &  \frac{4\pi g_{J^\pi_c} p_j^2}{\left|k_c(p_j)\right|^2}\left[2\cdot \boldsymbol{\zeta_j}\boldsymbol{\zeta_j}^\mathsf{T} \circ \boldsymbol{Z}\left(p_j^*\right)^*\right]_{cc'}  
\IEEEstrut\end{IEEEeqnarraybox}
\label{eq: reaction residues WMP}
\end{equation}
    \item Partial (angle-integrated) cross section (\ref{eq: partial cross section breakdown}) (from channel $c$ to $c'$):
\begin{equation*}
\begin{IEEEeqnarraybox}[][l]{rcl}
\sigma_{cc'}(E) & \ = \ & \Big( \sigma_c^{\text{pot}}(E) + \sigma_c^{\text{scat}}(E) \Big)\delta_{cc'}  + \sigma_{cc'}^{\text{react}}(E)  \\
& \ = \ & \Big( \sigma_c^{\text{tot}}(E) - \sigma_c^{\text{int}}(E) \Big)\delta_{cc'}  + \sigma_{cc'}^{\text{react}}(E)
\IEEEstrut\end{IEEEeqnarraybox}
\end{equation*}
Noticing that $-\mathrm{i}\boldsymbol{T}^* = \boldsymbol{Z}^* - \boldsymbol{D}^*$, this entails the partial residues $\widetilde{r}_j^{cc'} $ from (\ref{eq:partial sigma_cc' WMP Residues z-space}) are connected to the total residues $r_j^{c} $ from (\ref{eq:total σ_c WMP Residues z-space}), the reaction residues $^{\text{react}}r_j^{cc'}$ from (\ref{eq: reaction residues WMP}), the scattering residues $^{\text{scat}}r_j^{c}$ from (\ref{eq: scattering residues WMP}), and the interference residues $^{\text{int}}r_j^{c}$ from (\ref{eq: interference residues WMP}), according to:
\begin{equation}
\begin{IEEEeqnarraybox}[][c]{rcl}
\widetilde{r}_j^{cc'}  & \  =  \ &  ^{\text{react}}r_j^{cc'} + ^{\text{scat}}r_j^{c} \\
& \  =  \ &  ^{\text{react}}r_j^{cc'} + r_j^{c} - ^{\text{int}}r_j^{c}
\IEEEstrut\end{IEEEeqnarraybox}
\label{eq:reaction cross section WMP Residues z-space decomposed}
\end{equation}
\end{itemize}

Total cross section decomposition (\ref{eq:total sigma_c WMP z-space decomposed}) is simpler to interpret physically than expression (\ref{eq:total σ_c WMP Residues z-space}) directly derived from the transmission matrix, because the potential cross section $\sigma_c^{\text{pot}}$ is extracted from the background Laurent expansion: $\sum_{n\geq -2} a_n^c z^n$. 
The same holds for the partial cross section (\ref{eq:partial sigma_cc' WMP z-space}), where the residues decomposition (\ref{eq:reaction cross section WMP Residues z-space decomposed}) untangles the direct expression (\ref{eq:partial sigma_cc' WMP Residues z-space}) from the transmission matrix approach. Though mathematically equivalent, some of these approaches may be more numerically stable than others. 

Importantly, we do not need the poles of the potential matrix $\boldsymbol{D}$ to express the partial and total cross sections. This is because any such poles (the zeros of $\boldsymbol{H_+}$) cancel out of the scattering matrix (\ref{eq:U expression}), and therefore of the cross sections. Before we proved this result in theorem 3 of \cite{Ducru_Scattering_Matrix_of_Complex_Wavenumbers_2019}, Hwang had to explicitly decompose the potential cross section $\sigma_c^{\text{pot}}$ into poles and residues in eq. (1) and (2) of \cite{Hwang_1998} (also in eq. (3) and (4) of \cite{Hwang_2003}), with severe numerical instability implications which he attempted to remedy by introducing pseudo-poles in \cite{Hwang_1992}. We now known that under proper analytic continuation, these spurious poles have zero residues in the transmission matrix, and thus cancel out of the partial and total cross sections.

\subsubsection{\label{subsubsec:Pole expansion: R-matrix construct or rational fit}Pole expansion: R-matrix construct or rational fit}

So far, we have constructed the transmission matrix Mittag-Leffler expansion (\ref{eq::T Mittag Leffler z-space}) by first solving the radioactive states problem (\ref{eq:R_L radioactive problem in z-space}) and then obtaining the transmission matrix residues from those of the Kapur-Peierls operator, through (\ref{eq: link zeta to r residues}). 
One could dispense of the intermediary steps and find the radioactive poles $\left\{p_j\right\}$ directly through the transmission matrix by solving the generalized eigenvalue problem 
\begin{equation}
\left.\boldsymbol{T}^{-1}(z)\right|_{z = p_j} \boldsymbol{\zeta_j} = \boldsymbol{0}
\label{eq:T Gohberg-Sigal in z-space}
\end{equation}
and subjecting the residue widths vectors $\left\{\boldsymbol{\zeta_j}\right\}$ to the following normalization:
\begin{equation}
\boldsymbol{\zeta_j}^\mathsf{T} \left( \left. { \frac{\partial \boldsymbol{T}^{-1}}{\partial z} }\right|_{z=p_j} \right) \boldsymbol{\zeta_j} = \mathrm{i} 
\label{eq:T residues normalization z-space}
\end{equation}
Though mathematically equivalent, this all-in-one approach can nonetheless be prone to numerical instabilities. 
Which leads us to the question of how to numerically solve the generalized eigenproblems - either the radioactive ones (\ref{eq:R_L radioactive problem in z-space}) or directly (\ref{eq:T Gohberg-Sigal in z-space}). On this issue, we direct the reader to the section of theorem 1 in \cite{Ducru_Scattering_Matrix_of_Complex_Wavenumbers_2019} for a more detailed discussion, in particular on the multi-sheeted nature of the Riemann mapping (\ref{eq:rho_c(E) mapping}), which can complicate the search for solutions.
We will here simply state that these are nonlinear eigenvalue problems, and general algorithms to solve them can be found in the Handbook of Linear Algebra \cite{Handbook_of_linear_algebra}, chapter 115.
One such algorithm is the Rayleigh-quotient method, used by Brune to find alternative parameters in \cite{Brune_2002}. 
Alternatively, it is sometimes more computationally advantageous to first find the radioactive poles $\left\{p_j\right\}$ directly by solving the channel determinant problem, $\mathrm{det}\left( \left.\boldsymbol{R}_{L}^{-1}(z)\right|_{z = p_j} \right) = 0$, or the corresponding level determinant one, $\mathrm{det}\left( \left.\boldsymbol{A}^{-1}(z)\right|_{z = p_j} \right) = 0$, and to second solve the associated eigenvalue problem (which is now linear), or even to directly evaluate the residues at the found poles by contour integrals (\ref{eq: WMP partial residues contour integrals}) and (\ref{eq: WMP total residues contour integrals}). Such methods tailored to find all the roots of the radioactive problem where introduced in \cite{Ducru_PHYSOR_conversion_2016}, in section 5 of \cite{Analytic_Benchmark_1_2020}, or in equations (200) and (204) of \cite{Frohner_Jeff_2000}. 
Also, solving the Kapur Peierls radioactive problem (\ref{eq:R_L radioactive problem in z-space}) will be advantageous over solving the level matrix one (\ref{eq::invA det roots z-space}) when the number of levels $N_\lambda$ far exceeds the number of channels $N_c$, and conversely.

Rather than starting from the Wigner-Eisenbud R-matrix resonance parameters $\Big\{ E_{T_c}, a_c, B_c, E_{\lambda}, \gamma_{\lambda,c} \Big\}$ to construct the Windowed Multipole Representation poles $p_j$ and residues $\widetilde{r}_j^{cc'}$ and $r_j^{c}$ as (\ref{eq:partial sigma_cc' WMP Residues z-space}) and (\ref{eq:total σ_c WMP Residues z-space}), an alternative approach is to simply curve-fit the point-wise energy-dependence of nuclear cross sections $\sigma_{cc}(E)$ with the corresponding Windowed Multipole Representation forms (\ref{eq:partial sigma_cc' WMP z-space}) and (\ref{eq:total σ_c WMP z-space}). For instance, this approach was successfully deployed in \cite{Liang_Ducru_ANS_2017} and in \cite{Multipole_regulatized_VF_2018}, where using the 
``black-box'' rational function approximating algorithm called ``vector-fitting'' \cite{Gustaven_VF_1999,Gustaven_RVF_2006} led to finding the exact resonant radioactive poles $\left\{p_j\right\}$ of $^{\text{16}}\text{O}$, for which no resonance parameters were published \cite{Multipole_regulatized_VF_2018}. This conversion of point-wise R-matrix cross sections into windowed multipoles representation approach was generalized to most of the nuclides found in the  ENDF/B-VII.1 nuclear data library \cite{liuGenerationWindowedMultipole2018, Jingang_Liang_2018_PHYSOR}, and could potentially be facilitated by recent advances in rational approximation algorithms -- such as RKFIT \cite{berljafaRKFITAlgorithmNonlinear2017} or AAA \cite{nakatsukasaAAAAlgorithmRational2018}.

\subsubsection{\label{subsubsec:Windowing process: Laurent background fit} Windowing process: Laurent background fit}

Regardless of the method deployed to find the poles $\big\{ p_j \big\}$ and their corresponding residues, there exists no general way to construct the thresholds Laurent expansions, $\sum_{n\geq-2} a_n k_c(z)^n$, from the R-matrix parameters. 
One must thus select an energy window $\mathcal{W}(E)$ and curve fit the background Laurent expansion $\sum_{n\geq-2} a_n k_c(z)^n$ by subtracting the resonances, that is the poles contribution $\sum_{j\in\mathcal{W}(E)} \frac{r_j^c}{z-p_j} $. Nonetheless, there is a difficulty as to which such poles one should include explicitly into the window. It is not necessary to explicitly call all the poles $\big\{ p_j \big\}$ for each window $\mathcal{W}(E)$, rather the contribution of far-away poles is best curve-fitted and included in the Laurent expansion $\sum_{n\geq-2} a_n k_c(z)^n$. 
The criterion used to decide which poles $\left\{ p_j \right\}$ to include within each window is to select an accuracy bound for the Doppler broadened cross section, and include in window $\mathcal{W}(E)$ all the poles whose Doppler broadened resonances have a significant impact on the cross section within that window. Thus, the greater the maximum temperature, the more far-away poles have to be included to compute the cross section within window $\mathcal{W}(E)$. 
Once the contributing poles (after Doppler broadening) have been found, we subtract them from the zero-kelvin cross sections and curve-fit the difference with a Laurent-expansion $\sum_{n\geq-2} a_n k_c(z)^n$. More detailed explanations on this windowing process can be found in \cite{Forget_2013, joseyWindowedMultipoleSensitivity2015, Josey_JCP_2016}.

Though the background Laurent expansion must be numerically fitted, and that the resonant poles themselves may be accurately found using rational approximation ``black-box'' algorithms, it is critical to understand that the Windowed Multipole Representation (\ref{eq:partial sigma_cc' WMP z-space}) and (\ref{eq:total σ_c WMP z-space}) is not a curve-fitting approximation: this is a rigorous representation, mathematically and physically equivalent to the exact R-matrix theory cross sections (for real energies in open channels), or the Humblet-Rosenfeld pole expansions in wavenumber space. 
This can be tested by curve-fitting in $E$ and $k_c$ space, both the resonances and the background Laurent expansions. One will notice that the $E$-space Breit-Wigner profiles (\ref{eq:: SLBW Lorenzian profiles for resonances}) capture exactly one-for-one the resonance behavior. However, the threshold behaviors are not well represented by the $E$ variable: while few coefficients suffice to reach high accuracy using Laurent expansions in $k_c$ (usually no more than $a_{-2}$, $a_{-1}$, $a_{0}$, and $a_{1}$), many more expansion coefficients are necessary when fitting the background with Laurent expansions with powers of $E$. 

Finally, remember that for non-massless particles, wavenumber-energy mapping (\ref{eq:rho_c massive}) entails that: $k_c^2 \propto z^2 - E_{T_c}$. Thus, for zero-threshold reactions ($E_{T_c} = 0$), we have a direct proportionality $k_c \propto z$.
In order to achieve closed-form Doppler-broadening expressions, we may be willing to sacrifice the physically accurate Laurent expansion in $k_c$, and replace it with an approximation in powers of $z$ -- that is a Laurent expansion $\sum_{n\geq -2} a_n z^n$ -- plus rational Pad\'e-type approximations with simple poles -- that is adding non-physical pseudo-poles -- so as to approximate the exact threshold behavior $\sum_{n\geq-2} a_n k_c(z)^n$ with powers of $z$ and pseudo-poles $\sum_{n\geq-2} \widetilde{a}_n z^n + \sum_{n\geq 1} \frac{\widetilde{r}_n}{z-\widetilde{p}_n}$. Runge's theorem guarantees such approximation can always be performed to high-accuracy, though this is often costly, as many more pseudo-poles and Laurent expansion coefficients have to be introduced. Nonetheless, this approximation will have advantages when Doppler-broadening massive (not massless photons) particles (both charged and neutral), and it also provides a unified Windowed Multipole formalism:
\begin{equation}
\begin{IEEEeqnarraybox}[][c]{C}
     \sigma(z)  \underset{\mathcal{W}(E)}{=} \sum_{n\geq -2} a_n z^n + \frac{1}{z^2}\Re_{\mathrm{conj}}\left[\sum_{j \geq 1} \frac{r_{j}}{z-p_j}\right]
\label{eq:: sigma(z) Windowed Multipole representation}
\IEEEstrut\end{IEEEeqnarraybox}
\end{equation}
In addition to the residues (\ref{eq: WMP partial residues contour integrals}) and (\ref{eq: WMP total residues contour integrals}) of theorem \ref{theo::WMP Representation}, one can now also obtain the Laurent expansion coefficients by means of contour integrals:
\begin{equation}
\begin{IEEEeqnarraybox}[][c]{rcl}
a_{-2} & \ = \ &  \Re_{\mathrm{conj}}\left[\sum_{j \geq 1} \frac{r_{j}}{p_j}\right] + \frac{1}{2\mathrm{i}\pi}\oint_{\mathfrak{C}_{0}} z \cdot \sigma(z) \mathrm{d}z \\
a_{-1} & \ = \ &  \Re_{\mathrm{conj}}\left[\sum_{j \geq 1} \frac{r_{j}}{p_j^2}\right] + \frac{1}{2\mathrm{i}\pi}\oint_{\mathfrak{C}_{0}} \sigma(z) \mathrm{d}z \\ 
a_{n} & \ \underset{n\geq 0}{= } \ &  \frac{1}{2\mathrm{i}\pi}\oint_{\mathfrak{C}_{0}} \frac{\sigma(z)}{z^{n+1}} \mathrm{d}z    =  \mkern-3mu \frac{1}{2\pi \epsilon^n}\mkern-6mu \int_{\theta = 0}^{2\pi} \mkern-14mu \sigma \mkern-4mu \left(\mkern-2mu \epsilon \mathrm{e}^{\mathrm{i}\theta} \mkern-1mu \right) \mkern-4mu \mathrm{e}^{\mkern-3mu -\mathrm{i}n\theta}  \mkern-2mu \mathrm{d}\theta
\IEEEstrut\end{IEEEeqnarraybox}
\label{eq: WMP Laurent expansion coefficients by contour integrals}
\end{equation}
where $\mathfrak{C}_{0}$ designates a positively oriented simple closed contour containing only pole $0$, for instance a circle centered at zero with small radius $\epsilon > 0 $. Relations (\ref{eq: WMP Laurent expansion coefficients by contour integrals}) are obtained by performing partial fraction decomposition:
\begin{equation*}
\begin{IEEEeqnarraybox}[][c]{rcl}
\frac{1}{z^2}\frac{r_{j}/2}{z-p_j} & \ = \ & \frac{r_{j}/2}{p_j^2} \left[\frac{1}{z-p_j} - \frac{1}{z}  - \frac{p_j}{z^2}  \right]
\IEEEstrut\end{IEEEeqnarraybox}
\label{eq: partial fraction decomposition of each poles}
\end{equation*}

Therefore, converting R-matrix cross sections to the unified Windowed Multipole Representation formalism (\ref{eq:: sigma(z) Windowed Multipole representation}) is conceptually simple: it suffices to solve for the $z$-space poles $\left\{ p_j \right\}$ of the $\boldsymbol{A}$ level matrix (\ref{eq:inv_A expression}) -- that is radioactive problem (\ref{eq::invA det roots z-space}) -- and then perform contour integrals (\ref{eq: WMP total residues contour integrals}), (\ref{eq: WMP partial residues contour integrals}) and (\ref{eq: WMP Laurent expansion coefficients by contour integrals}) on the continued conjugate (\ref{eq::continued conjugate}) R-matrix cross sections (\ref{eq:total σ_c}) and (\ref{eq:partial sigma_cc'}) to find their residues and Laurent expansion coefficients.

Henceforth, we will only treat this unified Windowed Multipole Representation formalism (\ref{eq:: sigma(z) Windowed Multipole representation}): it is physically exact for any R-matrix cross section of zero-threshold, and an approximation of the exact Windowed Multipole representations (\ref{eq:partial sigma_cc' WMP z-space}) and (\ref{eq:total σ_c WMP z-space}) only in windows that include non-zero thresholds.

\subsection{\label{subsec:Hwang's special case: zero-threshold neutron cross sections}Hwang's special case: \\ zero-threshold neutron cross sections}

\begin{table*}
\caption{\label{tab::L_values_neutral} Reduced logarithmic derivative $L_\ell(\rho) \triangleq \frac{\rho}{O_\ell} \frac{\partial O_\ell}{\partial r}(\rho)$ of outgoing wavefunction $O_\ell(\rho)$, and $L_\ell^0(\rho) \triangleq L_\ell(\rho) - B_\ell $ using $B_\ell = - \ell$, irreducible forms and Mittag-Leffler pole expansions for neutral particles, for angular momenta $0 \leq \ell \leq 4$. }
\begin{ruledtabular}
\begin{tabular}{c|c|c|c|c}
\ \ & $L_\ell(\rho)$ from recurrence (11) of \cite{Ducru_shadow_Brune_Poles_2019} & $\begin{array}{c}
     L_\ell^0(\rho) \triangleq L_\ell(\rho) - B_\ell  \\
     \text{setting } B_\ell = - \ell 
\end{array}$ & $\begin{array}{c}
    L_\ell(\rho) \text{ from theorem 1 of \cite{Ducru_shadow_Brune_Poles_2019},}  \\
    \text{poles } \big\{ \omega_n\big\} \text{ from table II of \cite{Ducru_shadow_Brune_Poles_2019}} 
\end{array} $ & $\begin{array}{c}
     \text{Outgoing wavefunction } \\
     O_{\ell}(\rho) \text{ from  (16) of \cite{Ducru_shadow_Brune_Poles_2019}}
\end{array} $ \tabularnewline
\hline
$\ell$  &  $ L_\ell(\rho) = \frac{\rho^2 }{\ell - L_{\ell-1}(\rho)} - \ell $ & $ L_\ell^0(\rho) =  \frac{\rho^2}{2\ell -1 - L_{\ell-1}^0(\rho) }$   & $L_\ell(\rho) = -\ell + \mathrm{i} \rho + \sum_{n\geq 1} \frac{\rho}{\rho - \omega_n}$ &$ O_\ell(\rho)  =  \mathrm{e}^{\mathrm{i}\left(\rho + \frac{1}{2}\ell \pi \right)}\frac{\prod_{n \geq 1}\left(\rho-\omega_n\right)}{\rho^\ell}$  \tabularnewline
\hline \hline
0  &  $\mathrm{i}\rho$  & $\mathrm{i}\rho$ & $\left\{ \emptyset \right\}$ &  $\mathrm{e}^{\mathrm{i}\rho}  $\tabularnewline
1  &  $ \frac{-1 + \mathrm{i}\rho + \rho^2}{1-\mathrm{i}\rho}$ & $ \frac{\rho^2}{1-\mathrm{i}\rho}$  &  $\omega_{1}^{\ell = 2} = -\mathrm{i} $ & $\mathrm{e}^{\mathrm{i}\rho}\left(\frac{1}{\rho} - \mathrm{i}\right)  $ \tabularnewline
2   &  $ \frac{- 6 + 6\mathrm{i}\rho + 3 \rho^2 - \mathrm{i}\rho^3 }{3 - 3\mathrm{i}\rho - \rho^2} $ & $ \frac{ \rho^2 - \mathrm{i}\rho^3 }{3 - 3\mathrm{i}\rho - \rho^2} $ & $\omega_{1,2}^{\ell = 2} \approx \pm 0.86602 - 1.5\mathrm{i} $ &  $\mathrm{e}^{\mathrm{i}\rho}\left(\frac{3}{\rho^2} -\frac{3\mathrm{i}}{\rho} - 1\right)  $ \tabularnewline
3  &  $\frac{- 45 + 45 \mathrm{i} \rho + 21\rho^2 - 6\mathrm{i}\rho^3 -\rho^4}{15 - 15\mathrm{i}\rho - 6\rho^2 + \mathrm{i}\rho^3}$ & $\frac{3\rho^2 - 3\mathrm{i}\rho^3 -\rho^4}{15 - 15\mathrm{i}\rho - 6\rho^2 + \mathrm{i}\rho^3}$ & $\begin{array}{rl}
     \omega_1^{\ell = 3} & \approx - 2.32219 \mathrm{i}  \\
     \omega_{2,3}^{\ell = 3}  & \approx \pm 1.75438 - 1.83891 \mathrm{i}  
\end{array} $ &  $\mathrm{e}^{\mathrm{i}\rho}\left(\frac{15}{\rho^3} -\frac{15\mathrm{i}}{\rho^2} -\frac{6}{\rho} + \mathrm{i}\right)  $  \tabularnewline
4  &  $\frac{ - 420 + 420\mathrm{i}\rho + 195\rho^2 -55\mathrm{i}\rho^3 - 10\rho^4 + \mathrm{i}\rho^5}{105 - 105\mathrm{i}\rho -45\rho^2 + 10\mathrm{i}\rho^3 + \rho^4}$ & $\frac{ 15\rho^2 -15\mathrm{i}\rho^3 - 6\rho^4 + \mathrm{i}\rho^5}{105 - 105\mathrm{i}\rho -45\rho^2 + 10\mathrm{i}\rho^3 + \rho^4}$  &  $ \begin{array}{rl}
   \omega_{1,2}^{\ell = 4} & \approx \pm 2.65742 - 2.10379\mathrm{i}   \\
    \omega_{3,4}^{\ell = 4} & \approx \pm 0.867234 - 2.89621 \mathrm{i} 
       \end{array}$ & $\mathrm{e}^{\mathrm{i}\rho}\left(\frac{105}{\rho^4} - \frac{105\mathrm{i}}{\rho^3} -\frac{45}{\rho^2} + \frac{10 \mathrm{i}}{\rho} + 1 \right)  $ \tabularnewline
\end{tabular}
\end{ruledtabular}
\end{table*}

There is one special case where it is possible to fully and explicitly convert R-matrix parameters into their exact windowed multipole representation (\ref{eq:: sigma(z) Windowed Multipole representation}), without any need of curve-fitting or truncating the Laurent expansion: this is the case of neutron cross sections with no thresholds, which Hwang first investigated in \cite{Hwang_1987}. In this case, because all channels have zero energy threshold ($E_{T_c} = 0$), every channel's wavenumber is proportional to the square root of energy, $k_c \propto z$, we can therefore write the dimensionless wavenumber as:
\begin{equation}
\begin{IEEEeqnarraybox}[][c]{rcl}
\rho_c  & \ = \ & {\rho_0}_c \cdot z \\
{\rho_0}_c & \ \triangleq \ &   a_c \sqrt{\frac{2m_{c,1} m_{c,2}}{\left(m_{c,1}+m_{c,2}\right) \mathrm{\hbar}^2}}
\IEEEstrut\end{IEEEeqnarraybox}
\label{eq:rho_c massive no threshold}
\end{equation}
Moreover, there are no branch-points to mapping (\ref{eq:rho_c massive}) other than zero, so that the Windowed Multipole Representation (\ref{eq:: sigma(z) Windowed Multipole representation}) is exact and valid everywhere for positive energies $E > 0$: the Laurent development $\sum_{n\geq-2} a_n z^n$ at zero accurately describes the threshold behavior (as there is no exponential dampening from charges). Because there are no charges, the dimensionless Coulomb field parameter (\ref{eq: eta_c def}) is null, $\eta_c = 0$, so that the difference $\omega_c \triangleq \sigma_{\ell_c}(\eta_c) - \sigma_{0}(\eta_c) $ in Coulomb phase shift (\ref{eq: Coulomb phase shift def}) is such that we always have $\mathrm{e}
^{\mathrm{i}\omega_c} = 1$. From this, definitions (\ref{eq:def H_pm I and O}) entail that the incoming and outgoing wavefunctions are then simply the $H_-$ and $H_+$ combination of regular and irregular Bessel functions:
\begin{equation}
\begin{IEEEeqnarraybox}[][c]{rcl}
O(\rho)  & \ = \ &   {H_{+}}(\rho)  = G(\rho) + \mathrm{i} F(\rho) = \rho \left(- y_\ell(\rho) + \mathrm{i} j_\ell(\rho)\right) \\
I(\rho) & \ = \ &  {H_{-}}(\rho)  = G(\rho) - \mathrm{i} F(\rho) = \rho \left(- y_\ell(\rho) - \mathrm{i} j_\ell(\rho)\right)
\IEEEstrut\end{IEEEeqnarraybox}
\label{eq:def H_pm I and O neutral }
\end{equation}
where $j_\ell(\rho)$ is the spherical Bessel function of the first kind, and $y_\ell(\rho)$ is the spherical Bessel function of the second kind, respectively defined in chapter 10, eq.(10.47.3) and eq.(10.47.4) of \cite{NIST_DLMF}, or in chapter 10, eq.(10.1.1) of Abramowitz \& Stegun \cite{Abramowitz_and_Stegun}.
This in turn entails the remarkable property that the reduced logarithmic derivative (\ref{eq: L operator}) of the outgoing-wave function $L(\rho)$ is now a rational function (that is the ratio of polynomials) in $\rho$, whose expressions, along with those of $O(\rho)$, are here reported in table \ref{tab::L_values_neutral}, and we refer to section II.B of \cite{Ducru_shadow_Brune_Poles_2019} for a more detailed description of these functions.

\subsubsection{\label{subsubsec:Solving the radioactive states problem: polynomial rootfinding}Solving the radioactive states problem: \\ polynomial rootfinding}

Crucially, in this special case of only neutron channels without threshold, the fact that $L(\rho)$ is now a rational function in $z$ entails that (\ref{eq: def Kapur-Peierls operator}), the Kapur-Peierls operator $\boldsymbol{R}_L$, is also a rational function in $z$-space. 
Therefore, the radioactive problem (\ref{eq:R_L radioactive problem in z-space}) itself becomes that of finding the roots of a rational function. 
We solve the radioactive problem through the level matrix approach (\ref{eq::invA det roots z-space}), where the residue width vectors are normalized as (\ref{eq:A residues normalization z-space}), which we can calculate through (\ref{eq: level matrix energy derivatives z-sapce}) where the partial derivative (\ref{eq: partial L partial z}) is now simply $ \frac{\partial \rho}{\partial z} = {\rho_0}_c $ from (\ref{eq:rho_c massive no threshold}).
The key is now to find the radioactive poles $\big\{p_j\big\}$ in $z$-space.
We can do so by solving for the roots of  the inverse level matrix (\ref{eq:inv_A expression}) determinant: $\mathrm{det}\left( \left.\boldsymbol{A}^{-1}(z)\right|_{z = p_j} \right) = 0$ . Since this determinant is a rational function in $z$, one can find all its zeros by expressing it in irreducible form, and solve for all the roots of the numerator polynomial. This can be accomplished by developing the determinant and applying lemma 3 of \cite{Ducru_shadow_Brune_Poles_2019} on \textit{diagonal divisibility and capped multiplicities}, in an analogous fashion as in the proof of theorem 3 in \cite{Ducru_shadow_Brune_Poles_2019}, to which we point for more detailed explanations.
More precisely, one can see in table \ref{tab::L_values_neutral} that $L_\ell(\rho) = \frac{u_{\ell + 1}(\rho)}{q_\ell(\rho)}$ is a proper rational function with simple poles, with a denominator $q_\ell(\rho)$ of degree $\ell$ and a numerator $u_{\ell + 1}(\rho)$ of degree $\ell + 1$. 
The polynomial factor $Q(z) $ that makes the denominator of the $\mathrm{det}\left(\boldsymbol{A}^{-1} \right)(z) $ rational function irreducible can then be found by applying lemma 3 of \cite{Ducru_shadow_Brune_Poles_2019} on diagonal divisibility and capped multiplicities, yielding 
\begin{equation}
\begin{IEEEeqnarraybox}[][c]{rcl}
Q(z) & \ \triangleq \ &  \prod_{c=1}^{N_c} q_{\ell_c}(z)
\IEEEstrut\end{IEEEeqnarraybox}
\label{eq:irreductible Q}
\end{equation}
so that for only neutron channels without thresholds, finding all the radioactive poles $\big\{p_j\big\}$ is akin to solving for all the roots of the following polynomial:
\begin{equation}
\begin{IEEEeqnarraybox}[][c]{rcl}
Q(z) \cdot \left.\mathrm{det}\left( \boldsymbol{e} - z^2\Id{} - \boldsymbol{\gamma}\left( \boldsymbol{L}(z) - \boldsymbol{B} \right)\boldsymbol{\gamma}^\mathsf{T} \right)\right|_{z=p_j} & \ = \ &  0
\IEEEstrut\end{IEEEeqnarraybox}
\label{eq: no threshold neutrons radioactive problem}
\end{equation}
The degree of this polynomial, and thus the number of (complex) roots $\big\{p_j\big\}$, is:
\begin{equation}
N_L = 2 N_\lambda  + \sum_{c=1}^{N_c} \ell_c 
\label{eq::NL number of poles no threshold neutrons}
\end{equation}
which is a particular case of the general number of radioactive poles $N_L$ we stated in (\ref{eq::NL number of poles}) (and proved in theorem 1 of \cite{Ducru_Scattering_Matrix_of_Complex_Wavenumbers_2019}), but with only one threshold, $E_{T_c} = 0$, so that the number of different thresholds is one: $N_{E_{T_c} \neq E_{T_{c'}}} = 1 $. 

In the simple case of Multi-Level Breit-Wigner approximation (\ref{eq:inv_A expression MLBW}), the diagonal level matrix $\boldsymbol{A^{-1}}_{\mathrm{MLBW}}$ greatly simplifies the radioactive states eigenproblem (\ref{eq::invA det roots z-space}): it is now diagonal and the poles $\left\{ p_j \right\}$ are the roots of:
\begin{equation}
\begin{IEEEeqnarraybox}[][l]{rcl}
E_\lambda - p_j^2 - \sum_{c=1}^{N_c} \gamma_{\lambda,c}^2 \left( L_c(p_j) - B_c \right)   & \ =\ & 0
\IEEEstrut\end{IEEEeqnarraybox}
\label{eq: MLBW polefinding}
\end{equation}
We then have $\left[ \boldsymbol{\kappa_j}\boldsymbol{\kappa_j}^\mathsf{T}\right]_{{\mathrm{MLBW}}} = \boldsymbol{\gamma}^\mathsf{T} \left[\boldsymbol{\alpha_j}\boldsymbol{\alpha_j}^\mathsf{T}\right]_{{\mathrm{MLBW}}} \boldsymbol{\gamma}$ where normalization (\ref{eq:A residues normalization z-space}) entails
\begin{equation}
\begin{IEEEeqnarraybox}[][l]{rcl}
\left[\boldsymbol{\alpha_j}\boldsymbol{\alpha_j}^\mathsf{T}\right]_{{\mathrm{MLBW}}}  & \ =\ & \boldsymbol{\mathrm{diag}}_{N_\lambda}\left( \frac{-1}{2p_j + \sum_{c=1}^{N_c} \gamma_{\lambda,c}^2 \frac{\partial L_c}{\partial z }(p_j)}\right)
\IEEEstrut\end{IEEEeqnarraybox}
\label{eq: MLBW normalization}
\end{equation}
This approach will yield the same results as those in \cite{Jammes_1999}.

Interestingly, besides adding the spurious poles $\big\{\omega_n\big\}$ of the potential cross section $\sigma_c^{\text{pot}}$ (eq. (1) and (2) of \cite{Hwang_1998} or eq. (3) and (4) of \cite{Hwang_2003}), Hwang also accounted for too many $\big\{p_j\big\}$ poles, in eq. (35a) section III.A, p.197 of \cite{Hwang_1987}. This is for two fundamental reasons: 1) lemma 3 of \cite{Ducru_shadow_Brune_Poles_2019} on diagonal divisibility and capped multiplicities means Hwang's $q_\ell(\sqrt{E})$ functions can be taken out of his product in equation (36) of \cite{Hwang_1987}; 2) these same $q_\ell(\sqrt{E})$ functions are not the same as our $q_\ell(\rho)$ functions, which are the denominator of $L_\ell(\rho)$. Instead, Hwang's $q_\ell(\sqrt{E})$ functions are the denominator of the penetration $P_\ell(\rho)$ and shift $S_\ell(\rho)$ functions -- defined as $L_\ell(\rho) = S_\ell(\rho) + \mathrm{i} P_\ell(\rho)$ in (29) of \cite{Ducru_shadow_Brune_Poles_2019} where a thorough and in-depth study of these functions is undertaken -- and these denominators are different from the denominator of $L_\ell(\rho)$, as we show in table III of \cite{Ducru_shadow_Brune_Poles_2019}. In essence, this is because by writing $L_\ell(\rho) = S_\ell(\rho) + \mathrm{i}P_\ell(\rho)$, the denominator is brought to its squared modulus: $L_\ell(\rho) = \frac{u_{\ell + 1}(\rho)\cdot q_\ell^*(\rho)}{q_\ell(\rho)\cdot q_\ell^*(\rho)}$, which is no longer its irreducible form, and which therefore doubles the number of $L_\ell(\rho)$ poles by introducing superfluous complex conjugate poles from $ q_\ell^*(\rho)$. These superfluous poles have always been overlooked until now, recent examples are eq. (9) and (10) of \cite{Jammes_1999}, or in eq. (2.29) p.75 of \cite{freimanMultipoleMethodOnthefly2020}, where they count them to find $N_L = 2 N_\lambda  + 2 \sum_{c=1}^{N_c} \ell_c$, which is actually the number of alternative analytic poles $N_S$ we establish in eq. (49) theorem 3 of \cite{Ducru_shadow_Brune_Poles_2019}, instead of the correct number (\ref{eq::NL number of poles no threshold neutrons}) of radioactive poles $N_L = 2 N_\lambda  + \sum_{c=1}^{N_c} \ell_c $ we demonstrated in theorem 1 of \cite{Ducru_Scattering_Matrix_of_Complex_Wavenumbers_2019}. 

Because polynomial root-finding is no simple endeavor -- see \cite{Ducru_PHYSOR_conversion_2016} or \cite{freimanMultipoleMethodOnthefly2020} for methods applied to the radioactive problem (\ref{eq: no threshold neutrons radioactive problem}) and see \cite{Wilkinson_1959,  Durand_1960, Aberth_1973, Durand-Kerner_1989, Schonhage_1982, X_Gourdon_these_1996, panOptimalNearlyOptimal1996a, panSolvingPolynomialEquation1997, malajovichFastStableAlgorithm1997, Pan_approximating_complex_polynomial_zeros_2000,   emirisImprovedAlgorithmsComputing2005, panNewProgressReal2011, mcnameeEfficientPolynomialRootrefiners2012, panAcceleratedApproximationComplex2017} for more general methods --
searching for the wrong number of poles (in particular too many) can have dire numerical consequences. For instance, Hwang explains in \cite{Hwang_1987} how he had to go to quadruple precision in this code ``WHOPPER''. He was finding the poles one by one using a Newton-Raphson method, and then removing them to search for the next pole. But once he had eliminated all the true poles, he was still searching for additional ones which did not actually exist. Numerically, though, one can never fully cancel out a pole, and thus will always find fictitious poles in the immediate vicinity of the cancelled ones. This is exactly what happened to Hwang, and why he had many spurious poles clustered around the non-resonant $N_\lambda$ ones.
Hence knowing the correct number $N_L$ of poles (\ref{eq::NL number of poles no threshold neutrons}) -- and more generally (\ref{eq::NL number of poles}) -- is crucial in practice.

\subsubsection{\label{subsubsec:Exact multipole representations}Exact multipole representations}

Hwang also spent a lot of subsequent work performing a pole expansion of the potential cross section $\sigma_c^{\text{pot}}$ as well as of the energy dependence he found in his scattering residues, in eq. (1) and (2) of \cite{Hwang_1998} or eq. (3) and (4) of \cite{Hwang_2003}.
We recall that though the potential cross section does have poles -- roots $\big\{\omega_n\big\}$ of the $H_+(\rho)$ function here reported in table \ref{tab::L_values_neutral} or expressed by radicals in table II of \cite{Ducru_shadow_Brune_Poles_2019} -- these poles actually have zero residues in the scattering matrix, and thus cancel out of the partial and total cross sections, as we prove in theorem 3 of \cite{Ducru_Scattering_Matrix_of_Complex_Wavenumbers_2019}.
It will thus suffice to write that for the case of neutron cross sections with zero threshold, (\ref{eq: |D|^2 to Re[iD]}) and (\ref{eq: potential cross section}) entail the potential cross section takes the form:
\begin{equation}
\begin{IEEEeqnarraybox}[][l]{rcl}
\sigma_c^{\text{pot}}(z) & \ = \ & 4\pi a_c^2 \frac{g_{J^\pi_c}}{{\rho_0}_c^2} \frac{\sin^2\phi_c(z)}{z^2} \\
& \ = \ & 4\pi a_c^2 \frac{g_{J^\pi_c}}{{\rho_0}_c^2} \frac{1}{z^2}  \Re_{\mathrm{conj}}\left[\frac{1 - \mathrm{e}^{-2\mathrm{i}\phi_c(z)}}{2}\right] 
\IEEEstrut\end{IEEEeqnarraybox}
\label{eq: potential cross section neutron no threshold}
\end{equation}

With all this in mind, we can now finish the explicit Windowed Multipole Representation of no-threshold neutral particles cross sections.
Upon finding the $N_L$ roots $\big\{p_j\big\}$ of the polynomial radioactive problem (\ref{eq: no threshold neutrons radioactive problem}), we can then solve for the nullspace of the inverse level matrix (which we here assume is an eigenline and we refer to \cite{Ducru_Scattering_Matrix_of_Complex_Wavenumbers_2019} for the degenerate cases), and notice that that the degrees of the level matrix components $\boldsymbol{A}$ is at most -2, which leads to the the following, exact, partial fraction decomposition of the level matrix and of the Kapur-Peierls operator:
\begin{equation}
\begin{IEEEeqnarraybox}[][c]{rcl}
\boldsymbol{A}(z) & = & \sum_{j= 1}^{N_L} \frac{\boldsymbol{\alpha_j}\boldsymbol{\alpha_j}^\mathsf{T}}{z - p_j} \\
\boldsymbol{\kappa_j} & \triangleq & \boldsymbol{\gamma}^\mathsf{T} \boldsymbol{\alpha_j} \\
\boldsymbol{R}_L(z) & = & \sum_{j= 1}^{N_L} \frac{\boldsymbol{\kappa_j}\boldsymbol{\kappa_j}^\mathsf{T}}{z - p_j}
\IEEEstrut\end{IEEEeqnarraybox}
\label{eq::A Mittag Leffler in z-space neutron no threshold}
\end{equation}
We can then build a pole expansion of the resonance matrix $\boldsymbol{Z}$ from (\ref{eq:Transmission matrix decomposition Z and D}) by noticing that the $\boldsymbol{\rho}^{1/2}(z) = \sqrt{z} \boldsymbol{\rho_0}^{1/2}$ lead to an additional $z$ term for each residue, and that the degrees of the numerator and denominator of $z$ times the level matrix, $z\boldsymbol{A}(z)$, is still negative (degree of at most -1), guaranteeing the level matrix is a proper rational fraction with simple poles in $z$-space:
\begin{equation}
\begin{IEEEeqnarraybox}[][c]{rcl}
\boldsymbol{\rho}^{\frac{1}{2}}\boldsymbol{R}_L\boldsymbol{\rho}^{\frac{1}{2}}(z)  & = & \sum_{j= 1}^{N_L}\frac{p_j \boldsymbol{\rho_0}^{\frac{1}{2}}  \boldsymbol{\kappa_j}\boldsymbol{\kappa_j}^\mathsf{T} \boldsymbol{\rho_0}^{\frac{1}{2}}}{z - p_j}
\IEEEstrut\end{IEEEeqnarraybox}
\label{eq::TEST}
\end{equation}
This has as consequence the remarkable property that for zero-threshold neutral cross sections, the $z$-space radioactive squared widths $\boldsymbol{\kappa_j}\boldsymbol{\kappa_j}^\mathsf{T}$ (rank-one residues of the Kapur-Peierls operator at poles $p_j$) add up to nullity:
\begin{equation}
\sum_{j= 1}^{N_L}  \boldsymbol{\kappa_j}\boldsymbol{\kappa_j}^\mathsf{T}  =  \boldsymbol{0}
\label{eq::radioactive widths add up to nullity}
\end{equation}
From (\ref{eq:Transmission matrix decomposition Z and D}), we therefore obtain the following expression for the resonance matrix:
\begin{equation}
\boldsymbol{Z}(z) = \boldsymbol{H_+}^{-1}(z) \sum_{j= 1}^{N_L}\frac{p_j \boldsymbol{\rho_0}^{\frac{1}{2}}  \boldsymbol{\kappa_j}\boldsymbol{\kappa_j}^\mathsf{T} \boldsymbol{\rho_0}^{\frac{1}{2}}}{z - p_j} \boldsymbol{H_+}^{-1}(z)  
\label{eq::Z Mittag Leffler in z-space neutron no threshold}
\end{equation}
Where we deliberately left the energy dependence of $\boldsymbol{H_+}^{-1}(z) $, and recall that for neutral particles $H_+(\rho) = O(\rho)$. Polar decomposition (\ref{eq:def H_+ polar decomposition}) entails:${H_+}_{\ell}^{-1}(\rho) = \left| d_\ell^{-1}(\rho) \right| \mathrm{e}^{-\mathrm{i}\phi_\ell(\rho)}  = \mathrm{e}^{-\mathrm{i}\rho} \cdot  d_\ell^{-1}(\rho)  $, which is Hwang's notation in eq (3) of \cite{Hwang_2003}.
A closer look at the last column of table \ref{tab::L_values_neutral} shows that $d_\ell^{-1}(\rho) $ is the rational function of degree zero (that is a proper rational fraction plus a constant) with $\ell$ poles -- the roots $\big\{\omega_n\big\}$ -- that is the square root of that which Hwang identified in  eq. (1) of \cite{Hwang_1998}.
Careful analysis of this functions, using table \ref{tab::L_values_neutral} and conjugate continuation definitions (\ref{eq::|f|2 conjugate continuation}), yields the following expressions:
\begin{equation}
\begin{IEEEeqnarraybox}[][l]{rcl}
\mathrm{e}^{-2\mathrm{i}\phi_c(z)}  & = & \mathrm{e}^{-2\mathrm{i}\rho_c(z)}  \frac{d_{\ell_c}^{-1}(z)}{{d_{\ell_c}^{-1}}^*(z)} = Y_c(\rho_c(z))\\
d_{\ell}(\rho) & = & \mathrm{e}^{\mathrm{i}\ell \frac{\pi}{2}}\frac{\rho^\ell}{\prod_{n=1}^{\ell}\left( \rho - \omega_n\right)}\\
\frac{d_{\ell}^{-1}(\rho)}{{d_{\ell}^{-1}}^*(\rho)} & = &  (-1)^\ell \prod_{n=1}^\ell \left( \frac{\rho - \omega_n^*}{\rho - \omega_n}\right)  \\
 \left| d_{\ell} \right|^{-2}_{\mathrm{conj}}(\rho) & = & \frac{\rho^{2\ell}}{\prod_{n=1}^\ell\left(\rho-\omega_n\right)\left(\rho-\omega_n^*\right)}
\IEEEstrut\end{IEEEeqnarraybox}
\label{eq: d_l and other expressions}
\end{equation}
The diagonal elements of (\ref{eq::Z Mittag Leffler in z-space neutron no threshold}) are therefore exactly:
\begin{equation}
Z_{cc}(z) =   \left| d_{\ell_c} \right|^{-2}_{\mathrm{conj}}(z) \mathrm{e}^{-2\mathrm{i}\phi_c(z)} \sum_{j= 1}^{N_L}\frac{ {\rho_0}_c p_j \left[ \boldsymbol{\kappa_j}\boldsymbol{\kappa_j}^\mathsf{T}\right]_{cc} }{z - p_j} 
\label{eq::Z Mittag Leffler in z-space neutron no threshold diagonal element}
\end{equation}
which, upon partial fraction decomposition, yields the Hwang multipole representation \cite{Hwang_1987}:
\begin{equation}
\begin{IEEEeqnarraybox}[][l]{rcl}
\sigma_c(z) & \ = \ & \sigma_c^{\text{pot}}(z) + \sigma_c^{\text{Hwang}}(z) \\  & & + \frac{1}{z^2}\Re_\mathrm{conj}\left[ -\mathrm{i} \; \mathrm{e}^{-2\mathrm{i}\phi_c(z)}  \sum_{j= 1}^{N_L}\frac{ ^{\text{Hwang}}r_j^{c} }{z - p_j} \right] \\
^{\text{Hwang}}r_j^{c} & \ \triangleq \ & 4\pi a_c^2 \frac{ g_{J^\pi_c} }{{\rho_0}_c^2} \left| d_{\ell_c} \right|^{-2}_{\mathrm{conj}}(p_j)\cdot {\rho_0}_c p_j \left[ \boldsymbol{\kappa_j}\boldsymbol{\kappa_j}^\mathsf{T}\right]_{cc} \\
\sigma_c^{\text{Hwang}}(z) & \triangleq &  4\pi a_c^2 \frac{ g_{J^\pi_c} }{{\rho_0}_c^2} \frac{1}{z^2}\Re\left[ - \mathrm{i} \; \mathrm{e}^{-2\mathrm{i}\phi_c(z)} \Delta(z)  \right] \\
\Delta(z) & \triangleq &  \sum_{n= 1}^{\ell_c} \left[ \frac{ \Delta_n \left[\boldsymbol{\rho}^{\frac{1}{2}}\boldsymbol{R}_L\boldsymbol{\rho}^{\frac{1}{2}}\right]_{cc}\left(\frac{\omega_n}{\rho_0}\right)}{\rho - \omega_n} \right. \\ & & \left. + \frac{ \Delta_n^* \left[\boldsymbol{\rho}^{\frac{1}{2}}\boldsymbol{R}_L\boldsymbol{\rho}^{\frac{1}{2}}\right]_{cc}\left(\frac{\omega_n^*}{\rho_0}\right)}{\rho - \omega_n^*} \right] \\
\Delta_n & \triangleq & \frac{\omega_n^{2\ell_c}}{\prod_{k=1}^{\ell_c}\left( \omega_n - \omega_k^* \right)\prod_{k\neq n}^{\ell_c}\left( \omega_n - \omega_k \right)} 
\IEEEstrut\end{IEEEeqnarraybox}
\label{eq: total cross section WMP Hwang neutron no threshold}
\end{equation}
where the Hwang residues in (\ref{eq: total cross section WMP Hwang neutron no threshold}) are identical to eq. (2) of \cite{Hwang_2003}.
This scripture is the conjugate continuation real part of $N_L + \ell_c $ poles, as identified in \cite{freimanMultipoleMethodOnthefly2020}: the $N_L$ radioactive poles (\ref{eq::NL number of poles no threshold neutrons}), poles of the Kapur-Peierls operator (\ref{eq:R_L radioactive problem in z-space}), plus the roots $\big\{\omega_n\big\}$ of the outgoing wavefunction $\boldsymbol{O}(\boldsymbol{\rho})$. 
However, we have proved the latter cancel out of the transmission matrix, and thus of the cross section (theorem 3 of \cite{Ducru_Scattering_Matrix_of_Complex_Wavenumbers_2019}). Therefore, there must exist a multipole representation with only the $N_L$ Kapur-Peierls poles.
This can be achieved by developing the potential cross section (\ref{eq: potential cross section neutron no threshold}) into poles and residues, factoring the $\mathrm{e}^{-2\mathrm{i}\rho_c}$ component (which has no poles) using expressions (\ref{eq: d_l and other expressions}), and performing a partial fraction decomposition of the rational terms. Upon careful consideration, one will notice that this rational function is of degree zero, that its poles are the radioactive poles (and only those), and that the constant (obtained by evaluating at infinity $|\rho_c| \to \infty$) is $(-1)^{\ell_c}$. This shifts the potential cross section, so that the total cross section (\ref{eq: total cross section}) can be expressed as the sum of a background cross section (with no poles),
\begin{equation}
\begin{IEEEeqnarraybox}[][l]{rcl}
\sigma_c^{\text{back}}(z) & \triangleq & 4\pi a_c^2 \frac{g_{J^\pi_c}}{{\rho_0}_c^2} \frac{\sin^2\left(\rho_c(z) + \ell_c \frac{\pi}{2} \right)}{z^2} 
\IEEEstrut\end{IEEEeqnarraybox}
\label{eq: background cross section neutron no threshold}
\end{equation}
plus a resonant cross section with the $N_L$ radioactive poles: 
\begin{equation}
\begin{IEEEeqnarraybox}[][l]{rcl}
\sigma_c(z) & \ = \ & \sigma_c^{\text{back}}(z) +  \frac{1}{z^2}\Re_\mathrm{conj}\left[ -\mathrm{i} \; \mathrm{e}^{-2\mathrm{i}\rho_c(z) }  \sum_{j= 1}^{N_L}\frac{ ^{\text{tot}}r_j^{c} }{z - p_j} \right] \\
^{\text{tot}}r_j^{c} & \ \triangleq \ & 4\pi a_c^2 \frac{ g_{J^\pi_c} }{{\rho_0}_c^2}  d_{\ell_c}^{-2}(p_j)\cdot {\rho_0}_c p_j \left[ \boldsymbol{\kappa_j}\boldsymbol{\kappa_j}^\mathsf{T}\right]_{cc} \\ & = & 4\pi a_c^2 \frac{ g_{J^\pi_c} }{{\rho_0}_c^2} (-1)^{\ell} \frac{\left({\rho_0}_c p_j\right)^{2\ell + 1} \left[ \boldsymbol{\kappa_j}\boldsymbol{\kappa_j}^\mathsf{T}\right]_{cc}}{\prod_{n=1}^\ell\left(\rho_0p_j - \omega_n\right)^2}
\IEEEstrut\end{IEEEeqnarraybox}
\label{eq: total cross section WMP neutron no threshold}
\end{equation}
To the best of our knowledge, expression (\ref{eq: total cross section WMP neutron no threshold}) is the first time the exact multipole representation of no-threshold neutron cross sections is derived with the proper number of poles. It is exact and complete, in the sense that no window-by-window Laurent expansions are needed. This is only made possible in this specific case of neutron cross sections with zero threshold (no charged particles nor thresholds): though quite restrictive, it is still a case of great practical importance for nuclear reactor physics, as most heavy isotopes are evaluated with only two channels (neutron and fission) with all the other channels being eliminated under the Reich-Moore approximation. This significant difference with light isotopes (in which many more channels are explicitly treated) is partly due to the fact that for heavy isotopes the number of photon channels is large enough that one can average their contribution out, and also because the resonance region starts at lower energies for heavy isotopes, with many resonances before the first non-zero threshold. 

Note that the advantage of not needing local Laurent developments in (\ref{eq: total cross section WMP neutron no threshold}) comes at the computational cost of having to sum all the radioactive poles for each energy call, instead of accounting for the contributions of far-away poles in the Laurent expansion of each window -- in this sense, the windowing process is a form of local compression algorithm for improved efficiency \cite{Hwang_1992, Forget_2013}.

\begin{figure}[] 
  \centering
  \subfigure[\ First p-wave resonance.]{\includegraphics[width=0.49\textwidth]{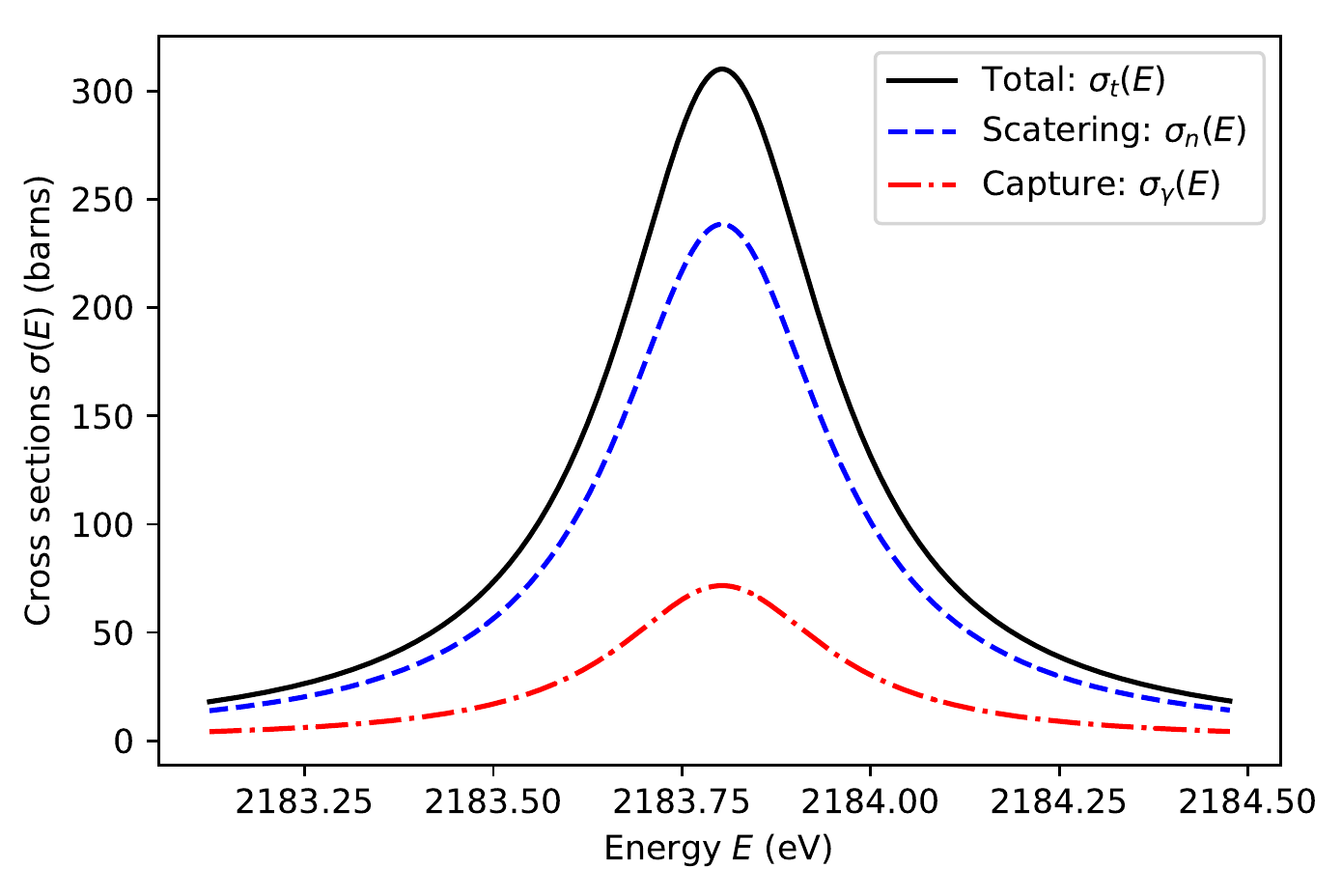}}
  \subfigure[\ Second p-wave resonance.]{\includegraphics[width=0.49\textwidth]{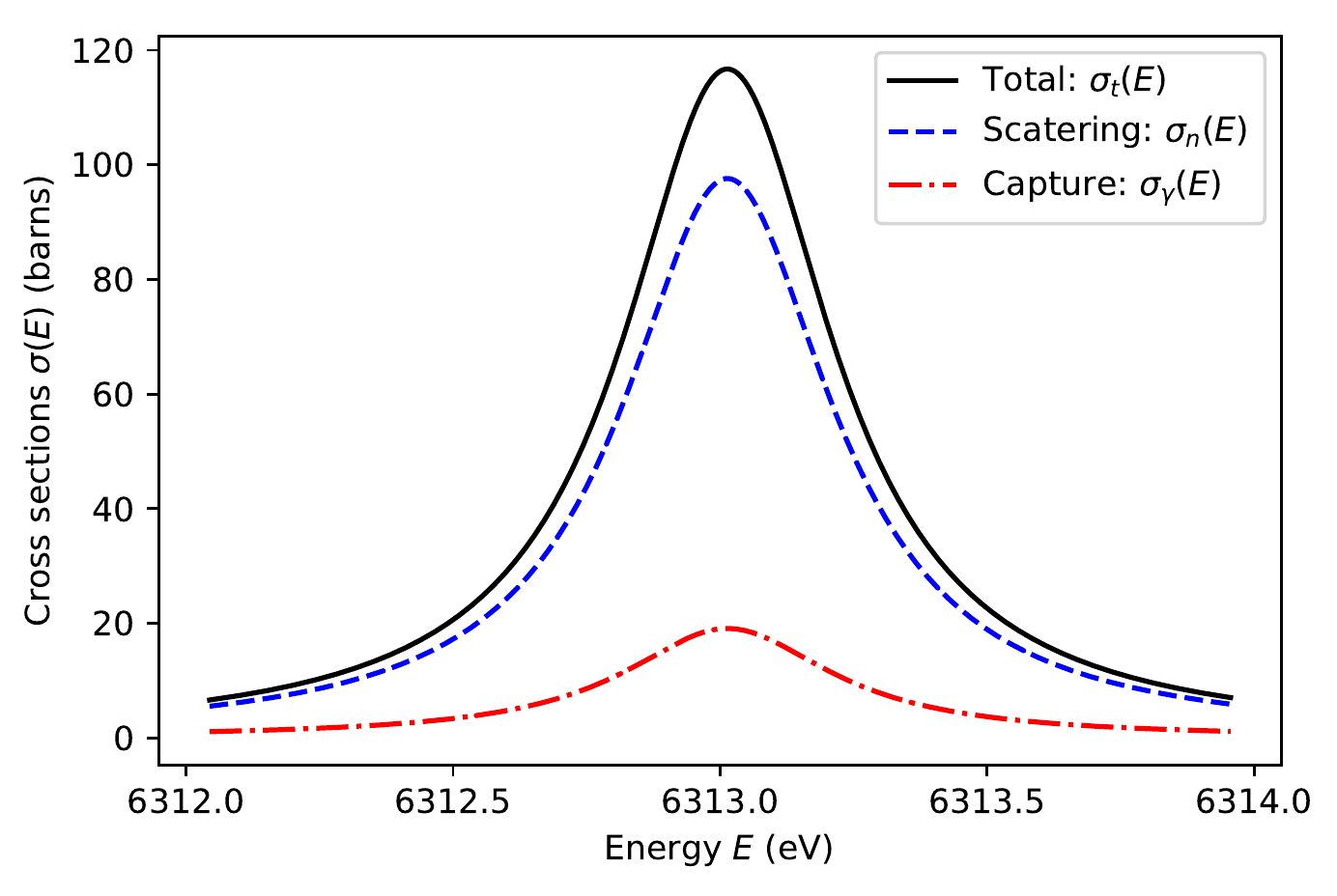}}
  \caption{\small{Xenon $^{\mathrm{134}}\mathrm{Xe}$ Reich-Moore cross sections for spin-parity group $J^\pi = 1/2^{(-)}$ p-wave resonances: the cross sections are generated using the multipole parameters from table \ref{tab:Xe-134 windowed multipole representation} in the multipole representation total cross section (\ref{eq: total cross section WMP neutron no threshold}), as well as the reaction cross section (\ref{eq: reaction cross section WMP Hwang neutron no threshold}) and interference one (\ref{eq: interference cross section WMP Hwang neutron no threshold}) to compute the scattering cross section as (\ref{eq: partial cross section breakdown}), while the capture cross section is the difference between the total and the scattering. All cross sections are identical to those computed using the Reich-Moore approximation R-matrix equations with the ENDF/B-VIII.0 resonance parameters. 
  }}
  \label{fig:xenon-134 J=1/2(-) cross section}
\end{figure}

To compute the partial cross sections (\ref{eq: partial cross section breakdown}), we can calculate the reaction cross section (\ref{eq: reaction cross section}) and the interference one (\ref{eq: interference cross section}). 
For the reaction cross section, we use the square modulus conjugate continuation (\ref{eq::|f|2 conjugate continuation}), and notice that $\left|\boldsymbol{H_+}^{-1}\right|^{-2}_\mathrm{conj}(z) = \left|\boldsymbol{d}\right|^{-2}_\mathrm{conj}(z) \triangleq \boldsymbol{\mathrm{diag}}\left(d_{\ell_c}(\rho)\cdot d_{\ell_c}(\rho^*)^*\right)^{-2}$ is now a rational function (the $\mathrm{e}^{-\mathrm{i}\rho}$ terms cancel out). Therefore, evaluating at the pole values yields the partial fraction decomposition of the square modulus of the resonance matrix:
\begin{equation}
\left|\boldsymbol{Z}\right|_\mathrm{conj}^2(z) = \Re_\mathrm{conj}\left[ \sum_{j= 1}^{N_L} \frac{\boldsymbol{\aleph_j}}{z - p_j}  \right]
\label{eq:: |Z|^2 Mittag Leffler in z-space neutron no threshold}
\end{equation}
where the residues $\boldsymbol{\aleph_j}$ are explicitly constructed as
\begin{equation}
 \boldsymbol{\aleph_j} \mkern-2mu \triangleq 2 \mkern-2mu \left|\boldsymbol{d}\right|^{-2}_\mathrm{conj}\mkern-4mu (p_j)   \boldsymbol{\rho_0} \mkern-4mu \left(  p_j^2 \boldsymbol{\kappa_j}\boldsymbol{\kappa_j}^\mathsf{T} \mkern-3mu \circ \mkern-3mu  \Big[  \boldsymbol{R}_L(p_j^*) \Big]^* \right) \mkern-4mu \boldsymbol{\rho_0}   \left|\boldsymbol{d}\right|^{-2}_\mathrm{conj}\mkern-4mu(p_j)
\label{eq:: |Z|^2 Mittag Leffler in z-space neutron no threshold residues Alpeh}
\end{equation}
In summary, the energy dependence of the residues in (\ref{eq::Z Mittag Leffler in z-space neutron no threshold}) cancels out of the reaction residues, hence the reaction cross section (from channel $c$ to $c'$) is exactly 
\begin{equation}
\begin{IEEEeqnarraybox}[][l]{rcl}
\sigma_{cc'}^{\text{react}}(z)  & \ =  \ &  \frac{1}{z^2}  \Re_{\mathrm{conj}}\left[ \sum_{j\geq 1} \frac{^{\text{react}}r_j^{cc'} }{z - p_j}  \right] 
\IEEEstrut\end{IEEEeqnarraybox}
\label{eq: reaction cross section WMP Hwang neutron no threshold}
\end{equation}
where the residues can either simply be evaluated as (\ref{eq: reaction residues WMP}) or constructed as:
\begin{equation}
\begin{IEEEeqnarraybox}[][c]{rcl}
^{\text{react}}r_j^{cc'} & \ \triangleq \ &  \frac{4\pi a_c^2  }{{\rho_0}_c^2}g_{J^\pi_c}\left[ \boldsymbol{\aleph_j}\right]_{cc'}
\IEEEstrut\end{IEEEeqnarraybox}
\label{eq:reaction cross section WMP Residues z-space neutron no threshold}
\end{equation}
For the interference cross section (\ref{eq: interference cross section}), we notice using expressions (\ref{eq: d_l and other expressions}) that the phase behavior also cancels out of $\Re\left[-\mathrm{i}Z_{cc} Y_c^*\right]$, so that plugging the resonance matrix partial fraction decomposition (\ref{eq::Z Mittag Leffler in z-space neutron no threshold diagonal element}) into interference cross section expression (\ref{eq: interference cross section}) yields rational fraction
\begin{equation}
\begin{IEEEeqnarraybox}[][l]{rcl}
\sigma_c^{\text{int}}(z) & \ = \ &  \frac{1}{z^2}\Re_\mathrm{conj}\left[\sum_{j= 1}^{N_L}\frac{ ^{\text{int}}r_j^{c} }{z - p_j} \right] 
\IEEEstrut\end{IEEEeqnarraybox}
\label{eq: interference cross section WMP Hwang neutron no threshold}
\end{equation}
where the interference residues can simply be evaluated as (\ref{eq: interference cross section WMP}), or explicitly constructed as
\begin{equation}
\begin{IEEEeqnarraybox}[][l]{rcl}
^{\text{int}}r_j^{c} & \ \triangleq \ & - \mathrm{i} 4\pi a_c^2 \frac{ g_{J^\pi_c} }{{\rho_0}_c^2}  \left| d_{\ell_c} \right|^{-2}_{\mathrm{conj}}(p_j) \cdot {\rho_0}_c p_j \left[ \boldsymbol{\kappa_j}\boldsymbol{\kappa_j}^\mathsf{T}\right]_{cc} \\
 & = & - \mathrm{i} 4\pi a_c^2 \frac{ g_{J^\pi_c} }{{\rho_0}_c^2} \frac{\left({\rho_0}_c p_j \right)^{2\ell_c + 1} \left[ \boldsymbol{\kappa_j}\boldsymbol{\kappa_j}^\mathsf{T}\right]_{cc} }{\prod_{n=1}^\ell \left( {\rho_0}_c p_j - \omega_n \right)\left( {\rho_0}_c p_j - \omega_n^* \right)}
\IEEEstrut\end{IEEEeqnarraybox}
\label{eq: interference residues WMP Hwang neutron no threshold}
\end{equation}
Having explicitly constructed the total, potential, reaction, and interference cross sections, we can thus calculate the partial cross sections explicitly through (\ref{eq: partial cross section breakdown}).

\subsubsection{\label{subsubsec:Evidence for exact multipole representation in xenon 134} Evidence for exact multipole representation in $^{\mathrm{134}}\mathrm{Xe}$}

\begin{table*}[ht!!]
\caption{\label{tab:Xe-134 windowed multipole representation} 
Windowed multipole parameters of the two p-wave resonances of $^{\mathrm{134}}\mathrm{Xe}$, spin-parity group $J^\pi = 1/2^{(-)}$, converted from ENDF/B-VIII.0 evaluation (MLBW) to multipole representation using Reich-Moore level matrix (\ref{eq:inv_A expression Reich-Moore}).}
\begin{ruledtabular}
\begin{tabular}{l}
$ z = \sqrt{E}$ with $E$ in (eV)  \tabularnewline
$A = 132.7600$ \tabularnewline
$a_c = 5.80$ : channel radius (Fermis) \tabularnewline
$\rho_0 = \frac{A a_c \sqrt{\frac{2m_n}{h}} }{A + 1} $ in ($\mathrm{\sqrt{eV}}^{-1}$), so that $\rho(z) \triangleq \rho_0 z$ \tabularnewline
with $\sqrt{\frac{2m_n}{h}} = 0.002196807122623 $ in units ($1/(10^{-14}\text{m} \mathrm{\sqrt{eV}})$) \tabularnewline
\hline \tabularnewline
\textbf{Multipole parameters} (rounded to 5 digits): converted from R-matrix parameters using Reich-Moore equations. \tabularnewline
\hline 
\begin{tabular}{l|l|l|l|l}
 Radioactive poles $p_j$  & total residues $^{\text{tot}}r_j^{c} $ & reaction residues $^{\text{react}}r_j^{cc'}$ & interference residues $ ^{\text{int}}r_j^{c}$ & Hwang residues $^{\text{Hwang}}r_j^{c}$ \tabularnewline
   ($\mathrm{\sqrt{eV}}$) from (\ref{eq: no threshold neutrons radioactive problem}) &   ($\mathrm{barns\sqrt{eV}^3}$) from (\ref{eq: total cross section WMP neutron no threshold}) &  ($\mathrm{barns\sqrt{eV}^3}$) from (\ref{eq:reaction cross section WMP Residues z-space neutron no threshold}) & ($\mathrm{barns\sqrt{eV}^3}$) from (\ref{eq: interference residues WMP Hwang neutron no threshold}) &  ($\mathrm{barns\sqrt{eV}^3}$) from (\ref{eq: total cross section WMP Hwang neutron no threshold})\tabularnewline
\hline \hline
$6.4652\mkern+2mu\mathrm{E}\mkern-5mu-\mkern-5mu8$ 
&   $6.9766\mkern+2mu\mathrm{E}\mkern-5mu + \mkern-5mu 8 $ 
&  $ 2.8519 \mkern+2mu\mathrm{E}\mkern-5mu - \mkern-5mu 2$
&  $ -2.8446 \mkern+2mu\mathrm{E}\mkern-5mu - \mkern-5mu 2$
& $-4.6048 \mkern+2mu\mathrm{E}\mkern-5mu+\mkern-5mu5 $
\tabularnewline 
$-\mathrm{i} 7.9179\mkern+2mu\mathrm{E}\mkern-5mu+\mkern-5mu2 $ 
&  $-\mathrm{i} 5.5825\mkern+2mu\mathrm{E}\mkern-5mu - \mkern-5mu 2 $
& $+\mathrm{i} 4.6048 \mkern+2mu\mathrm{E}\mkern-5mu + \mkern-5mu 5 $
& $ +\mathrm{i} 4.6048  \mkern+2mu\mathrm{E}\mkern-5mu + \mkern-5mu 5 $
& $ -\mathrm{i} 2.8446\mkern+2mu\mathrm{E}\mkern-5mu - \mkern-5mu 2 $
\tabularnewline \hline
 $-4.6731 {\mkern+2mu\mathrm{E}\mkern-5mu + \mkern-5mu 1} $ 
 & $-1.2144 \mkern+2mu\mathrm{E}\mkern-5mu + \mkern-5mu 3 $
 &  $ -1.5693 \mkern+2mu\mathrm{E}\mkern-5mu - \mkern-5mu 1$
 &  $ -1.3518 \mkern+2mu\mathrm{E}\mkern-5mu - \mkern-5mu 1$
 & $-1.2229 \mkern+2mu\mathrm{E}\mkern-5mu + \mkern-5mu 3 $
 \tabularnewline 
 $ -\mathrm{i} 9.7105 {\mkern+2mu\mathrm{E}\mkern-5mu  -   \mkern-5mu 4 }$ 
 & $ +\mathrm{i} 1.4390  \mkern+2mu\mathrm{E}\mkern-5mu + \mkern-5mu 2 $
 & $ +\mathrm{i} 1.7479  \mkern+2mu\mathrm{E}\mkern-5mu + \mkern-5mu 3 $
 & $ +\mathrm{i} 1.2229  \mkern+2mu\mathrm{E}\mkern-5mu + \mkern-5mu 3 $
 & $ -\mathrm{i} 1.3518\mkern+2mu\mathrm{E}\mkern-5mu - \mkern-5mu 1 $
 \tabularnewline \hline
  $4.6731 {\mkern+2mu\mathrm{E}\mkern-5mu + \mkern-5mu 1}$ 
  &  $ -1.2144 \mkern+2mu\mathrm{E}\mkern-5mu + \mkern-5mu 3 $
  &  $ 1.5693 \mkern+2mu\mathrm{E}\mkern-5mu - \mkern-5mu 1$
  &  $ 1.7868 \mkern+2mu\mathrm{E}\mkern-5mu - \mkern-5mu 1$
  &  $-1.2229 \mkern+2mu\mathrm{E}\mkern-5mu + \mkern-5mu 3 $
  \tabularnewline 
    $-\mathrm{i} 1.8048 {\mkern+2mu\mathrm{E}\mkern-5mu  -   \mkern-5mu 3 }$
    & $-\mathrm{i} 1.4386 \mkern+2mu\mathrm{E}\mkern-5mu + \mkern-5mu 2 $
    & $ +\mathrm{i} 9.4043 \mkern+2mu\mathrm{E}\mkern-5mu + \mkern-5mu 2 $
    & $ +\mathrm{i} 1.2229 \mkern+2mu\mathrm{E}\mkern-5mu + \mkern-5mu 3 $
    & $ +\mathrm{i} 1.7868\mkern+2mu\mathrm{E}\mkern-5mu - \mkern-5mu 1 $
    \tabularnewline \hline
 $-7.9454 {\mkern+2mu\mathrm{E}\mkern-5mu + \mkern-5mu 1} $
 &   $ -1.0827 \mkern+2mu\mathrm{E}\mkern-5mu + \mkern-5mu 3 $
 &   $ -4.2538 \mkern+2mu\mathrm{E}\mkern-5mu - \mkern-5mu 1 $
 &   $ -4.1864 \mkern+2mu\mathrm{E}\mkern-5mu - \mkern-5mu 1$
 &  $-1.1047 \mkern+2mu\mathrm{E}\mkern-5mu + \mkern-5mu 3 $
 \tabularnewline 
  $ -\mathrm{i} 1.0084 {\mkern+2mu\mathrm{E}\mkern-5mu  -   \mkern-5mu 3 }$
 &  $ +\mathrm{i} 2.1937 \mkern+2mu\mathrm{E}\mkern-5mu + \mkern-5mu 2 $
 & $ +\mathrm{i} 1.3735 \mkern+2mu\mathrm{E}\mkern-5mu + \mkern-5mu 3 $
 & $ +\mathrm{i} 1.1047 \mkern+2mu\mathrm{E}\mkern-5mu + \mkern-5mu 3 $
 & $ -\mathrm{i} 4.1864 \mkern+2mu\mathrm{E}\mkern-5mu - \mkern-5mu 1 $
 \tabularnewline \hline
  $7.9454 {\mkern+2mu\mathrm{E}\mkern-5mu + \mkern-5mu 1}$
  & $ -1.0827 \mkern+2mu\mathrm{E}\mkern-5mu + \mkern-5mu 3  $
  & $ 4.2538 \mkern+2mu\mathrm{E}\mkern-5mu - \mkern-5mu 1 $
  & $ 4.3211\mkern+2mu\mathrm{E}\mkern-5mu - \mkern-5mu 1 $
  & $-1.1047\mkern+2mu\mathrm{E}\mkern-5mu + \mkern-5mu 3 $
  \tabularnewline 
    $- \mathrm{i} 1.4991 {\mkern+2mu\mathrm{E}\mkern-5mu  -   \mkern-5mu 3 }$  
    & $ -\mathrm{i} 2.1936 \mkern+2mu\mathrm{E}\mkern-5mu + \mkern-5mu 2 $
    & $ +\mathrm{i} 9.2389 \mkern+2mu\mathrm{E}\mkern-5mu + \mkern-5mu 2 $
    & $ +\mathrm{i} 1.1047\mkern+2mu\mathrm{E}\mkern-5mu + \mkern-5mu 3 $
    & $ +\mathrm{i} 4.3211 \mkern+2mu\mathrm{E}\mkern-5mu - \mkern-5mu 1 $
\end{tabular} \\
\hline\\
\textbf{R-matrix parameters}: reference ENDF/B-VIII.0 evaluation (MLBW) used with Reich-Moore level matrix (\ref{eq:inv_A expression Reich-Moore}). \\
\hline \\
$E_1 = 2186.0$ : first resonance energy (eV) \\
$\Gamma_{1,n} = 0.2600$ : neutron width of first resonance \\ (not reduced width), i.e. $\Gamma_{\lambda,c} = 2P_c(E_\lambda) \gamma_{\lambda,c}^2$ \\
$\Gamma_{1,\gamma} = 0.0780$ : eliminated capture width (eV) \\
$E_2 = 6315.0$ : second resonance energy (eV) \\
$\Gamma_{2,n} = 0.4000$ (eV) \\
$\Gamma_{2,\gamma} = 0.0780$ (eV) \\
$g_{J^\pi} = 1/3$ : spin statistical factor \\
$B_c = - 1 $ \\
\end{tabular}
\end{ruledtabular}
\end{table*}

We discovered shadow alternative poles of $^{\mathrm{134}}\mathrm{Xe}$ spin-parity group $J^\pi = 1/2^{(-)}$ two p-wave resonances in \cite{Ducru_shadow_Brune_Poles_2019}, and found the radioactive parameters poles and residues in \cite{Ducru_Scattering_Matrix_of_Complex_Wavenumbers_2019}. We now complete this xenon trilogy by here providing the exact multipole representation of $^{\mathrm{134}}\mathrm{Xe}$ spin-parity group $J^\pi = 1/2^{(-)}$ cross section. The multipole parameters are documented in table \ref{tab:Xe-134 windowed multipole representation}, and the corresponding cross sections are plotted in figure \ref{fig:xenon-134 J=1/2(-) cross section}.
In ENDF/B-VIII.0, $^{\mathrm{134}}\mathrm{Xe}$ is a MLBW evaluation with only one explicit (neutron) channel, all other channels are eliminated using Wigner-Teichmann and Reich-Moore approximations. One can thus compute the total cross section using mutlipole representation (\ref{eq: total cross section WMP neutron no threshold}), and the scattering cross section as the partial cross section for $\sigma_{nn}(E)$ from (\ref{eq: partial cross section breakdown}), using reaction cross section (\ref{eq: reaction cross section WMP Hwang neutron no threshold}) and interference one (\ref{eq: interference cross section WMP Hwang neutron no threshold}). The capture cross section is then computed as the difference between the total and the scattering cross section.
The p-waves ($\ell_c = 1$) entail there are $N_L = 5$ radioactive poles -- validating (\ref{eq::NL number of poles no threshold neutrons}) -- and the corresponding residues are documented in table \ref{tab:Xe-134 windowed multipole representation}.
As we see in this xenon example, the multipole representation is an exact alternative formalism to compute R-matrix cross sections. Nonetheless, if we want to treat charged particle channels and thresholds, we need to use local Laurent developments in energy windows, which makes the windowed multipole representation cumbersome and somewhat unsuited for standard nuclear data libraries.

\subsubsection{\label{subsubsec:Exact to windowed multipole representations}Exact to windowed multipole representations}

Note that we can convert the exact multipole total cross section expression (\ref{eq: total cross section WMP neutron no threshold}) -- which has energy-dependent residues due to $\mathrm{e}^{-2\mathrm{i}\rho_c(z)}$ -- into the general windowed multipole representation (\ref{eq:: sigma(z) Windowed Multipole representation}), with energy-independent residues plus a Laurent expansion of order no less than $-2$.
It suffices to evaluate the residues at the pole values, and note that the Laurent expansion part $\underset{\mathrm{tot}}{\mathrm{Laur}}(z)$ is then the difference of the two remaining components, that is:
\begin{equation}
\begin{IEEEeqnarraybox}[][l]{rcl}
\underset{\mathrm{tot}}{\mathrm{Laur}}(z) & \ = \ &  \frac{1}{z^2}\Re_\mathrm{conj}\mkern-6mu\left[\sum_{j= 1}^{N_L}\frac{-\mathrm{i} \;  ^{\text{tot}}r_j^{c} }{z - p_j}  \left( \mathrm{e}^{-2\mathrm{i}{\rho_0}_c z}  - \mathrm{e}^{-2\mathrm{i}{\rho_0}_c p_j} \right) \right] \\
& \ = \ &  \Re_\mathrm{conj}\mkern-6mu\left[\sum_{j= 1}^{N_L} \frac{ ^{\text{tot}}r_j^{c}}{\mathrm{i}z^2}  \left( \sum_{n\geq 0} \frac{\left(-2\mathrm{i}{\rho_0}_c\right)^n }{n!} \frac{z^n - p_j^n}{z-p_j}\right)  \right] \\
& \ = \ & \sum_{n\geq 1} \sum_{m=1}^n \frac{z^{m}}{z^3} \Re_\mathrm{conj}\mkern-6mu\left[\sum_{j= 1}^{N_L} \frac{ ^{\text{tot}}r_j^{c}}{\mathrm{i}}  p_j^{n-m}\frac{\left(-2\mathrm{i}{\rho_0}_c\right)^n }{n!} \right] \\
& \ = \ & \sum_{n\geq -2} a_n z^n
\IEEEstrut\end{IEEEeqnarraybox}
\label{eq: Hwang's Holomorphic part WMP total cross section neutron no threshold}
\end{equation}
so that the total cross section (\ref{eq: total cross section WMP Hwang neutron no threshold}) can be expanded as
\begin{equation}
\begin{IEEEeqnarraybox}[][l]{rcl}
\sigma_c(z) & \ = \ & \sigma_c^{\text{back}}(z) +  \frac{1}{z^2}\Re_\mathrm{conj}\left[  \sum_{j= 1}^{N_L}\frac{ -\mathrm{i} \; \mathrm{e}^{-2\mathrm{i}{\rho_0}_c p_j }  \; ^{\text{tot}}r_j^{c} }{z - p_j} \right]  \\ & & + \underset{\mathrm{tot}}{\mathrm{Laur}}(z)
\IEEEstrut\end{IEEEeqnarraybox}
\label{eq: total cross section WMP Hwang neutron no threshold Laurent Expansion energy-independent residues}
\end{equation}
where the residues are now independent of energy. 
By further performing the analytic expansion of the background cross section (\ref{eq: background cross section neutron no threshold})
\begin{equation}
\begin{IEEEeqnarraybox}[][l]{rcl}
\sigma_c^{\text{back}}(z) & \triangleq & 4\pi a_c^2 \frac{g_{J^\pi_c}}{{\rho_0}_c^2}\mkern-3mu \left[ \mkern-3mu\frac{1 \mkern-3mu - \mkern-3mu (-1)^{\ell}}{2 \; z^2} \mkern-3mu + \mkern-3mu \sum_{n\geq1} \mkern-3mu \frac{(-1)^n}{2}\frac{(2{\rho_0}_c)^{2n}}{ (2n)!} \frac{z^{2n}}{z^2}\right]
\IEEEstrut\end{IEEEeqnarraybox}
\label{eq: background cross section neutron no threshold analytic expansion}
\end{equation}
one recovers the general windowed multipole representation (\ref{eq:: sigma(z) Windowed Multipole representation}).

As we see, in this special case of neutron channels without threshold, we can explicitly construct the full windowed multipole representation (\ref{eq: total cross section WMP Hwang neutron no threshold Laurent Expansion energy-independent residues}) without the need of local expansions for each energy window $\mathcal{W}(E)$. Somewhat ironically, it is also much more cumbersome to explicitly construct both the Laurent expansion and the residues, compared to the more general approaches of theorem \ref{theo::WMP Representation}.  
Alternatively, one can solve for the radioactive poles $\big\{ p_j \big\}$ through polynomial root-finding (\ref{eq: no threshold neutrons radioactive problem}), and then revert to the general methods of theorem \ref{theo::WMP Representation} to compute the corresponding residues, after what the Laurent expansions can be locally curve-fitted.

\section{\label{sec::Windowed Multipole Covariance} Windowed Multipole Covariances }

In section \ref{sec:From R-matrix to Windowed Multipole}, we established the windowed multipole representation as a general alternative way to parametrize the energy dependence of R-matrix cross sections (theorem \ref{theo::WMP Representation}).
In this section, we consider how the Windowed Multipole Representation can account for R-matrix cross section epistemic uncertainties.
Such uncertainties exist because nuclear cross sections are known from experiments, and experimental measurements always come with error-bars.
Therefore, in addition to evaluating R-matrix parameter values, evaluators add resonance parameters covariance matrices to standard nuclear data libraries (File 32 in the ENDF/B-VIII.0 library \cite{ENDFBVIII8th2018brown}), aimed at reproducing the empirical uncertainty observed in nuclear cross sections.

\subsection{\label{subsec::Converting R-matrix parameters covariances} Converting R-matrix parameters covariances}

If it exists, the covariance matrix $\mathbb{V}\mathrm{ar}\left( X \right)$ of a random vector $X$ with expectation value $\mathbb{E}\left[ X \right]$ is a defined as:
\begin{equation}
    \mathbb{V}\mathrm{ar}\left( X \right) \triangleq \mathbb{E}\left[ X X^\dagger \right] - \mathbb{E}\left[ X\right]\mathbb{E}\left[ X\right]^\dagger
\end{equation}
We denote $\big\{ \Gamma \big\}$ the set of R-matrix resonance parameters $\big\{ \Gamma \big\} \triangleq \Big\{  E_{\lambda}, \gamma_{\lambda,c} \Big\}$, which are implicitly considered to be the expectation value of the underlying distribution, $ \big\{ \Gamma \big\} \triangleq \mathbb{E}\left[\Gamma \right]$, and $\mathbb{V}\mathrm{ar}\left( \Gamma \right)$ their corresponding covariance matrices.
These represent the resonance parameters epistemic uncertainty, which is accounted for by assuming the parameters are drawn from the multivariate normal distribution: $\mathcal{N}\left( \Gamma, \mathbb{V}\mathrm{ar}\left( \Gamma \right) \right)$.
Recall that both the channel radii $a_c$ and the boundary conditions $B_c$ are arbitrarily set constants, and therefore have no uncertainty. Also, we here do not explicitly treat the uncertainty on threshold energies $E_{T_c}$, but our approach could readily be extended to them.

We consider the unified Windowed Multipole Representation of R-matrix cross sections (\ref{eq:: sigma(z) Windowed Multipole representation}), which we proved is an exact representation of R-matrix cross sections everywhere but for windows containing a non-zero threshold $E_{T_c} \in \mathcal{W}(E)$ -- in these threshold windows, form (\ref{eq:: sigma(z) Windowed Multipole representation}) is only an approximation of exact R-matrix cross sections of theorem \ref{theo::WMP Representation}, yet this approximation (\ref{eq:: sigma(z) Windowed Multipole representation}) can be made to reach any target accuracy. 
For each energy window $\mathcal{W}(E) $, we denote $\big\{ \Pi \big\}$ the windowed multipole parameters -- that is the set of poles $\big\{ p_j \big\}$, residues $\big\{ \widetilde{r}_j^{cc'}, r_j^c \big\}$, and Laurent expansion coefficients $\big\{ a_n \big\}$ that parametrize cross section (\ref{eq:: sigma(z) Windowed Multipole representation}) in that energy window:  $\big\{ \Pi \big\} \triangleq \big\{ p_j , \widetilde{r}_j^{cc'}, r_j^c  , a_n \big\}$. 

The main result of this section -- theorem \ref{theo::WMP covariance} -- establishes a framework to convert R-matrix resonance parameters covariance matrices $\mathbb{V}\mathrm{ar}\left( \Gamma \right)$ into \textit{Windowed Multipole Covariances} $\mathbb{V}\mathrm{ar}\left( \Pi \right)$. It is based on the following lemma \ref{lem::WMP sensitivities}, which derives sensitivities of R-matrix cross sections $\sigma(E)$ to both resonance parameters $\big\{ \Gamma \big\}$ and multipoles $\big\{ \Pi \big\}$. 

\begin{lem}\label{lem::WMP sensitivities}\textsc{Cross sections parameter sensitives}\\ 
Let $z \in \mathbb{C}$ be the complex, analytically continued square-root-of-energy: $z = \sqrt{E}$.
Consider Windowed Multipole cross section (\ref{eq:: sigma(z) Windowed Multipole representation}), i.e. locally of the form:
\begin{equation*}
\begin{IEEEeqnarraybox}[][c]{C}
     \sigma(z)  \underset{\mathcal{W}(E)}{=} \sum_{n\geq -2} a_n z^n + \frac{1}{z^2}\Re_{\mathrm{conj}}\left[\sum_{j \geq 1} \frac{r_{j}}{z-p_j}\right]
\IEEEstrut\end{IEEEeqnarraybox}
\end{equation*}
We recall the Cauchy-Poincar\'e-Wirtinger holomorphic complex differential definition for $z = x + \mathrm{i} y $, $x , y \in \mathbb{R}$
 \begin{equation}
      \partial_z \triangleq \frac{1}{2} \left(  \partial_x - \mathrm{i} \partial_y \right)
  \end{equation}
so that $\partial_z z = 1$, and $\partial_z z^* = 0$, where $z^* \triangleq x - \mathrm{i}y$.

The cross section sensitivities to multipoles $\frac{\partial\sigma}{\partial \Pi}(z)$ (i.e. the partial differentials of the cross section with respect to multipoles) are then given, for each window $\mathcal{W}(E)$, by:
\begin{equation}
\begin{IEEEeqnarraybox}[][c]{rClcrcl}
      \frac{\partial \sigma}{\partial p_j }(z) & \mkern-3mu = \mkern-3mu & \frac{1}{z^2} \frac{ \frac{r_{j}}{2} }{\left(z-p_j\right)^2} &  ,  &
      \frac{\partial \sigma}{\partial p_j^* }(z) & = & \frac{1}{z^2} \frac{ \frac{r_{j}^*}{2} }{\left(z-p_j^*\right)^2} \\
      \frac{\partial \sigma (z)}{\partial \Re\left[p_j\right] } & \mkern-3mu = \mkern-3mu & \frac{1}{z^2} \Re_{\mathrm{conj}}\mkern-6mu\left[ \mkern-3mu \frac{r_{j} }{\left(z-p_j\right)^2} \mkern-3mu \right]  &  ,  &
      \frac{\partial \sigma (z) }{\partial \Im\left[p_j\right] } & = & \frac{1}{z^2} \Re_{\mathrm{conj}}\mkern-6mu\left[\mkern-3mu  \frac{\mathrm{i} \; r_{j} }{\left(z-p_j\right)^2} \mkern-3mu \right] 
      \\
      \frac{\partial \sigma}{\partial r_j }(z) & \mkern-3mu = \mkern-3mu & \frac{1}{z^2} \frac{ \frac{1}{2} }{z-p_j} &  ,  &
      \frac{\partial \sigma}{\partial r_j^* }(z) & = & \frac{1}{z^2} \frac{ \frac{1}{2} }{z-p_j^*} \\
      \frac{\partial \sigma (z)}{\partial \Re\left[r_j\right] } & \mkern-3mu = \mkern-3mu & \frac{1}{z^2} \Re_{\mathrm{conj}}\mkern-6mu\left[ \mkern-3mu \frac{1 }{z-p_j} \mkern-3mu \right]  &  ,  &
      \frac{\partial \sigma (z)}{\partial \Im\left[r_j\right] } & = & \frac{1}{z^2} \Re_{\mathrm{conj}}\mkern-6mu\left[ \mkern-3mu \frac{\mathrm{i} }{z-p_j} \mkern-3mu \right] \\
      \frac{\partial \sigma}{\partial a_n }(z) & \mkern-3mu = \mkern-3mu & z^n  & & & & 
\label{eq:: d sigma(z) / d Pi}
\IEEEstrut\end{IEEEeqnarraybox}
\end{equation}
Moreover, the cross section sensitivities to resonance parameters $\frac{\partial\sigma}{\partial \Gamma}(z)$ (i.e. the partial differentials of the cross section with respect to resonance parameters) are subject to the following multipole representation:
\begin{equation}
\begin{IEEEeqnarraybox}[][c]{rCl}
      \frac{\partial \sigma}{\partial \Gamma }(z) & \underset{\mathcal{W}(E)}{=} &  \frac{1}{z^2} \Re_{\mathrm{conj}}\left[ \sum_{j\geq1}\frac{\left(\frac{\partial r_{j}}{\partial \Gamma} \right)}{z-p_j} + \frac{\left(r_{j} \cdot \frac{\partial p_{j}}{\partial \Gamma} \right)}{\left(z-p_j\right)^2} \right]\\
       & & + \sum_{n\geq -2}\left(\frac{\partial a_n}{\partial \Gamma} \right) z^n 
\label{eq:: d sigma(z) / d Gamma double poles}
\IEEEstrut\end{IEEEeqnarraybox}
\end{equation}
\end{lem}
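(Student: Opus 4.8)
The plan is to establish the two displays of the lemma in turn: first the multipole sensitivities (\ref{eq:: d sigma(z) / d Pi}), which are obtained by differentiating the closed form (\ref{eq:: sigma(z) Windowed Multipole representation}) term by term in the Wirtinger sense; then the resonance-parameter sensitivities (\ref{eq:: d sigma(z) / d Gamma double poles}), which follow by the multivariate chain rule once one observes that within a fixed window $\sigma$ depends on $\Gamma$ only through the multipoles $\Pi = \{p_j, r_j, a_n\}$.

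For the first display I would begin by unfolding the conjugate-continuation real part via its definition (\ref{eq::continued conjugate real part}), so that on the window $\sigma(z) = \sum_{n\geq -2} a_n z^n + \frac{1}{2z^2}\left(\sum_{j\geq1}\frac{r_j}{z-p_j} + \sum_{j\geq1}\frac{r_j^*}{z-p_j^*}\right)$. The two sums are respectively holomorphic in the parameters $\{p_j, r_j\}$ and holomorphic in $\{p_j^*, r_j^*\}$; treating these as independent Wirtinger variables (the analogue of the reminder $\partial_z z^* = 0$ being $\partial_{p_j} p_j^* = 0$), differentiation with respect to $p_j$ annihilates the second sum and leaves $\frac{1}{z^2}\cdot\frac12\,\partial_{p_j}\frac{r_j}{z-p_j} = \frac{1}{z^2}\frac{r_j/2}{(z-p_j)^2}$, with the conjugate statement for $\partial_{p_j^*}\sigma$ following identically. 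The residue derivatives come the same way from $\partial_{r_j}\frac{r_j}{z-p_j} = \frac{1}{z-p_j}$, and $\partial_{a_n}\sigma = z^n$ is immediate. The real- and imaginary-part derivatives are then recovered from the Wirtinger identities $\partial_{\Re[p_j]} = \partial_{p_j} + \partial_{p_j^*}$ and $\partial_{\Im[p_j]} = \mathrm{i}\left(\partial_{p_j} - \partial_{p_j^*}\right)$ (and likewise for $r_j$): adding a holomorphic term to its conjugate continuation is precisely $2\Re_{\mathrm{conj}}$ of that term, since for any $g$ one has $\Re_{\mathrm{conj}}[g](z) = \frac12\left(g(z) + g(z^*)^*\right)$ and, for $g(z) = r_j/(z-p_j)^2$, the conjugate-continued term is exactly $r_j^*/(z-p_j^*)^2$; the $\Im$ case carries the extra factor $\mathrm{i}$ inside $\Re_{\mathrm{conj}}$ because $(\mathrm{i}g)(z^*)^* = -\mathrm{i}\,g(z^*)^*$. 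This reproduces every line of (\ref{eq:: d sigma(z) / d Pi}).

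For the second display the key structural fact, established in Section \ref{sec:From R-matrix to Windowed Multipole}, is that the Windowed Multipole form (\ref{eq:: sigma(z) Windowed Multipole representation}) is a faithful reparametrization of the R-matrix cross section, so within a fixed window $\mathcal{W}(E)$ the dependence of $\sigma$ on the resonance parameters $\Gamma = \{E_\lambda, \gamma_{\lambda,c}\}$ is entirely channelled through $\{p_j, r_j, a_n\}$ and their conjugates. Applying the chain rule, $\partial_\Gamma\sigma = \sum_{j\geq1}\left(\partial_{p_j}\sigma\,\partial_\Gamma p_j + \partial_{p_j^*}\sigma\,\partial_\Gamma p_j^* + \partial_{r_j}\sigma\,\partial_\Gamma r_j + \partial_{r_j^*}\sigma\,\partial_\Gamma r_j^*\right) + \sum_{n\geq-2}\partial_{a_n}\sigma\,\partial_\Gamma a_n$. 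Because the resonance parameters are real, $\partial_\Gamma p_j^* = \left(\partial_\Gamma p_j\right)^*$ and $\partial_\Gamma r_j^* = \left(\partial_\Gamma r_j\right)^*$, so each holomorphic/anti-holomorphic pair recombines into $\frac{1}{z^2}\Re_{\mathrm{conj}}$ of a single term by the identity just used: the pole-location pair yields $\frac{1}{z^2}\Re_{\mathrm{conj}}\left[\frac{r_j\,\partial_\Gamma p_j}{(z-p_j)^2}\right]$, the residue pair yields $\frac{1}{z^2}\Re_{\mathrm{conj}}\left[\frac{\partial_\Gamma r_j}{z-p_j}\right]$, and the Laurent term contributes $\sum_{n\geq-2}(\partial_\Gamma a_n)z^n$ directly ($a_n$ being real on the real axis). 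Summing over $j$ and $n$ gives (\ref{eq:: d sigma(z) / d Gamma double poles}).

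The hard part will be organizational rather than conceptual: one must be disciplined about the fact that the conjugate continuation $f\mapsto f^*$ is anti-holomorphic in $z$ yet irrelevant here, since all derivatives are taken with respect to parameters at fixed complex $z$, and about treating $p_j, p_j^*$ (and $r_j, r_j^*$) as independent Wirtinger variables so that the pattern ``holomorphic term plus its conjugate continuation equals $2\Re_{\mathrm{conj}}$'' applies uniformly. It is also worth flagging explicitly that these sensitivities leave the simple-pole class, acquiring second-order poles $(z-p_j)^{-2}$; this is expected and poses no difficulty for the covariance construction of theorem \ref{theo::WMP covariance}. Finally, the second display presupposes differentiability of the map $\Gamma\mapsto\Pi$; this follows from the implicit function theorem applied to the radioactive eigenproblem (\ref{eq:R_L radioactive problem in z-space}) at simple, non-degenerate poles, and the explicit evaluation of $\partial_\Gamma p_j$ and $\partial_\Gamma r_j$ that makes (\ref{eq:: d sigma(z) / d Gamma double poles}) computable is carried out in the theorem that follows.
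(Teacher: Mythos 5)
Your proposal is correct and follows the same route the paper implicitly takes: Lemma~\ref{lem::WMP sensitivities} is given no separate proof there, being treated as exactly the computation you spell out — unfold $\Re_{\mathrm{conj}}$ into the holomorphic term plus its conjugate continuation, differentiate term by term in the Wirtinger sense, recover the $\Re/\Im$ derivatives from $\partial_{\Re} = \partial + \partial^*$ and $\partial_{\Im} = \mathrm{i}(\partial - \partial^*)$, and obtain the $\Gamma$-sensitivities by the chain rule using the reality of $\Gamma$ to pair each term with its conjugate. One stray remark to fix: $f^*(z) \triangleq f(z^*)^*$ is holomorphic (meromorphic) in $z$, not anti-holomorphic — though, as you note, this plays no role since all derivatives are taken with respect to parameters at fixed $z$.
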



We seek to convert R-matrix resonance parameters covariances $\mathbb{V}\mathrm{ar}\left( \Gamma \right)$ into Windowed Multipole covariances $\mathbb{V}\mathrm{ar}\left( \Pi \right)$. Yet obtaining multipoles $\big\{ \Pi \big\}$ from resonance parameters $\big\{ \Gamma \big\}$ is not a simple transformation: one must solve the radioactive problem (\ref{eq:R_L radioactive problem in z-space}) for the poles $\big\{p_j\big\}$ and then compute the corresponding residues $\big\{\widehat{r}_j^{cc'}, r_j^c\big\}$ (theorem \ref{theo::WMP Representation}). 
We therefore take an implicit functions approach, and locally invert the $\Gamma \to \Pi$ transformation by means of the Jacobian matrix $\left(\frac{\partial \Pi}{ \partial \Gamma}\right)$, that is the sensitivities of windowed multipole coefficients to the R-matrix resonance parameters (Cauchy-Dini implicit functions theorem). 
Under the assumption of small deviations from the mean (small relative uncertainties), this yields a first-order linear relation from multipoles $\big\{ \Pi \big\}$ to resonance parameters $\big\{ \Gamma \big\}$.
In which case, the chain rule entails the multipoles $\big\{ \Pi \big\}$ are also subject to a multivariate normal distribution $\mathcal{N}\left( \Pi, \mathbb{V}\mathrm{ar}\left( \Pi \right) \right)$, the covariance of which is given by (\ref{eq:: Var(Π) from Var(Γ)}) (sometimes called ``sandwich rule'').
Therefore, the key to converting resonance covariances $\mathbb{V}\mathrm{ar}\left( \Gamma \right)$ into multipole covariances $\mathbb{V}\mathrm{ar}\left( \Pi \right)$ lies in the sensitivities $\left(\frac{\partial \Pi}{ \partial \Gamma}\right)$.
Theorem \ref{theo::WMP covariance} establishes a contour-integrals method to calculate these sensitivities $\left(\frac{\partial \Pi}{ \partial \Gamma}\right)$, provided R-matrix cross sections sensitivities $\frac{\partial \sigma}{\partial \Gamma }(E)$ from lemma \ref{lem::WMP sensitivities}.

\begin{theorem}\label{theo::WMP covariance}\textsc{Windowed Multipole Covariances} \\
Let us be provided with the sensitivities $\frac{\partial\sigma}{\partial \Gamma}(z)$ of R-matrix cross sections (analytically continued) to resonance parameters (\ref{eq:: d sigma(z) / d Gamma double poles}).
Then the multipole sensitivities (Jacobian matrix) with respect to the resonance parameters, $\left(\frac{\partial \Pi}{\partial \Gamma}\right)$, can be obtained from the following system (\ref{eq:: dΠ/dΓ contour integrals system}) of contour integrals in the complex plane, where $\mathfrak{C}_{p_j}$ designates any positively oriented simple closed contour containing only pole $p_j$. For instance, $\mathfrak{C}_{p_j}$ can be a circle of radius $\epsilon > 0 $ around pole $p_j$.
\begin{widetext}
\begin{equation}
\begin{IEEEeqnarraybox}[][c]{rCl}
    \frac{1}{2}\frac{r_{j}}{p_j^2} \cdot \left(\frac{\partial p_{j}}{\partial \Gamma} \right)  & = &   \frac{1}{2\pi\mathrm{i}}\oint_{\mathfrak{C}_{p_j}}(z-p_j)\frac{\partial \sigma}{\partial \Gamma }(z) \mathrm{d}z = \frac{\epsilon^2}{2\pi}\int_{\theta=0}^{2\pi} \frac{\partial \sigma}{\partial \Gamma }(p_j+\epsilon\mathrm{e}^{\mathrm{i}\theta})\mathrm{e}^{2\mathrm{i}\theta}\mathrm{d}\theta \\
    \frac{1}{2}\frac{1}{p_j^2} \cdot \left( \frac{\partial r_{j}}{\partial \Gamma} \right) - \frac{r_{j}}{p_j^3} \cdot \left(\frac{\partial p_{j}}{\partial \Gamma} \right) & = & \frac{1}{2\pi\mathrm{i}}\oint_{\mathfrak{C}_{p_j}}\frac{\partial \sigma}{\partial \Gamma }(z) \mathrm{d}z = \frac{\epsilon}{2\pi}\int_{\theta=0}^{2\pi}\frac{\partial \sigma}{\partial \Gamma }(p_j+\epsilon\mathrm{e}^{\mathrm{i}\theta})\mathrm{e}^{\mathrm{i}\theta}\mathrm{d}\theta \\
    \left(\frac{\partial a_n}{\partial \Gamma} \right)  +  \delta_{-1,n}  \Re_{\mathrm{conj}}\left[ \sum_{j=1}^{N_L}  \frac{2\left(r_{j} \cdot \frac{\partial p_{j}}{\partial \Gamma} \right) - p_j\left(\frac{\partial r_{j}}{\partial \Gamma} \right)}{p_j^3} \right]  & + & \delta_{-2,n} \Re_{\mathrm{conj}}\left[ \sum_{j=1}^{N_L} \frac{ \left(r_{j} \cdot \frac{\partial p_{j}}{\partial \Gamma} \right) - p_j \left(\frac{\partial r_{j}}{\partial \Gamma} \right)}{p_j^2} \right] \\ & = &  \frac{1}{2\pi\mathrm{i}}\oint_{\mathfrak{C}_{0}} \frac{1}{z^{n+1}}\frac{\partial \sigma}{\partial \Gamma }(z) \mathrm{d}z = \frac{1}{2\pi \epsilon^n}\int_{\theta=0}^{2\pi}\frac{\partial \sigma}{\partial \Gamma }(\epsilon\mathrm{e}^{\mathrm{i}\theta})\mathrm{e}^{-\mathrm{i}n\theta}\mathrm{d}\theta
\IEEEstrut\end{IEEEeqnarraybox} 
\label{eq:: dΠ/dΓ contour integrals system}
\end{equation}
\end{widetext}
For each energy window $\mathcal{W}(E)$, the multipole sensitivities $\left(\frac{\partial \Pi}{ \partial \Gamma}\right)$ from system (\ref{eq:: dΠ/dΓ contour integrals system}) can then be converted to first order into Windowed Multipole covariances $\mathbb{V}\mathrm{ar}\left( \Pi \right)$ as:
\begin{equation}
\begin{IEEEeqnarraybox}[][l c r]{l C r}
      \mathbb{V}\mathrm{ar}\left( \Pi \right) = \left(\frac{\partial \Pi}{ \partial \Gamma}\right) \mathbb{V}\mathrm{ar}\left( \Gamma \right)\left(\frac{\partial \Pi}{ \partial \Gamma}\right)^\dagger
\label{eq:: Var(Π) from Var(Γ)}
\IEEEstrut\end{IEEEeqnarraybox}
\end{equation}
where $\left[ \; \cdot \; \right]^\dagger$ designates the Hermitian conjugate (adjoint). 
\end{theorem}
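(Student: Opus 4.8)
\emph{Proof plan.} The plan is to read the system (\ref{eq:: dΠ/dΓ contour integrals system}) as a term-by-term inversion of the multipole representation (\ref{eq:: d sigma(z) / d Gamma double poles}) of the cross-section sensitivity $\partial_\Gamma\sigma(z)$ supplied by lemma \ref{lem::WMP sensitivities}, exploiting that Cauchy's residue theorem isolates individual pole- and Laurent-coefficients; and then to obtain the covariance identity (\ref{eq:: Var(Π) from Var(Γ)}) by first-order error propagation.

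First I would fix one component of $\Gamma$ and one window $\mathcal{W}(E)$, and use that, by (\ref{eq:: d sigma(z) / d Gamma double poles}), near each pole $p_j$ the function $\partial_\Gamma\sigma(z)$ has the local structure $\frac{1}{z^{2}}\cdot\frac12\bigl[\frac{\partial_\Gamma r_j}{z-p_j}+\frac{r_j\,\partial_\Gamma p_j}{(z-p_j)^{2}}\bigr]$ plus a part holomorphic at $p_j$ (the twin pole at $p_j^{*}$, the other $p_k$, and $1/z^{2}$ being all regular there, since $p_j\neq p_j^{*}$ and $p_j\neq 0$). Multiplying by $(z-p_j)$ collapses the double pole to a simple one with residue $\frac{r_j\,\partial_\Gamma p_j}{2p_j^{2}}$, so integrating over $\mathfrak{C}_{p_j}$ yields the first line of (\ref{eq:: dΠ/dΓ contour integrals system}); integrating $\partial_\Gamma\sigma$ itself requires the simple-pole residue $\frac{\partial_\Gamma r_j}{2p_j^{2}}$ plus the double-pole residue $\frac{d}{dz}\bigl[\frac{r_j\,\partial_\Gamma p_j}{2z^{2}}\bigr]_{z=p_j}=-\frac{r_j\,\partial_\Gamma p_j}{p_j^{3}}$, whose sum is the second line. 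Parametrizing $\mathfrak{C}_{p_j}$ as the circle $z=p_j+\epsilon\mathrm{e}^{\mathrm{i}\theta}$ then produces the stated real-integral forms. For each $j$ these two lines constitute a lower-triangular $2\times2$ linear system in $(\partial_\Gamma p_j,\partial_\Gamma r_j)$ with determinant $\frac{r_j}{4p_j^{4}}$, hence uniquely solvable whenever $r_j\neq 0$.

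For the third line I would simply differentiate, with respect to $\Gamma$, the already-established Laurent-coefficient contour formulas (\ref{eq: WMP Laurent expansion coefficients by contour integrals}) of theorem \ref{theo::WMP Representation}: since $\partial_\Gamma$ commutes with the fixed contour $\mathfrak{C}_{0}$, the $n\geq 0$ cases give $\partial_\Gamma a_n=\frac{1}{2\pi\mathrm{i}}\oint_{\mathfrak{C}_0}z^{-n-1}\partial_\Gamma\sigma\,\mathrm{d}z$ directly, while for $n=-2$ and $n=-1$ the product rule applied to the extra terms $\Re_{\mathrm{conj}}[\sum_j r_j/p_j]$ and $\Re_{\mathrm{conj}}[\sum_j r_j/p_j^{2}]$ present in (\ref{eq: WMP Laurent expansion coefficients by contour integrals}) reproduces exactly the $\delta_{-2,n}$ and $\delta_{-1,n}$ correction terms of (\ref{eq:: dΠ/dΓ contour integrals system}) (e.g. $\partial_\Gamma(r_j/p_j)=\partial_\Gamma r_j/p_j-r_j\partial_\Gamma p_j/p_j^{2}$). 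Once $\partial_\Gamma p_j$ and $\partial_\Gamma r_j$ are known from the first two lines, this line determines $\partial_\Gamma a_n$, completing the Jacobian $\bigl(\frac{\partial\Pi}{\partial\Gamma}\bigr)$. That this Jacobian exists at all — i.e. that $\Gamma\mapsto\Pi$ is locally $C^{1}$ — is the content of the Cauchy--Dini implicit function theorem invoked before the theorem, applied to the radioactive problem (\ref{eq:R_L radioactive problem in z-space}) at a non-degenerate (simple) pole.

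Finally, (\ref{eq:: Var(Π) from Var(Γ)}) is the standard ``sandwich'' rule: assuming the resonance parameters are multivariate normal about their mean with small relative spread, Taylor-expand $\Pi-\mathbb{E}[\Pi]\simeq\bigl(\frac{\partial\Pi}{\partial\Gamma}\bigr)(\Gamma-\mathbb{E}[\Gamma])$, insert this into $\mathbb{V}\mathrm{ar}(\Pi)=\mathbb{E}\bigl[(\Pi-\mathbb{E}\Pi)(\Pi-\mathbb{E}\Pi)^{\dagger}\bigr]$, and pull the deterministic Jacobian outside the expectation; the image of a Gaussian under this affine map being Gaussian also gives $\Pi\sim\mathcal{N}(\Pi,\mathbb{V}\mathrm{ar}(\Pi))$ to first order. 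The step I expect to be most delicate is the residue bookkeeping around the \emph{double} poles of (\ref{eq:: d sigma(z) / d Gamma double poles}) together with the companion $z=0$ corrections — getting the coefficients of the $\delta_{-1,n}$ and $\delta_{-2,n}$ terms exactly right — and, conceptually, verifying that the per-pole $2\times2$ systems are never singular, so that one never divides by a vanishing residue $r_j$ and the multipole sensitivities are genuinely well defined.
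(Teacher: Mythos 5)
Your proposal is correct and follows essentially the same route as the paper's proof: both amount to applying Cauchy's residue theorem to the Mittag--Leffler/partial-fraction form of $\frac{\partial\sigma}{\partial\Gamma}(z)$ from lemma \ref{lem::WMP sensitivities} (the paper writes out the global partial-fraction expansion explicitly, whereas you compute the same residues locally at each $p_j$ and obtain the $\partial_\Gamma a_n$ line by differentiating (\ref{eq: WMP Laurent expansion coefficients by contour integrals}) under the fixed contour $\mathfrak{C}_0$ --- a purely organizational difference), followed by the standard first-order sandwich rule for (\ref{eq:: Var(Π) from Var(Γ)}). Your added check that the per-pole $2\times 2$ system is triangular with determinant $\frac{r_j}{4p_j^4}$, hence solvable whenever $r_j\neq 0$ and $p_j\neq 0$, is a worthwhile point the paper leaves implicit.
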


\begin{proof}
Partial fraction expansion of (\ref{eq:: d sigma(z) / d Gamma double poles}) lemma \ref{lem::WMP sensitivities} yields
\begin{equation*}
\begin{IEEEeqnarraybox}[][c]{rCr}
      \frac{\partial \sigma}{\partial \Gamma }(z) &  \underset{\mathcal{W}(E)}{=} &   \Re_{\mathrm{conj}}\left[ \sum_{j=1}^{N_L} \frac{ \left(r_{j} \cdot \frac{\partial p_{j}}{\partial \Gamma} \right) - p_j \left(\frac{\partial r_{j}}{\partial \Gamma} \right)}{p_j^2z^2}   \right. \\
      & & + \frac{\left(r_{j} \cdot \frac{\partial p_{j}}{\partial \Gamma} \right)}{p_j^2(z-p_j)^2} +    \frac{2\left(r_{j} \cdot \frac{\partial p_{j}}{\partial \Gamma} \right) - p_j\left(\frac{\partial r_{j}}{\partial \Gamma} \right)}{p_j^3z}  \\
      & &  +  \left.   \frac{ p_j\left(\frac{\partial r_{j}}{\partial \Gamma} \right) - 2 \left(r_{j} \cdot \frac{\partial p_{j}}{\partial \Gamma} \right)}{p_j^3(z-p_j)} \right] \\
      & & + \sum_{n\geq -2} \left(\frac{\partial a_n}{\partial \Gamma} \right) z^n
\label{eq:: d sigma(z) / d Gamma (z) partial fraction approach}
\IEEEstrut\end{IEEEeqnarraybox}
\end{equation*}
The different residues associated with poles $0$ or $p_j$ are then obtained by invoking Cauchy's residue theorem and multiplying correspondingly by $z^n$ or $(z-p_j)$, yielding (\ref{eq:: dΠ/dΓ contour integrals system}).
Importantly, these contour integrals cannot be performed without having an analytic representation of the partial derivatives of the cross section at complex energies, $\frac{\partial \sigma}{\partial \Gamma }(z)$, which is made possible for open channels by Hwang's conjugate continuations (\ref{eq::|f|2 conjugate continuation}) and (\ref{eq::continued conjugate real part}).
Finally, (\ref{eq:: Var(Π) from Var(Γ)}) is a direct application of the well-known chain-rule first-order perturbation covariance formula.
\end{proof}

\begin{figure}[h] 
  \centering
  \includegraphics[width=0.49\textwidth]{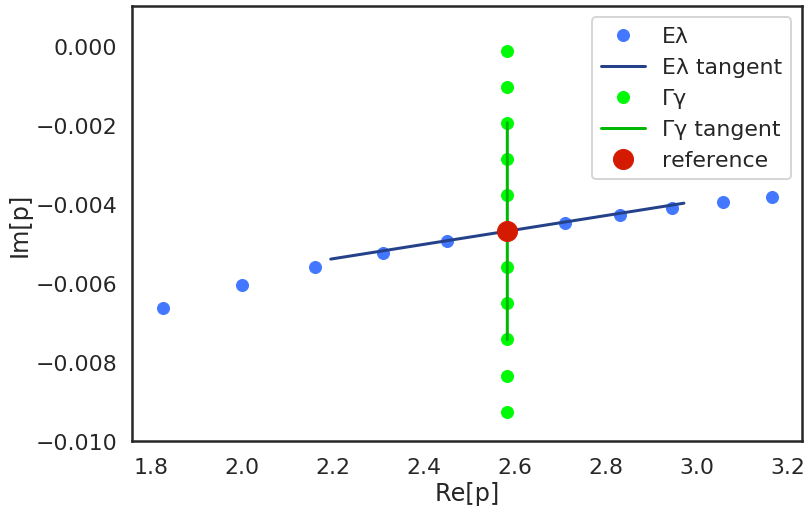}
  \caption{\small{Multipole sensitivities to R-matrix parameters $\left(\frac{\partial \Pi}{\partial \Gamma}\right)$. Trajectories of pole $p$ as resonance parameters $\left\{ \Gamma \right\} $ vary, using the SLBW approximation of the first resonance of $^{\mathrm{238}}\mathrm{U}$ (appendix \ref{appendix: Single Breit-Wigner resonance}). The blue points show how the pole changes as $E_\lambda$ is varied with equal spacing within 3 standard deviations of the enlarged covariance matrix, while the green points result from equally spaced variations of $\Gamma_\gamma$ within their uncertainty range (3 standard deviations of enlarged covariance matrix). The Jacobian $\left(\frac{\partial \Pi}{\partial \Gamma}\right)$ from system (\ref{eq:: dΠ/dΓ contour integrals system}) are the tangents of these trajectories from the mean pole $p$ (red reference point) and are shown in solid lines. Complex pole $p$ units are ($\sqrt{\mathrm{eV}}$). }}
  \label{fig: trajectores for senstivities}
\end{figure}

\subsection{\label{subsec::Cross section uncertainties and parameter covariances} Cross section uncertainties and parameter covariances}

By introducing resonance covariances $\mathbb{V}\mathrm{ar}\left( \Gamma \right)$, present standard nuclear data libraries are built with the implicit assumption that sampling resonance parameters from a multivariate normal distribution $\mathcal{N}\left( \Gamma , \mathbb{V}\mathrm{ar}\left( \Gamma \right)\right)$ and computing the corresponding cross sections 
$\sigma_\Gamma(E)$ generates outcome distributions commensurate to our experimental uncertainty. 
Note that this parameter uncertainty representation is not obvious \textit{in se}, because cross sections are measured at specific energies, and the measured cross section uncertainty is usually described with a given exogenous distribution (say normal, log-normal, or exponential), dictated by the experiment. 
Therefore, no parameter distribution (be it resonance parameters multivariate normal $\mathcal{N}\left( \Gamma , \mathbb{V}\mathrm{ar}\left( \Gamma \right)\right)$ or any other) can exactly reproduce the cross section uncertainty for each measurement energy.
And yet, these parameters distributions are our best way of balancing all the different uncertainties from disjointed experiments with the underlying R-matrix theory which unifies our understanding of nuclear interactions physics. 

Significant work has been carried out to infer parameter distributions that accurately reproduce our uncertainty of nuclear cross sections \cite{Frohner_Jeff_2000, baugeEvaluationCovarianceMatrix2007,capoteNewFormulationUnified2012,koningBayesianMonteCarlo2015, koningBayesianMonteCarlo2015a, helgessonCombiningTotalMonte2017,rochmanMonteCarloNuclear2018,alhassanBayesianUpdatingData2020,rochmanStatisticalAnalysisEvaluated2020}.
Assuming R-matrix cross section uncertainty is well represented by the resonance parameters multivariate normal distribution $\mathcal{N}\left( \Gamma , \mathbb{V}\mathrm{ar}\left( \Gamma \right)\right)$ documented in standard nuclear data libraries (file 32 in ENDF/B-VIII.0 \cite{ENDFBVIII8th2018brown}), there are two ways of translating this into cross section distributions: 1) first-order sensitivity propagation, or; 2) stochastic cross sections. 
\begin{enumerate}
    \item For any given energy $E$, first-order sensitivity propagation simply considers the R-matrix cross section sensitivities to resonance parameters $\frac{\partial \sigma}{\partial \Gamma }(E)$ and linearly converts the resonance parameter covariance $\mathbb{V}\mathrm{ar}\left( \Gamma \right)$ into a cross section covariance $\mathbb{V}\mathrm{ar}\left( \sigma(E) \right)$ at each energy $E$, using the chain rule:
    \begin{equation}
        \mathbb{V}\mathrm{ar}\left( \sigma_\Gamma(E) \right) = \left(\frac{\partial \sigma(E)}{\partial \Gamma }\right)\mathbb{V}\mathrm{ar}\left( \Gamma \right)\left(\frac{\partial \sigma(E)}{\partial \Gamma }\right)^\dagger
        \label{eq: sensitivity covariance of cross section from resonance parameters}
    \end{equation}
    The same approach can be undertaken using R-matrix cross section sensitivities to windowed multipoles $\frac{\partial \sigma}{\partial \Pi }(E)$, established in equations (\ref{eq:: d sigma(z) / d Pi}) of lemma \ref{lem::WMP sensitivities}, and then propagating the windowed multipole covariances $\mathbb{V}\mathrm{ar}\left( \Pi \right)$ to first order, yielding cross section covariances
    \begin{equation}
        \mathbb{V}\mathrm{ar}\left( \sigma_\Pi(E) \right) = \left(\frac{\partial \sigma(E)}{\partial \Pi }\right)\mathbb{V}\mathrm{ar}\left( \Pi \right)\left(\frac{\partial \sigma(E)}{\partial \Pi }\right)^\dagger
        \label{eq: sensitivity covariance of cross section from WMP}
    \end{equation}
    \item Stochastic cross sections consist of sampling resonance parameters $\big\{ \Gamma \big\}$ from their uncertainty distribution -- say multivariate normal $\mathcal{N}\left( \Gamma , \mathbb{V}\mathrm{ar}\left( \Gamma \right)\right)$ -- and computing the corresponding cross section $\sigma_\Gamma(E)$ as a function of energy
    \begin{equation}
        \mathrm{d}\mathbb{P}\Big(\sigma_\Gamma(E) \Big) = \sigma_{\mathrm{d}\mathbb{P}\left(\Gamma\right)}(E)  
        \label{eq: Stochastic cross sections resonance parameters}
    \end{equation}
    Alternatively, one could sample multipoles  $\big\{ \Pi \big\}$ from a windowed multipole distribution -- say multivariate normal $\mathcal{N}\left( \Pi , \mathbb{V}\mathrm{ar}\left( \Pi \right)\right)$ -- and correspondingly generate Windowed Multipole stochastic cross sections
    \begin{equation}
        \mathrm{d}\mathbb{P}\Big(\sigma_\Pi(E) \Big) = \sigma_{\mathrm{d}\mathbb{P}\left(\Pi\right)}(E)  
        \label{eq: Stochastic cross sections multipoles}
    \end{equation}
    
\end{enumerate}
Stochastic cross sections uncertainties only match first order sensitivity approaches (\ref{eq: sensitivity covariance of cross section from resonance parameters}) and (\ref{eq: sensitivity covariance of cross section from WMP}) for very small covariances. This is because normally distributed resonance parameters do not translate into normally distributed cross sections (\ref{eq: Stochastic cross sections resonance parameters}): sampling resonance parameters from $\mathcal{N}\left( \Gamma , \mathbb{V}\mathrm{ar}\left( \Gamma \right)\right)$ and then computing the corresponding cross sections (\ref{eq:partial sigma_cc'}) and (\ref{eq:total σ_c}) through R-matrix equations (\ref{eq: Transmission matrix}), (\ref{eq:U expression}), (\ref{eq: def Kapur-Peierls operator}), (\ref{eq:R expression}), and (\ref{eq: L operator}), cannot in general lead to normally distributed cross sections $\sigma_\Gamma(E)$ at all energies. However, they do in the linear case, which is a good first-order approximation for small covariances.

    Stochastic cross sections (\ref{eq: Stochastic cross sections resonance parameters}) are at the core of the TENDL library \cite{TENDLkoningModernNuclearData2012, koningTENDLCompleteNuclear2019}, and being able to sample them is a necessary prerequisite to the \textit{Total Monte Carlo} uncertainty propagation method \cite{TMC, fast_TMC, Lead_TMC}. In practice, this has been a major computational challenge, requiring to sample resonance parameters from standard nuclear data libraries, reconstruct the corresponding nuclear cross sections at zero Kelvin ($0 \mathrm{K}$), and then process each one (with codes such as NJOY \cite{NJOYmacfarlaneMethodsProcessingENDF2010}) to compute the corresponding cross sections at temperature $T$ (c.f. discussion of Doppler broadening and thermal scattering in section \ref{sec:DB of WMP}). All this is costly, and storing the pre-processed cross sections consumes vasts amount of memory. Because one can directly compute Doppler-broadened nuclear cross sections from Windowed Multipole parameters $\big\{\Pi\big\}$ (c.f. theorem \ref{theo::WMP Doppler broadening} section \ref{sec:DB of WMP}), the Windowed Multipole Library can generate stochastic cross sections (\ref{eq: Stochastic cross sections multipoles}) on-the-fly, without any pre-processing nor storage, a true physics-enabled computational breakthrough.

\begin{figure}[ht!!] 
  \centering
  \subfigure[\ $^{\text{238}}\text{U}$ first capture resonance, parameters sampled from ENDF/B-VIII uncertainty. 30 samples shown here.]{\includegraphics[width=0.49\textwidth]{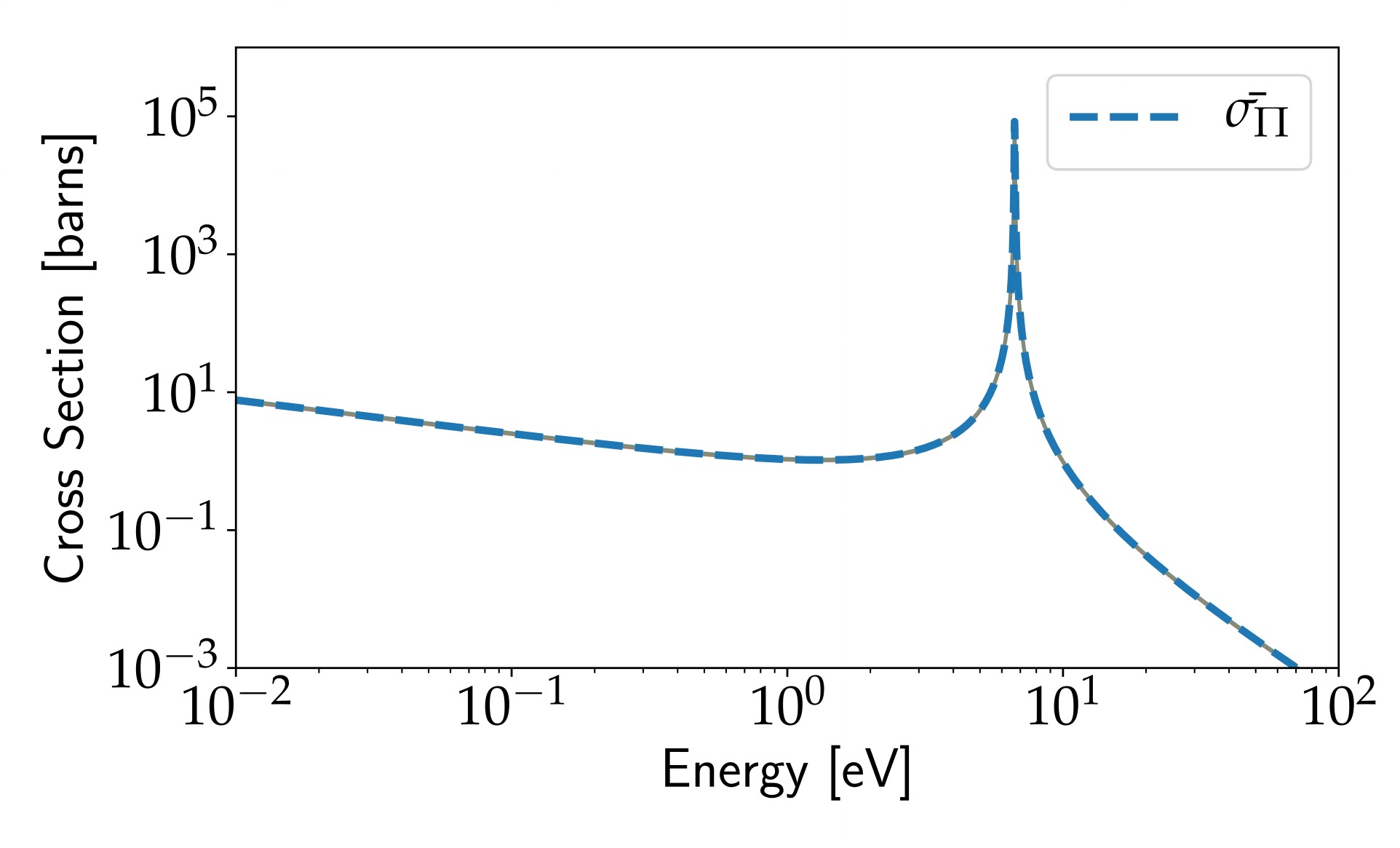}}
  \subfigure[\ Cross section histogram at resonance energy $E_\lambda = 6.67428  $ eV.]{\includegraphics[width=0.49\textwidth]{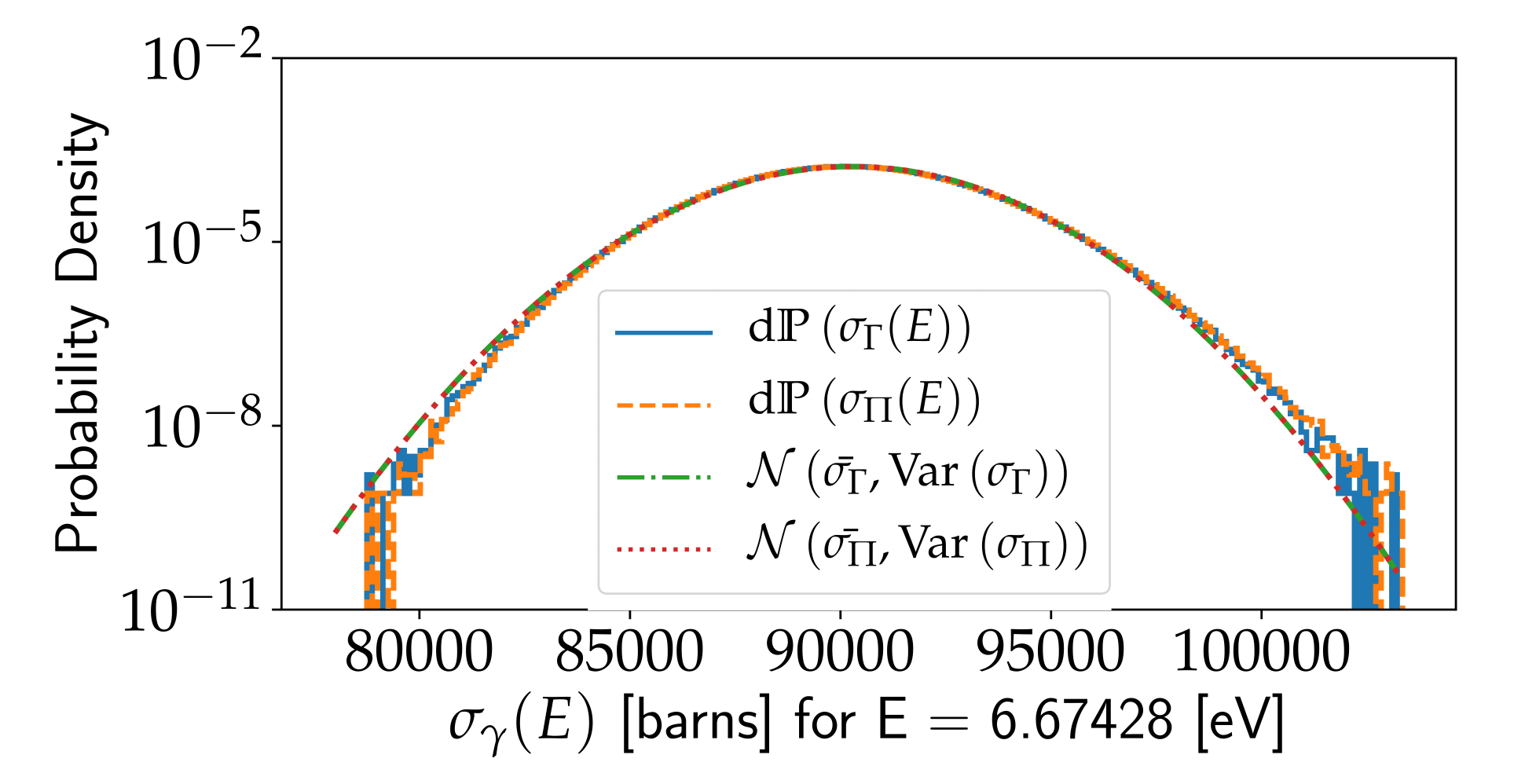}}
  \caption{\small{R-matrix cross sections uncertainty, computed either from the ENDF/B-VIII resonance parameters covariance $\mathbb{V}\mathrm{ar}\left( \Gamma \right)$ (table \ref{tab:resonance parameters U-238 first resonance} in appendix \ref{appendix: Single Breit-Wigner resonance}), or from the multipoles covariance $\mathbb{V}\mathrm{ar}\left( \Pi \right)$, as converted through (\ref{eq:: Var(Π) from Var(Γ)}), for both the stochastic cross sections (\ref{eq: Stochastic cross sections resonance parameters}, \ref{eq: Stochastic cross sections multipoles}) and the sensitivities approach (\ref{eq: sensitivity covariance of cross section from resonance parameters}, \ref{eq: sensitivity covariance of cross section from WMP}).}}
  \label{fig:ENDF Uncertainty}
\end{figure}

\begin{figure}[ht!!] 
  \centering
  \subfigure[\ $^{\text{238}}\text{U}$ first capture resonance, parameters sampled from an enlarged ENDF/B-VIII uncertainty. 30 samples shown here.]{\includegraphics[width=0.49\textwidth]{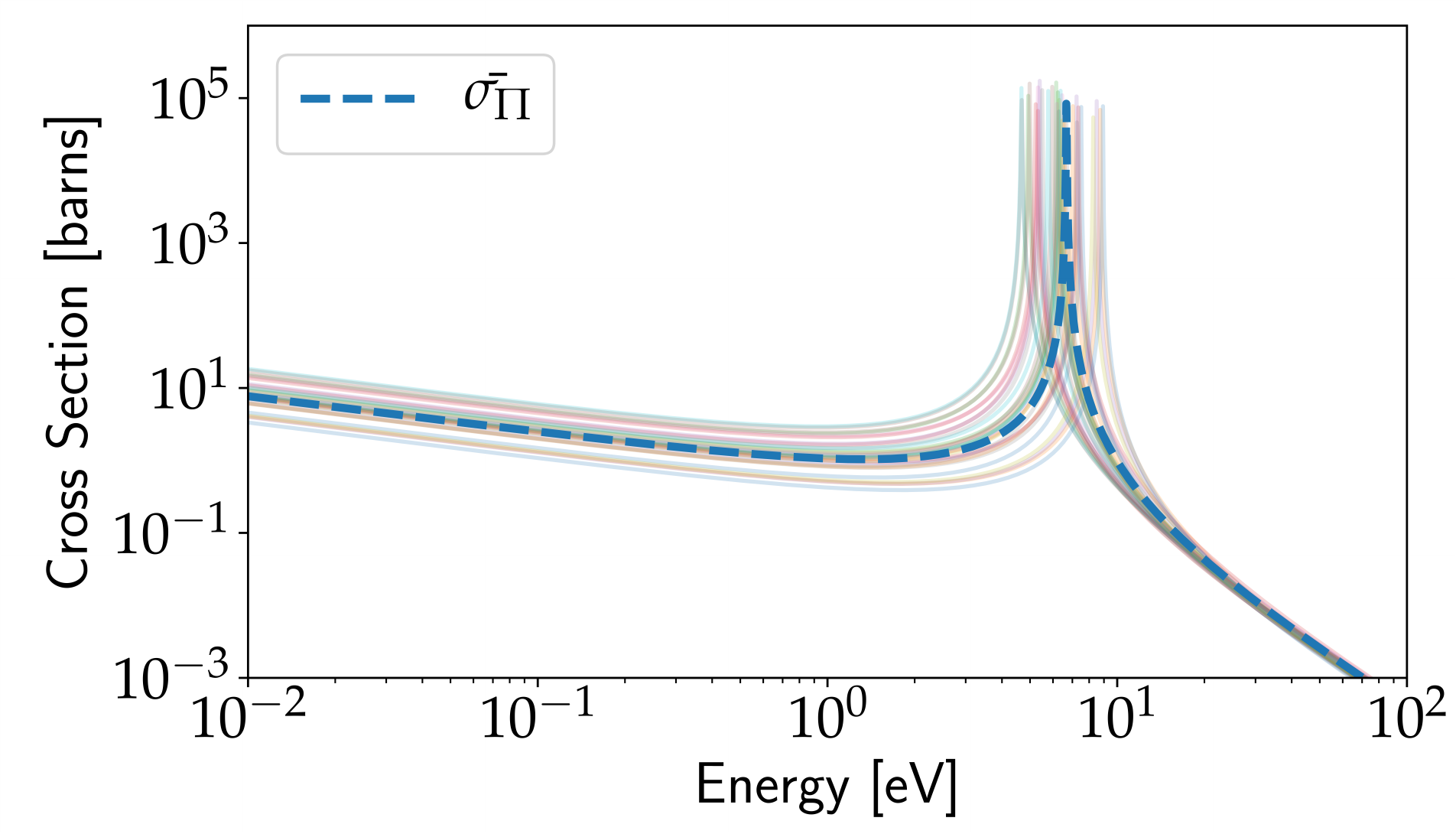}}
  \subfigure[\ Cross section histogram at resonance energy $E_\lambda = 6.67428  $ eV.]{\includegraphics[width=0.49\textwidth]{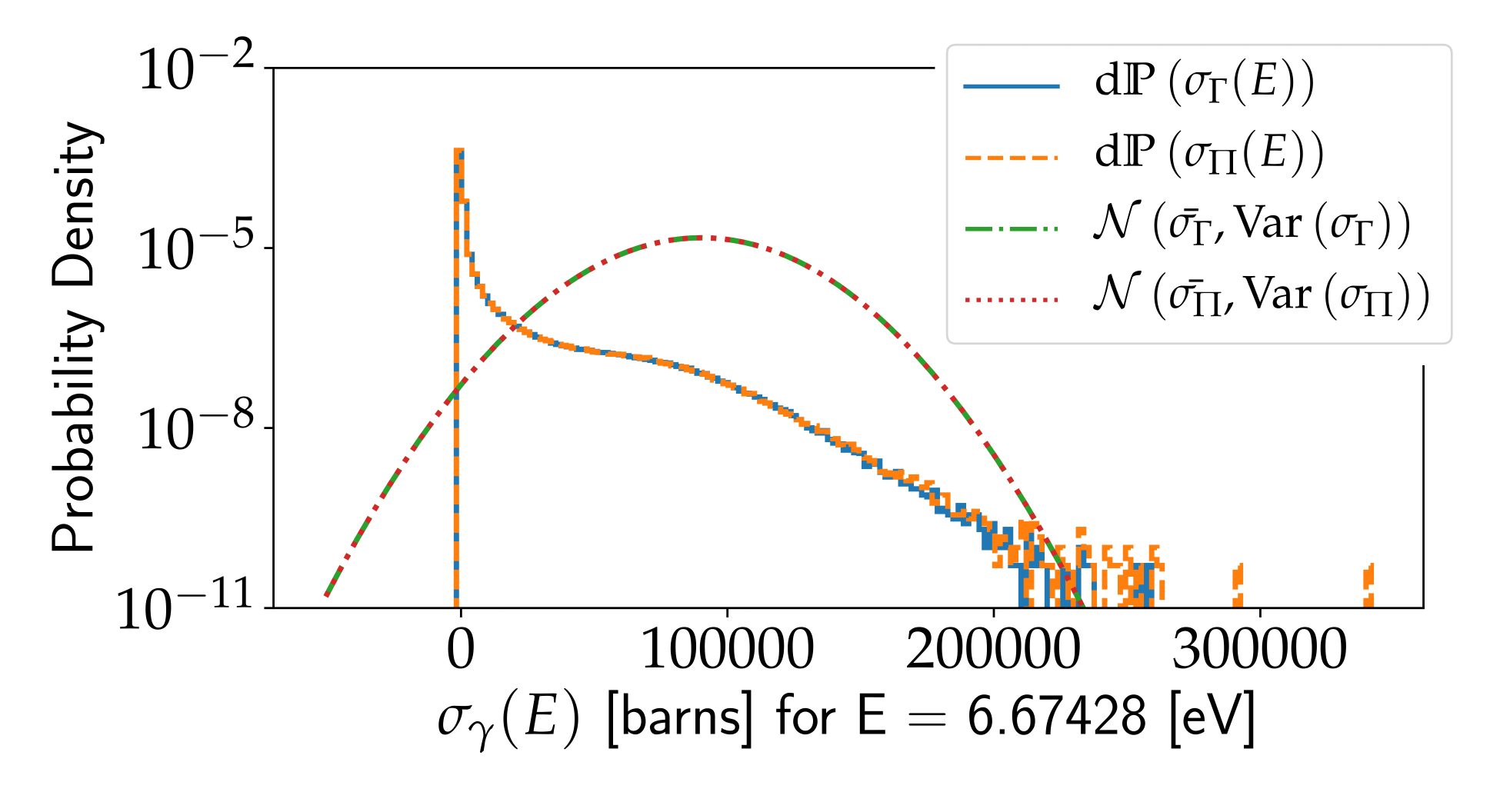}}
  \caption{\small{R-matrix cross sections uncertainty, computed either from the enlarged ENDF/B-VIII resonance parameters covariance $\mathbb{V}\mathrm{ar}\left( \Gamma \right)$ (table \ref{tab:resonance parameters U-238 first resonance} in appendix \ref{appendix: Single Breit-Wigner resonance}), or from the multipoles covariance $\mathbb{V}\mathrm{ar}\left( \Pi \right)$, as converted through (\ref{eq:: Var(Π) from Var(Γ)}), for both the stochastic cross sections (\ref{eq: Stochastic cross sections resonance parameters}, \ref{eq: Stochastic cross sections multipoles}) and the sensitivities approach (\ref{eq: sensitivity covariance of cross section from resonance parameters}, \ref{eq: sensitivity covariance of cross section from WMP}).}}
  \label{fig:LARGE ENDF Uncertainty}
\end{figure}

Regardless of the method employed to represent nuclear cross section uncertainty, it would be desirable that the uncertainties stemming from a windowed multipole representation $\big\{ \Pi \big\}$ are consistent with those stemming from resonance parameters $\big\{ \Gamma \big\}$ upon converting their covariances as indicated in equation (\ref{eq:: Var(Π) from Var(Γ)}) of theorem \ref{theo::WMP covariance}.
We undertook numerical experiments to measure the cross section uncertainty distributions generated by either covariances $\mathbb{V}\mathrm{ar}\left( \Gamma \right)$ or $\mathbb{V}\mathrm{ar}\left( \Pi \right)$, for both sensitivity method (\ref{eq: sensitivity covariance of cross section from resonance parameters}) and (\ref{eq: sensitivity covariance of cross section from WMP}), or stochastic cross sections (\ref{eq: Stochastic cross sections resonance parameters}) and (\ref{eq: Stochastic cross sections multipoles}). 
We treated the simple case of the first single-level Breit-Wigner capture resonance of uranium isotope $^{\text{238}}\text{U}$, which admits closed form explicit expressions for the multipoles, the cross sections, and the sensitivities, all documented in appendix \ref{appendix: Single Breit-Wigner resonance}. We compared the methods for both the ENDF/B-VIII.0 resonance parameters covariance (which is small as this is a very well known resonance), and an enlarged covariance matrix which conserves the same correlations but brings the cross section dependency past the linear regime. Both covariances are documented in table \ref{tab:resonance parameters U-238 first resonance} (appendix \ref{appendix: Single Breit-Wigner resonance}), and figures \ref{fig:ENDF Uncertainty} and \ref{fig:LARGE ENDF Uncertainty} show the following trends:
\begin{itemize}
    \item For the sensitivity method (\ref{eq: sensitivity covariance of cross section from resonance parameters}) or (\ref{eq: sensitivity covariance of cross section from WMP}), the cross section uncertainty is identical for either the resonance parameter covariance $\mathbb{V}\mathrm{ar}\left( \Gamma \right)$ or the windowed multipole covariance $\mathbb{V}\mathrm{ar}\left( \Pi \right)$, which is the immediate consequence of conversion (\ref{eq:: Var(Π) from Var(Γ)}).
    \item For the stochastic cross section method (\ref{eq: Stochastic cross sections resonance parameters}) or (\ref{eq: Stochastic cross sections multipoles}), sampling parameters from $\mathcal{N}\left( \Gamma , \mathbb{V}\mathrm{ar}\left( \Gamma \right)\right)$ or $\mathcal{N}\left( \Pi , \mathbb{V}\mathrm{ar}\left( \Pi \right)\right)$ generates similar cross section distributions. 
\end{itemize}

In the small covariance case of figure \ref{fig:ENDF Uncertainty}, the stochastic cross section distributions (\ref{eq: Stochastic cross sections resonance parameters}) and (\ref{eq: Stochastic cross sections multipoles}) are very close to the normal distributions from the sensitivity approach (\ref{eq: sensitivity covariance of cross section from WMP}), though at the tails they start differing. In the large covariance case of figure \ref{fig:LARGE ENDF Uncertainty}, the stochastic cross sections distributions are radically different from the normal distribution of the sensitivity method. This discrepancy is made more blatant because the cross section distribution is recorded at resonance peak energy $E_\lambda = 6.67428 \; \mathrm{eV}$, hence a small shift in resonance energy $E_\lambda $ can dramatically lower the cross section value. 
This illustrates the fact that in theorem \ref{theo::WMP covariance}, when converting the resonance parameters covariances $\mathbb{V}\mathrm{ar}\left( \Gamma \right)$ into  Windowed Multipoles covariances $\mathbb{V}\mathrm{ar}\left( \Pi \right)$ through (\ref{eq:: Var(Π) from Var(Γ)}), the linear assumption used for the local inversion using Jacobians $\left(\frac{\partial \Pi}{\partial \Gamma}\right)$ from (\ref{eq:: dΠ/dΓ contour integrals system}) holds for a wider range of resonance parameters than the liner assumption for the cross section sensitivity method (\ref{eq: sensitivity covariance of cross section from WMP}). This can be seen in figure \ref{fig: trajectores for senstivities}, where the tangent lines from Jacobians $\left(\frac{\partial \Pi}{\partial \Gamma}\right)$ are close to the conversion surface, trajectories of $\Pi(\Gamma)$, even after three standard deviations of the large covariance matrix, something clearly not true of the cross section linear behavior at peak energy from figure \ref{fig:LARGE ENDF Uncertainty}.

Therefore, whichever method is chosen to represent the nuclear cross sections uncertainty, the Windowed Multipoles covariances $\mathbb{V}\mathrm{ar}\left( \Pi \right)$ from theorem \ref{theo::WMP covariance} faithfully reproduce the uncertainty from the resonance parameters covariances $\mathbb{V}\mathrm{ar}\left( \Gamma \right)$.

\section{\label{sec:DB of WMP} Doppler broadening of Windowed Multipole cross sections}

Hitherto, we have established that the zero Kelvin (0 K) windowed multipole representation of cross sections is equivalent to the traditional Wigner-Eisenbud R-matrix parametrization, in both cross section values and their uncertainties. 
We henceforth study how temperature affects R-matrix cross sections at the nuclear level through \textit{Doppler broadening} (we do not address thermal neutron scattering at the crystalline level), and derive how the windowed multipole representation exhibits a major advantage: in its form (\ref{eq:: sigma(z) Windowed Multipole representation}) -- exact for zero-threshold channels or windows without thresholds, and otherwise an accurate approximation -- the window multipole representation of R-matrix cross sections can be Doppler broadened precisely by means of closed-form formulae (theorem \ref{theo::WMP Doppler broadening}). This enables the long sought-after computational capability of on-the-fly Doppler broadening of nuclear cross sections \cite{solbrig1961doppler, blackshawScatteringFunctionsLowEnergy1967, cullenExactDopplerBroadening1976a, Hwang_1987, Trumbull_2006, yesilyurtOntheFlyDopplerBroadening2012a, Forget_2013, ferranNewMethodDoppler2015a, Ducru_JCP_2017}.

\subsection{\label{subsec:Doppler broadening Solbrig Kernel}Doppler broadening of nuclear cross sections: Solbrig's Kernel}
As temperature rises, nuclei vibrate, so that the effective cross section for a beam of particles sent upon a target at a given energy and wavenumber is the statistical result of the zero Kelvin cross sections averaged out on all the possible relative energies at which the target and the beam interact. For non-relativistic, non-massless particles (not photons) in the semi-classical representation, Doppler broadening of nuclear cross section is the process of integration over the target velocity distribution, assuming the latter is an isotropic Maxwellian (that is a Boltzmann distribution of energies). 
Solbrig derived this Doppler broadening integral in eq. (3) p. 259 of \cite{solbrig1961doppler}, where the cross section $\sigma_T(E)$ at temperature $T$ and energy $E$ (in the laboratory coordinates) is related to the cross section $\sigma(E)$ at temperature $T_0$ as:
\begin{equation}
\begin{IEEEeqnarraybox}[][c]{rCl}
E\sigma_T(E) & = & \mkern-3mu \int\displaylimits^\infty_0 \mkern-3mu \frac{\sigma(E'){E'}^{\frac{1}{2}}}{2\beta\sqrt{\pi}} \mkern-3mu \left[ \mathrm{e}^{-\left(\frac{\sqrt{E'}-\sqrt{E}}{\beta}\right)^2} \mkern-10mu - \mathrm{e}^{-\left(\frac{\sqrt{E'}+\sqrt{E}}{\beta}\right)^2} \right]\mkern-6mu \mathrm{d}E'
\IEEEstrut \end{IEEEeqnarraybox}
\label{eq::E-space Doppler broadening}
\end{equation}
where $\beta$ is the square root temperature energy parameter:
\begin{equation}
\begin{IEEEeqnarraybox}[][c]{rCl}
\IEEEstrut
\beta & \triangleq & \sqrt{\frac{k_\mathbb{B}(T-T_0)}{A}}
\end{IEEEeqnarraybox}
\label{eq::beta def}
\end{equation}
where $A$ designates the atomic mass number, and $k_\mathbb{B}$ the universal Boltzmann constant. \\
Upon change of variable to $z = \sqrt{E}$, the Doppler broadening operation (\ref{eq::E-space Doppler broadening}) becomes Solbrig's kernel:
\begin{equation}
\begin{IEEEeqnarraybox}[][c]{rCl}
z^2\sigma_T(z) & = & \int^\infty_0 x^2\sigma_{T_0}(x) \cdot \mathcal{K}^{\mathbb{D}}_\beta(z,x) \mathrm{d}x  \\
$where: $
 \  \mathcal{K}^{\mathbb{D}}_\beta(z,x)  & \triangleq & \frac{1}{\beta\sqrt{\pi}}\left[ \mathrm{e}^{-\left(\frac{z-x}{\beta}\right)^2} - \mathrm{e}^{-\left(\frac{z+x}{\beta}\right)^2} \right]
 \label{eq::Solbrig_Kernel}
\IEEEstrut \end{IEEEeqnarraybox}
\end{equation}
Note that for zero-threshold channels, where $z \propto k_c(E)$, Solbrig kernel (\ref{eq::Solbrig_Kernel}) is an integral operator acting on $k_c^2(E)\cdot  \sigma_c(E)$, which is the transmission matrix square amplitudes from cross section definition (\ref{eq:partial sigma_cc'}).
The Solbrig kernel (\ref{eq::Solbrig_Kernel}) thus acts directly on the interaction probabilities, rather than the actual cross section, just as the channel reversibility equivalence (\ref{eq:cross section reversibility equivalence}). 

Solbrig kernel integral (\ref{eq::Solbrig_Kernel}) has presented major computational challenges in nuclear reactor physics.
When no information is provided as to the functional form of the zero Kelvin cross section $\sigma(E)$ -- i.e. it is considered a point-wise input -- the traditional way of computing the Doppler broadened cross section at any temperature $\sigma_T(E)$ has been to pre-tabulate exact cross sections $\sigma_{T_i}(E)$ (usually using the SIGMA1 algorithm of \cite{cullenExactDopplerBroadening1976a}) for a grid of reference temperatures $\big\{T_i\big\}$, and then interpolate between these points to obtain $\sigma_T(E)$ \cite{Conlin_pseudo_mateirals_2005, trumbullTreatmentNuclearData2006a, Chinese_Linear_Interpolation_2012}.
However, storing all these pre-computed cross sections at reference temperatures $\big\{T_i\big\}$ represents a considerable memory burden, which is why methods to minimize the memory footprint and perform Doppler broadening (\ref{eq::Solbrig_Kernel}) on-the-fly have been actively sought after \cite{yesilyurtOntheFlyDopplerBroadening2012a}. The most state-of-the-art approaches are either optimal temperature Doppler kernel reconstruction quadratures \cite{Ducru_JCP_2017} (which only require 10 reference temperatures $\big\{T_i\big\}$ for standard nuclear reactor codes), new Fourier transform methods \cite{ferranNewMethodDoppler2015a}, or Monte Carlo target motion sampling rejection schemes \cite{viitanenExplicitTreatmentThermal2012a, viitanenTargetMotionSampling2014, romanoImprovedTargetVelocity2018}.

To do better, one must look at the functional form of the cross section. 
When the reference temperature is zero Kelvin $T_0 = 0 \text{K}$, we have shown in section \ref{sec:From R-matrix to Windowed Multipole} that R-matrix cross sections are the sum of threshold behavior and resonances. Resonances have traditionally been Doppler broadened approximately, using Voigt profiles \cite{solbrig1961doppler}, as we here recall in section \ref{subsec:Approximate Doppler broadening of Breit-Wigner resonances: Voigt profiles}.

\subsection{\label{subsec:Approximate Doppler broadening of Breit-Wigner resonances: Voigt profiles}Approximate Doppler broadening of Breit-Wigner resonances: Voigt profiles}

The traditional approach to Doppler broadening nuclear cross sections has been to consider individual Single-Level Breit-Wigner resonances (\ref{eq:: SLBW Lorenzian profiles for resonances}) at zero Kelvin, with both symmetric (Cauchy-Lorentz distributions) and anti-symmetric components, assuming it has a zero-energy threshold where it behaves as an s-wave neutron channel (angular momentum $\ell = 0$), so that we can multiply the resonance (\ref{eq:: SLBW Lorenzian profiles for resonances}) by the threshold behavior $\frac{1}{\sqrt{E}}$, as described by Wigner in III.A.2 \cite{Wigner_Thresholds_1948}:
\begin{equation}
\begin{IEEEeqnarraybox}[][c]{rCl}
    \sigma_0^{\mathrm{SLBW}}(E) & \triangleq & \frac{1}{\sqrt{E}} \Re\left[\frac{a + \mathrm{i}b}{ E - \mathcal{E}_j}\right]  \\
    & = & \frac{1}{\sqrt{E}} \left[\left( \frac{a}{ \Gamma_j/2} \right) \chi_0(x)+   \left(\frac{b}{ \Gamma_j/2}\right) \psi_0(x) \right]
\label{eq:: SLBW Single Breit Wigner resonance}
\IEEEstrut\end{IEEEeqnarraybox}
\end{equation}
where $x \ \triangleq \  \left( \frac{E - E_j}{\Gamma_j/2} \right)  $ with $\mathcal{E}_j \triangleq E_j - \mathrm{i}\frac{\Gamma_j}{2}$ from (\ref{eq:E_j pole def}), and
\begin{equation}
\begin{IEEEeqnarraybox}[][c]{rcl}
\psi_0(x)  & \ \triangleq \ &  \frac{1 }{ x^2 + 1} = \frac{\Gamma_j^2/4 }{ \left( E - E_j \right)^2 + \frac{\Gamma_j^2}{4}} \\
\chi_0(x)  & \ \triangleq \ & \frac{x }{ x^2 + 1} =   \frac{\left( E - E_j \right) \frac{ \Gamma_j}{2} }{ \left(  E -E_j \right)^2 + \frac{\Gamma_j^2}{4}}  
\IEEEstrut\end{IEEEeqnarraybox}
\label{eq::0K psi chi functions}
\end{equation}
Upon Doppler broadening (\ref{eq::E-space Doppler broadening}), Single-Level Breit-Wigner resonance (\ref{eq:: SLBW Single Breit Wigner resonance}) becomes: 
\begin{equation}
\begin{IEEEeqnarraybox}[][c]{rcl}
\sigma^{\mathrm{SLBW}}_T(E)& \ = \ & \frac{1}{\sqrt{E}} \left[\left( \frac{a}{ \Gamma_j/2} \right)  \chi_T(E) +   \left(\frac{b}{ \Gamma_j/2}\right)   \psi_T(E) \right]
\IEEEstrut\end{IEEEeqnarraybox}
\label{eq:: Doppler broadened SLBW Single Breit Wigner resonance}
\end{equation}
where $ \chi_T$ and $ \psi_T$ are defined using $x' \triangleq \left( \frac{E' - E_j}{\Gamma_j/2} \right)$ as
\begin{equation}
\begin{IEEEeqnarraybox}[][c]{rcl}
\chi_T(E)  &  \mkern2mu \triangleq \mkern2mu&  \frac{E^{-\frac{1}{2}}}{2\beta\sqrt{\pi}} \mkern-5mu \int\displaylimits^\infty_0 \mkern-4mu \chi_0(x') \mkern-4mu \left[ \mathrm{e}^{-\left(\frac{\sqrt{E'}-\sqrt{E}}{\beta}\right)^2}\mkern-10mu - \mathrm{e}^{-\left(\frac{\sqrt{E'}+\sqrt{E}}{\beta}\right)^2} \mkern-2mu\right]\mkern-6mu \mathrm{d}E' \\
\psi_T(E)  &  \triangleq  & \frac{E^{-\frac{1}{2}}}{2\beta\sqrt{\pi}} \mkern-5mu \int\displaylimits^\infty_0 \mkern-4mu \psi_0(x') \mkern-4mu \left[ \mathrm{e}^{-\left(\frac{\sqrt{E'}-\sqrt{E}}{\beta}\right)^2} \mkern-10mu - \mathrm{e}^{-\left(\frac{\sqrt{E'}+\sqrt{E}}{\beta}\right)^2} \mkern-2mu\right]\mkern-6mu \mathrm{d}E'
\IEEEstrut\end{IEEEeqnarraybox}
\label{eq::T K psi chi functions}
\end{equation}
To compute these functions, the following approximations are then traditionally introduced (c.f. \cite{solbrig1961doppler}, or section 3.3.3 chapter 4, volume 1 of \cite{cacuciHandbookNuclearEngineering2010}):
\begin{enumerate}
    \item the Maxwell approximation, whereby we assume the second exponential term is vanishingly small:  $\mathrm{e}^{-\left(\frac{\sqrt{E'}+\sqrt{E}}{\beta}\right)^2} \ll 1 $. This is valid for $E \gg \beta^2 $, but fails at low energies or high temperatures, 
    \item a Taylor expansion around the energy of Doppler broadening: $E' = E + \epsilon$, with $\epsilon \ll 1$. This leads to $\sqrt{E'} - \sqrt{E} = \frac{\epsilon}{2 \sqrt{E}} + \mathcal{O}\left( \epsilon^2 \right)$, so that we approximate $\mathrm{e}^{-\left(\frac{\sqrt{E'}-\sqrt{E}}{\beta}\right)^2} \approx \mathrm{e}^{-\left(\frac{E'-E}{2 \sqrt{E}\beta}\right)^2} $ in the integrals, which under change of variable $E' \to x' $ become
    \begin{equation*}
    \begin{IEEEeqnarraybox}[][c]{rcl}
    \psi_T(x)  & \ \simeq \ & \frac{1}{2\beta\sqrt{\pi E }}\frac{\Gamma_j}{2} \int_{-2E_j/\Gamma_j}^{\infty} \psi_0(x')\mathrm{e}^{-\frac{\left(x'-x\right)^2}{4\tau}} \mathrm{d}x'  \\
    \chi_T(x)  & \ \simeq \ &  \frac{1}{2\beta\sqrt{\pi E }}\frac{\Gamma_j}{2} \int_{-2E_j/\Gamma_j}^{\infty} \chi_0(x')\mathrm{e}^{-\frac{\left(x'-x\right)^2}{4\tau}} \mathrm{d}x'
    \IEEEstrut\end{IEEEeqnarraybox}
    \label{eq::T K psi chi functions derivation}
    \end{equation*}
    where we defined
    \begin{equation}
    \begin{IEEEeqnarraybox}[][c]{rcl}
    \tau  & \ \triangleq \ &  4 E \left( \frac{\beta}{\Gamma_j}\right)^2 = 4 E \frac{k_\mathbb{B} (T-T_0)}{A\Gamma_j^2}
    \IEEEstrut\end{IEEEeqnarraybox}
    \label{eq::tau def}
    \end{equation}
    \item we furthermore assume $2 E_j \gg \Gamma_j$, so that we approximate the integral lower limit to $-\infty$, yielding
    \begin{equation}
    \begin{IEEEeqnarraybox}[][c]{rcl}
     \psi_T(x)  & \ \simeq \ & \frac{1}{\sqrt{4 \pi \tau }} \int^{+\infty}_{-\infty} \frac{1 }{ 1+ {x'}^2}\mathrm{e}^{-\frac{\left(x'-x\right)^2}{4\tau}} \mathrm{d}x'  \\
     \chi_T(x)  & \ \simeq \ &  \frac{1}{\sqrt{4 \pi \tau }} \int^{+\infty}_{-\infty} \frac{x' }{ 1+ {x'}^2}\mathrm{e}^{-\frac{\left(x'-x\right)^2}{4\tau}} \mathrm{d}x'
    \IEEEstrut\end{IEEEeqnarraybox}
    \label{eq::T K psi chi functions Voigt}
    \end{equation}
\end{enumerate}
The latter are the standard Voigt functions, $\mathrm{U}(x,\tau)$ and $\mathrm{V}(x,\tau)$, defined in section 7.19 of \cite{NIST_DLMF}, which are related to the Faddeyeva function
(\ref{eq: Faddeyeva function def}) defined in 7.2.3 \cite{NIST_DLMF}, by:
    \begin{equation*}
    \begin{IEEEeqnarraybox}[][c]{rcl}
    \sqrt{\frac{\pi}{4 \tau }}\mathrm{w}\left( \frac{ x  + \mathrm{i} }{2\sqrt{\tau}}\right)   & \ = \ &  \mathrm{U}(x,\tau) + \mathrm{i}\mathrm{V}(x,\tau)
    \IEEEstrut\end{IEEEeqnarraybox}
    \label{eq:: Faddeyeva and Voigt functions}
    \end{equation*}
for $\Im\left[ \frac{x + \mathrm{i}}{2\sqrt{\tau}}\right] > 0$. In the case $\Im\left[ \frac{x + \mathrm{i}}{2\sqrt{\tau}}\right] < 0$, we use
$- \left[\mathrm{w}\left(z^*\right)\right]^*$ to calculate the integral. So that the $\psi_T(E)$ and $\chi_T(E)$ functions can approximately be related to the Faddeyeva function as:
    \begin{equation}
    \begin{IEEEeqnarraybox}[][c]{rcl}
    \psi_T(E)  & \ \simeq \ &  \sqrt{\frac{\pi}{4 \tau }}\Re\left[\mathrm{w}\left( \frac{ x  + \mathrm{i} }{2\sqrt{\tau}}\right) \right] \\
    \chi_T(E)  & \ \simeq \ &   \sqrt{\frac{\pi}{4  \tau }}\Im\left[\mathrm{w}\left( \frac{ x  + \mathrm{i} }{2\sqrt{\tau}}\right) \right] 
    \IEEEstrut\end{IEEEeqnarraybox}
    \label{eq::T K psi chi functions Faddeyeva}
    \end{equation}
So that the Doppler broadened Breit-Wigner resonance, under these approximations, can be expressed as:
\begin{equation}
\begin{IEEEeqnarraybox}[][c]{rcl}
\sigma^{\mathrm{SLBW}}_T(E) & \ \simeq \ &   \frac{1}{\sqrt{E}}\Re\left[\frac{a + \mathrm{i}b}{\mathrm{i}\Gamma_j/2}\sqrt{\frac{\pi}{4\tau}} \mathrm{w}\left( \frac{ x  + \mathrm{i} }{2\sqrt{\tau}}\right) \right] \\
& =  &  \frac{1}{E}\Re\left[\sqrt{\pi} \frac{a + \mathrm{i}b}{\mathrm{i}2\beta}\mathrm{w}\left( \frac{E - \mathcal{E}_j }{2\beta\sqrt{E}}\right)  \right]
\IEEEstrut\end{IEEEeqnarraybox}
\label{eq:: Doppler broadened single Breit Wigner resonance Faddeyeva} 
\end{equation}
This has been the traditional ``psi-chi'' method to perform approximate Doppler broadening of nuclear resonances, though some improvements have been proposed (c.f. eq. (65) in \cite{Hwang_1987}).

Note that the single-level Breit-Wigner profile (\ref{eq:: SLBW Single Breit Wigner resonance}) does not represent higher-order angular momenta for neutron channels, nor does it represent charged particles or photon channels (of any angular momenta), neither does it consider non-zero-threshold behaviors.

\subsection{\label{subsec:Analytic Doppler broadening of Windowed Multipole cross sections}Analytic Doppler broadening of \\ Windowed Multipole cross sections}

Theorem \ref{theo::WMP Representation} establishes the Windowed Multipole Representation as an equivalent formalism to parametrize R-matrix cross sections. Windowed Multipole cross sections take the form (\ref{eq:: sigma(z) Windowed Multipole representation}) for zero-threshold cross sections of any kind (photons, charged, higher angular momenta), and other thresholds can be approximated with this form (\ref{eq:: sigma(z) Windowed Multipole representation}), though not exactly. 
Theorem \ref{theo::WMP Doppler broadening} shows how these Windowed Multipole cross sections (\ref{eq:: sigma(z) Windowed Multipole representation}) can be Doppler broadened analytically to high accuracy, without having to assume an energy dependence of the Single-Level Breit-Wigner cross section form $\sigma_0^{\mathrm{SLBW}}(E)$ as in (\ref{eq:: SLBW Single Breit Wigner resonance}).

\begin{theorem}\label{theo::WMP Doppler broadening}\textsc{Doppler broadening of Windowed Multipole cross sections} \\
Consider the Windowed Multipole Representation of R-matrix cross sections (\ref{eq:: sigma(z) Windowed Multipole representation}), i.e. locally of the form:
\begin{equation*}
\begin{IEEEeqnarraybox}[][c]{C}
     \sigma(z)  \underset{\mathcal{W}(E)}{=}  \frac{1}{z^2}\Re_{\mathrm{conj}}\left[\sum_{j \geq 1} \frac{r_{j}}{z-p_j}\right] + \sum_{n\geq -2} a_n z^n
\IEEEstrut\end{IEEEeqnarraybox}
\end{equation*}
Upon integration against the Solbrig kernel (\ref{eq::Solbrig_Kernel}), the Doppler broadened cross section at temperature $T$ takes the following analytic expression:
\begin{equation}
\begin{IEEEeqnarraybox}[][c]{rcl}
\sigma_{T}(z)  & \ \underset{\mathcal{W}(E)}{=} \ & \sum_{n\geq -2}  a_n \mathrm{D}_\beta^n(z)  \\ & & + \frac{1}{z^2}\Re\left[  \mathrm{i} \sqrt{\pi} \sum_{j \geq 1} \frac{r_j}{\beta} \cdot \mathrm{w}\left(\frac{z-p_j}{\beta}\right) \right]  \\ 
&  & - \frac{1}{z^2}\Re\left[ \mathrm{i} \sqrt{\pi} \sum_{j \geq 1} \frac{r_j}{\beta} \cdot \mathrm{C}\left(\frac{z}{\beta}, \frac{p_j}{\beta}\right) \right] 
\IEEEstrut\end{IEEEeqnarraybox}
\label{eq::WPR T exact with C-term correction}
\end{equation}
where $\mathrm{C}\left(\frac{z}{\beta}, \frac{p_j}{\beta}\right)$ is a correction term defined as:
\begin{equation}
    \mathrm{C}\left(\frac{z}{\beta}, \frac{p_j}{\beta}\right) \triangleq \frac{2 \; p_j}{\mathrm{i} \pi \beta}  \int^\infty_{0} \frac{\mathrm{e}^{-\left(\frac{z}{\beta} + t\right)^2}}{ t^2 - \left(\frac{p_j}{\beta}\right)^2}  \mathrm{d}t
    \label{eq: C-function correction term}
\end{equation}
which is negligible in most physical ranges of temperatures and energies, so that Doppler broadened Windowed Multipole cross sections can be well approximated as
\begin{equation}
\begin{IEEEeqnarraybox}[][c]{rCl}
\sigma_{T}(z)  & \underset{\mathcal{W}(E)}{\simeq} &   \frac{1}{z^2}\Re\left[\sqrt{\pi} \sum_{j \geq 1}  \frac{r_j}{\mathrm{i}\beta} \cdot  \mathrm{w}\left(\frac{z-p_j}{\beta}\right) \right] \\
 & & + \sum_{n\geq -2}  a_n \mathrm{D}_\beta^n(z)
\IEEEstrut\end{IEEEeqnarraybox}
\label{eq::WPR T}
\end{equation}
where $\mathrm{D}_\beta^n(z)$ are the Doppler broadened monomials:
\begin{equation}
\begin{IEEEeqnarraybox}[][c]{rcl}
\mathrm{D}^{n}_\beta(z) & \ \triangleq \ &\int^\infty_0 \frac{x^{n+2}}{z^2}\mathcal{K}^{\mathbb{D}}_\beta(z,x) \mathrm{d}x 
\IEEEstrut\end{IEEEeqnarraybox}
\label{eq::Doppler broadened monomials}
\end{equation}
which are subject to the following recurrence formulae from elemental Gaussian and error functions (defined in eq. 7.2.1 of \cite{NIST_DLMF}) \cite{Josey_JCP_2016}:
\begin{equation}
\begin{IEEEeqnarraybox}[][c]{rCl}
\mathrm{D}^{n+2}_\beta(z) & \underset{\forall n\geq 1}{=} & \left[\frac{\beta^2}{2}(2n+1) + z^2 \right] \mathrm{D}^n_\beta(z) \\ 
  & & - \left(\frac{\beta^2}{2}\right)^2 n(n-1) \mathrm{D}^{n-2}_\beta(z) \\
 \mathrm{D}^0_\beta(z) & = & \left[\frac{\beta^2}{2}+ z^2\right]\mathrm{D}^{-2}_\beta(z) + \frac{\beta}{z\sqrt{\pi}} \mathrm{e}^{-\left(\frac{z}{\beta}\right)^2} \\
 \mathrm{D}^{-1}_\beta(z) & = & \frac{1}{z} \\
 \mathrm{D}^{-2}_\beta(z) & = & \frac{1}{z^2}\mathrm{erf}\left(\frac{z}{\beta}\right)
\IEEEstrut \end{IEEEeqnarraybox}
\label{eq:: recurrence formula for polynomial Doppler broadening}
\end{equation}
and where $\mathrm{w}(z)$ is the Faddeyeva function (defined in eq. 7.2.3 of \cite{NIST_DLMF}),
\begin{equation}
\begin{IEEEeqnarraybox}[][c]{rCl}
\IEEEstrut
\mathrm{w}(z) & \triangleq & \mathrm{e}^{-z^2}\Big( 1 - \mathrm{erf}\left( -\mathrm{i}z \right) \Big) = \mathrm{e}^{-z^2}\left( 1 + \frac{2\mathrm{i}}{\sqrt{\pi}} \int_0^z  e^{t^2}\mathrm{d}t \right)
\end{IEEEeqnarraybox}
\label{eq: Faddeyeva function def}
\end{equation}
called at poles in the complex lower semi-plane, i.e. $\Im\left[\frac{z-p_j}{\beta}\right] > 0$. For all other poles, which satisfy $\Im\left[\frac{z-p_j}{\beta}\right] \leq 0$, we use the fact that the Windowed Multipole Representation has complex conjugate poles to call the Faddeyeva function at $- \left[\mathrm{w}\left(z^*\right)\right]^* = - \mathrm{w}\left(-z\right)$.
\end{theorem}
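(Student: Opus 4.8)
The plan is to exploit the linearity of the Solbrig kernel (\ref{eq::Solbrig_Kernel}) and Doppler broaden each building block of the Windowed Multipole form (\ref{eq:: sigma(z) Windowed Multipole representation}) separately: the background monomials $z^n$ and the individual conjugate-continued simple poles $\frac{1}{z^2}\Re_{\mathrm{conj}}\!\left[\frac{r_j}{z-p_j}\right]$. Since $\mathcal{K}^{\mathbb{D}}_\beta(z,x)$ is real for real $z,x$, and since $\Re_{\mathrm{conj}}$ restricted to the real axis is the ordinary real part, the kernel commutes with $\Re$; hence broadening $\frac{1}{z^2}\Re_{\mathrm{conj}}\!\left[\sum_j\frac{r_j}{z-p_j}\right]$ reduces to broadening $\sum_j\frac{r_j}{x-p_j}$ term by term along $x\in\mathbb{R}_+$ and then taking a real part. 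The resulting identities are established first on the real $z$-axis and then extended by analytic continuation to the neighbourhood $\mathcal{W}(E)$, where both sides are meromorphic.

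For the background I would first record the Doppler broadened monomials $\mathrm{D}^n_\beta(z)$ of (\ref{eq::Doppler broadened monomials}). The base cases $\mathrm{D}^{-2}_\beta$, $\mathrm{D}^{-1}_\beta$, $\mathrm{D}^0_\beta$ come from elementary Gaussian and error-function integrals against $\mathcal{K}^{\mathbb{D}}_\beta$, and the three-term recurrence (\ref{eq:: recurrence formula for polynomial Doppler broadening}) follows by writing $x^{n+2}\mathrm{e}^{-((z\mp x)/\beta)^2}$ as a combination of $\partial_x$ applied to lower-degree Gaussians and integrating by parts; this is the computation of \cite{Josey_JCP_2016}, which I would cite rather than reproduce.

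The heart of the argument is the pole term. For a single pole $p$ with residue $r$ and $\Im p<0$ (so that $\Im\frac{z-p}{\beta}>0$ for real $z>0$, $\beta>0$), I would evaluate $\int_0^\infty\frac{r}{x-p}\,\mathcal{K}^{\mathbb{D}}_\beta(z,x)\,\mathrm{d}x$ in four moves: (i) reflect $x\mapsto-x$ in the second Gaussian of $\mathcal{K}^{\mathbb{D}}_\beta$ to turn its contribution into an integral over $(-\infty,0]$ with denominator $\frac{1}{x+p}$; (ii) use $\frac{1}{x+p}=\frac{1}{x-p}-\frac{2p}{x^2-p^2}$ so that the two $\frac{1}{x-p}$ pieces (from the first Gaussian over $[0,\infty)$ and from this split over $(-\infty,0]$) assemble into the full-line integral $\frac{r}{\beta\sqrt{\pi}}\int_{-\infty}^\infty\frac{\mathrm{e}^{-((z-x)/\beta)^2}}{x-p}\,\mathrm{d}x$; (iii) substitute $t=(x-z)/\beta$ and recognise the standard integral representation of the Faddeyeva function $\mathrm{w}$ of (\ref{eq: Faddeyeva function def}), which produces, up to the prefactor $\mathrm{i}\sqrt{\pi}/\beta$, the leading term $\propto\mathrm{w}\!\left(\frac{z-p}{\beta}\right)$; (iv) substitute $x=-\beta t$ in the leftover $-\frac{2pr}{\beta\sqrt{\pi}}\int_{-\infty}^0\frac{\mathrm{e}^{-((z-x)/\beta)^2}}{x^2-p^2}\,\mathrm{d}x$ to identify it with the correction integral $\mathrm{C}\!\left(\frac{z}{\beta},\frac{p}{\beta}\right)$ of (\ref{eq: C-function correction term}). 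Poles with $\Im p>0$ are handled with the reflection $[\mathrm{w}(z^*)]^*=\mathrm{w}(-z)$ together with the specular pair structure of the WMP poles noted after (\ref{eq: link kappa to r residues}), which is precisely why the final statement calls $\mathrm{w}$ at $-\mathrm{w}(-z)$ for those poles. Summing over $j\ge1$, reinstating the $1/z^2$ prefactor, and adding the broadened background yields (\ref{eq::WPR T exact with C-term correction}).

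Finally, to pass from (\ref{eq::WPR T exact with C-term correction}) to the working formula (\ref{eq::WPR T}) I would bound $\mathrm{C}\!\left(\frac{z}{\beta},\frac{p_j}{\beta}\right)$: its integrand carries the factor $\mathrm{e}^{-(z/\beta+t)^2}$ with $z>0$, $\Re\beta>0$, $t\ge0$, and the denominator $t^2-(p_j/\beta)^2$ never vanishes for $t\ge0$ (resonance poles are off the real axis; bound-state poles are purely imaginary, so $t^2-(p_j/\beta)^2>0$), whence the integral is $\mathcal{O}\!\left(\mathrm{e}^{-(z/\beta)^2}\right)$ up to an algebraic factor in $|p_j/\beta|$; since $(z/\beta)^2=E\,A/(k_{\mathbb{B}}(T-T_0))$ is large across essentially all physical energy/temperature regimes below the first threshold, $\mathrm{C}$ is negligible there. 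The step I expect to be the main obstacle is exactly (iv): cleanly separating the half-line Solbrig integral into the part that reconstitutes the Faddeyeva function over the whole real line and the genuinely extra tail, getting its sign, the $t^2-(p_j/\beta)^2$ denominator, and the $2p_j/(\mathrm{i}\pi\beta)$ prefactor of $\mathrm{C}$ right, and verifying that no contour pinch occurs. The remaining interchanges of summation and integration are justified by the (locally) finite pole sum and the Gaussian decay of $\mathcal{K}^{\mathbb{D}}_\beta$.
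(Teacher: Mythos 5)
Your proposal is correct and follows essentially the same route as the paper's proof (itself recalling Hwang's derivation): reflect/split the half-line Solbrig integral, use the partial-fraction identity $\tfrac{1}{x+p}=\tfrac{1}{x-p}-\tfrac{2p}{x^2-p^2}$ to reassemble the full-line Gaussian--Cauchy integral identified with the Faddeyeva function for $\Im\left[\tfrac{z-p_j}{\beta}\right]>0$ (with the reflection $-\left[\mathrm{w}(z^*)\right]^*=-\mathrm{w}(-z)$ for the conjugate poles), isolate the leftover tail as the $\mathrm{C}$-term (\ref{eq: C-function correction term}), and handle the Laurent background via the $\mathrm{D}^n_\beta$ recurrences cited from prior work. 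Your explicit Gaussian-decay bound on $\mathrm{C}$ is a slightly more self-contained justification of its negligibility than the paper's appeal to Hwang's analysis, but the argument is otherwise the same.
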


\begin{proof}
This analytic Doppler broadening comes from: 
\begin{equation*}
\begin{IEEEeqnarraybox}[][c]{rCl}
\sigma_{T}(z) & \ \underset{\mathcal{W}(E)}{=} \ &  \frac{1}{z^2}\Re\left[ \sum_{j \geq 1} r_j \int^\infty_0 \frac{\mathcal{K}^{\mathbb{D}}_\beta(z,x)}{ x - p_j} \mathrm{d}x \right] \\
& & + \sum_{n\geq -2}  a_n \mathrm{D}_\beta^n(z) 
\IEEEstrut\end{IEEEeqnarraybox}
\end{equation*}
Doppler broadening of the Laurent expansion part (\ref{eq::Doppler broadened monomials}), which describes the threshold behavior, was established in \cite{Ducru_JCP_2017} (eq. (14) to (16)), and the recurrence formulae (\ref{eq:: recurrence formula for polynomial Doppler broadening}) are obtained through integration by parts. \\
The resonances Doppler broadening was established in \cite{Hwang_1987} (eq. (70) to (75)), which we here recall
\begin{equation*}
\begin{IEEEeqnarraybox}[][c]{l}
\beta \sqrt{\pi} \int^\infty_0 \frac{\mathcal{K}^{\mathbb{D}}_\beta(z,x)}{ x - p_j} \mathrm{d}x \triangleq \int^\infty_0 \mkern-8mu \frac{\mathrm{d}x}{ x - p_j} \left[ \mathrm{e}^{-\left(\frac{z-x}{\beta}\right)^2} \mkern-8mu - \mathrm{e}^{-\left(\frac{z+x}{\beta}\right)^2} \right]  \\ 
= \mkern-3mu \int^\infty_{-\infty}\mkern-3mu \frac{\mathrm{e}^{-\left(\frac{z-x}{\beta}\right)^2} }{ x - p_j}\mathrm{d}x  - \mkern-6mu \int^0_{-\infty} \frac{\mathrm{e}^{-\left(\frac{z-x}{\beta}\right)^2} }{ x - p_j}  \mathrm{d}x  - \mkern-6mu \int^\infty_{0} \frac{\mathrm{e}^{-\left(\frac{z+x}{\beta}\right)^2} }{ x - p_j}  \mathrm{d}x  \\
= \mkern-3mu \int^\infty_{-\infty} \mkern-3mu \frac{\mathrm{e}^{-t^2}}{ t - \left(\frac{z-p_j}{\beta}\right)}\mathrm{d}t  + \mkern-3mu  \int^\infty_{0}\mkern-6mu \mathrm{e}^{-\left(\frac{z+x}{\beta}\right)^2} \mkern-3mu \left[  \frac{1}{ x + p_j} \mkern-3mu - \mkern-3mu \frac{1}{ x - p_j} \right]\mkern-4mu \mathrm{d}x  \\
= \mathrm{i}\pi\; \mathrm{w}\left(\frac{z-p_j}{\beta}\right) - 2 p_j \int^\infty_{0} \frac{\mathrm{e}^{-\left(\frac{z+x}{\beta}\right)^2}}{ x^2 - p_j^2}  \mathrm{d}x 
\IEEEstrut\end{IEEEeqnarraybox}
\label{eq::WPR T derivations}
\end{equation*}
where in the last line we introduced the Faddeyeva function (\ref{eq: Faddeyeva function def}), defined in eq. 7.2.3 of \cite{NIST_DLMF}, which admits the following integral representation for $\Im \left[ z \right] > 0$:
\begin{equation}
\begin{IEEEeqnarraybox}[][c]{rCl}
\mathrm{w}(z) & \underset{\Im\left[z\right]>0}{=} &  \frac{1}{\mathrm{i}\pi} \int_{-\infty}^\infty  \frac{\mathrm{e}^{-t^2}}{t-z}\mathrm{d}t = \frac{2z}{\mathrm{i}\pi} \int_{0}^\infty  \frac{\mathrm{e}^{-t^2}}{t^2-z^2}\mathrm{d}t 
\IEEEstrut \end{IEEEeqnarraybox}
\label{eq: Faddeyeva function integral representation for Im(z)>0}
\end{equation}
In the case $\Im\left[z\right]<0$, we then use the following integral representation: 
\begin{equation}
\begin{IEEEeqnarraybox}[][c]{rCl}
-\left[\mathrm{w}(z^*)\right]^* = - \mathrm{w}(-z)& \underset{\Im\left[z\right]<0}{=} &  \frac{1}{\mathrm{i}\pi} \int_{-\infty}^\infty  \frac{\mathrm{e}^{-t^2}}{t-z}\mathrm{d}t
\IEEEstrut \end{IEEEeqnarraybox}
\label{eq: Faddeyeva function integral representation for Im(z)<0}
\end{equation}
So that, calling the Faddeyeva function directly for the poles in the complex lower semi-plane, $\Im\left[\frac{z-p_j}{\beta}\right] > 0$, while for the others we use
$- \left[\mathrm{w}\left(z^*\right)\right]^* = - \mathrm{w}\left(-z\right)$ to calculate the integral representation (the pole representation has complex conjugate poles), the Solbrig kernel Doppler broadening operation yields (\ref{eq::WPR T exact with C-term correction}).
Hwang undertook an in-depth study of the correction term $\mathrm{C}\left(\frac{z}{\beta}, \frac{p_j}{\beta}\right)$ in section IV.D of \cite{Hwang_1987}, showing it is negligible in most physical applications. 
Therefore, approximation (\ref{eq::WPR T}) is effectively faithful, in particular at high energies-to-temperature ratios $z/\beta \gg 1$.
\end{proof}

\begin{figure*}[ht!!] 
  \centering
  \subfigure[\ T = 300 Kelvin]{\includegraphics[width=0.85\textwidth]{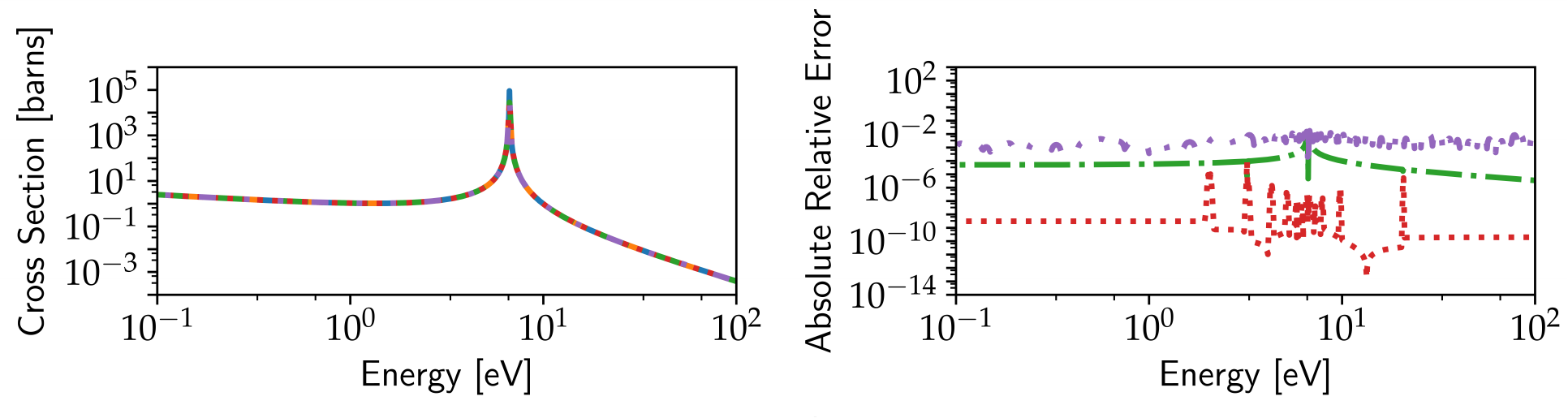}}
  \subfigure[\ T = $10^5$ Kelvin]{\includegraphics[width=0.85\textwidth]{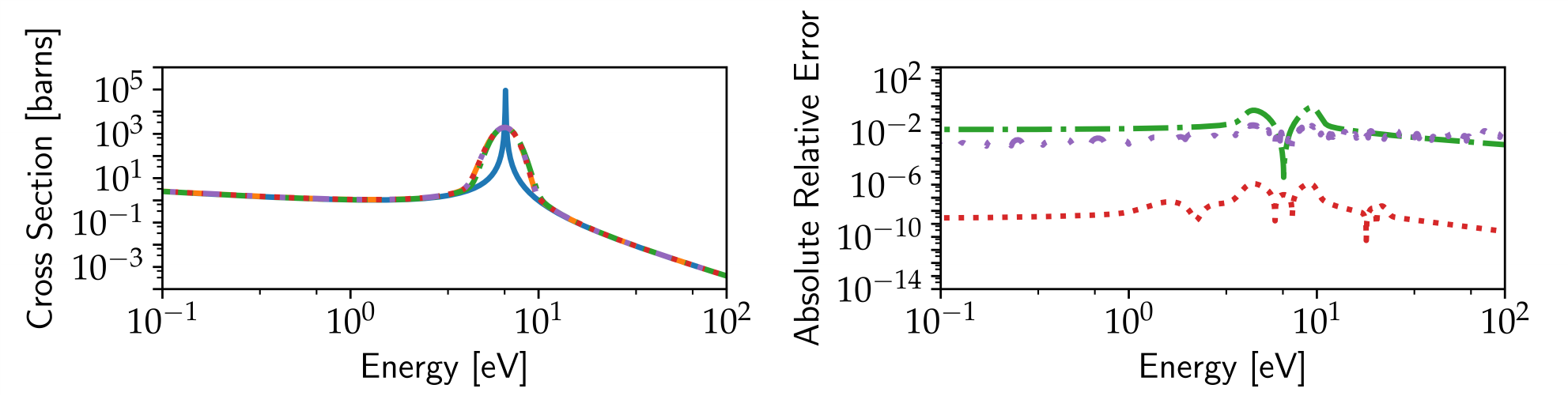}} \\
  \subfigure[\ T = $10^7$ Kelvin]{\includegraphics[width=0.85\textwidth]{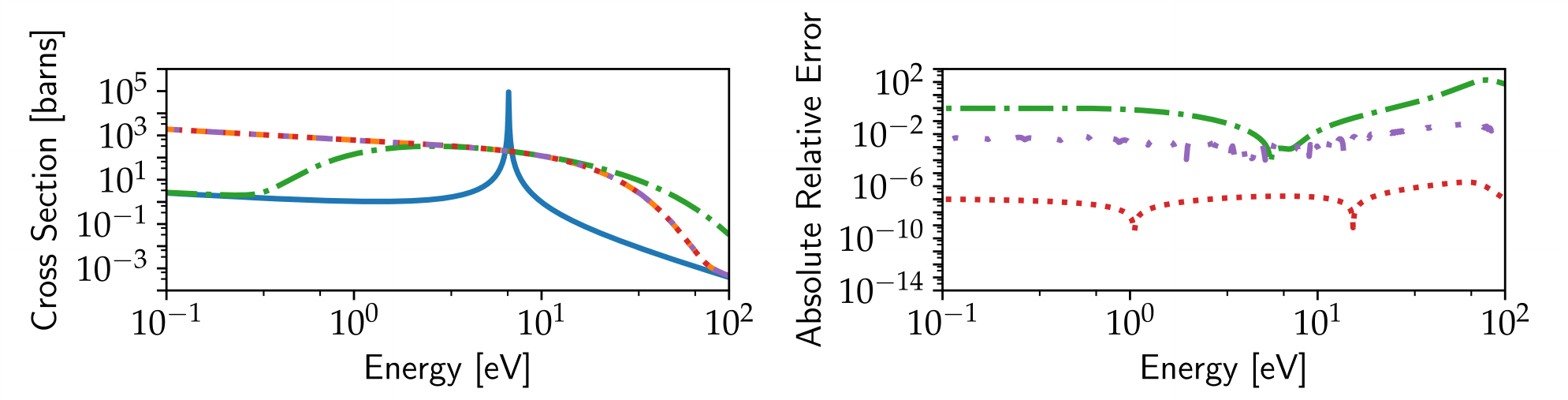}} \\
  \includegraphics[height=2.5cm]{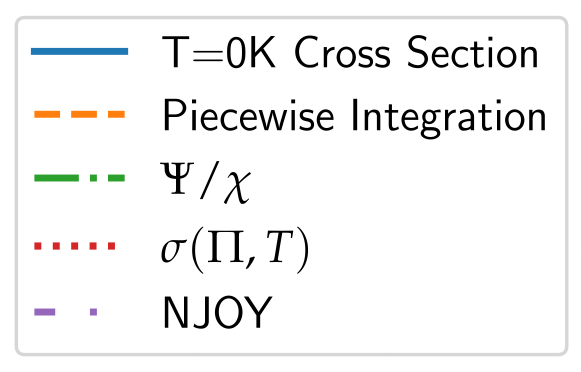}
  \caption{\small{Accuracy of different Doppler-broadening methods. Using the SLBW resonance description given in appendix \ref{appendix: Single Breit-Wigner resonance}, the cross section (\ref{eq:: caputre SLBW}) is reconstructed at T=0K. For each temperature $\{300, 10^5, 10^7\}$ Kelvin, the cross section is broadened using four different methods: (i) numerical integration of the Solbrig Kernel (\ref{eq::Solbrig_Kernel}); (ii) Using the $\psi_T/ \chi_T$ approximation (\ref{eq::T K psi chi functions Faddeyeva}) for SLBW Doppler broadening (\ref{eq:: Doppler broadened single Breit Wigner resonance Faddeyeva}); (iii) conversion (\ref{eq::caputre SLBW Real multipoles (poles and residues)}) of the resonance parameters $\left\{ \Gamma \right\}$ to multipoles $\left\{ \Pi \right\}$ and analytic Doppler broadening of Windowed Multipole Representation (\ref{eq::caputre SLBW Real multipole representation}) from theorem \ref{theo::WMP Doppler broadening} equation (\ref{eq::WPR T}); (iv) formulation of the parameters in ENDF format and processing using NJOY \cite{NJOYmacfarlaneMethodsProcessingENDF2010}. For each temperature, the right column shows the absolute relative error for methods (ii), (iii), and (iv) to the direct integration of the Solbrig Kernel (i). Note: NJOY was run with a tolerance parameter of $10^{-2}$ as higher accuracy required a prohibitively long computation time.}}
  \label{fig:Doppler boradening of WMP}
\end{figure*}

Compared to the traditional ``psi-chi'' method (\ref{eq:: Doppler broadened single Breit Wigner resonance Faddeyeva}), theorem \ref{theo::WMP Doppler broadening} gives a much more general way to Doppler broaden nuclear cross sections, applicable to charged or uncharged particles of any angular momentum.
Theorem \ref{theo::WMP Doppler broadening} also motivates why we decomposed the resonances in $z=\sqrt{E}$ space: it enables more accurate analytic Doppler broadening, since the latter happens in wavenumber space as Hwang showed in eq. (65) of \cite{Hwang_1987}.

Note that Hwang derived equations to analytically Doppler broaden his pole representation (\ref{eq: total cross section WMP Hwang neutron no threshold}), with energy-dependent residues, showing that the $\mathrm{e}^{-2\mathrm{i}\rho}$ component shifts the Faddeyeva function evaluation, adding a purely imaginary offset in eq. (6) of \cite{Hwang_2003}. 
Nonetheless, this approach is not generalizable to Coulomb channels nor to thresholds, while theorem \ref{theo::WMP Doppler broadening} is.

To compare these different Doppler broadening methods, we conducted numerical calculations on the first capture resonance of $^{\mathrm{238}}\mathrm{U}$, in the simple Single-Level Breit-Wigner resonance case of appendix \ref{appendix: Single Breit-Wigner resonance}, reporting the results in figure \ref{fig:Doppler boradening of WMP}. They show the analytic Windowed Multipole Doppler broadening exactly matches the direct piece-wise integration of Solbrig's kernel (\ref{eq::Solbrig_Kernel}) to $10^{-6}$ relative error, significantly outperforming the SIGMA1 method \cite{cullenExactDopplerBroadening1976a} of NYOJ \cite{NJOYmacfarlaneMethodsProcessingENDF2010}, while the traditional $\psi_T / \chi_T$ approximation (\ref{eq:: Doppler broadened single Breit Wigner resonance Faddeyeva}) breaks down at high temperatures. Note than in this particular SLBW case of appendix \ref{appendix: Single Breit-Wigner resonance}, the poles are exact opposites of one another, while the residues are the same, so that they cancel out of the $\mathrm{C}$-function correction (\ref{eq: C-function correction term}), hence the analytic Doppler broadening of the Windowed Multipole Representation (\ref{eq::WPR T}) is exact. This cancelling out of $\mathrm{C}$-function correction (\ref{eq: C-function correction term}) is also true in general of zero threshold neutral particles s-wave cross sections, which behave as $1/z$ at low energies, thereby yielding identical residues $r_j^+ = r_j^-$ for opposite $z$-poles pairs $p_j^+ = - p_j^-$.

\subsection{\label{subsec:Temperature derivatives of Doppler broadened Windowed Multipole cross sections} Temperature derivatives of Doppler broadened Windowed Multipole cross sections}

The analytic Doppler broadening of Windowed Multipole cross sections (theorem \ref{theo::WMP Doppler broadening}) has the additional advantage that one can compute all its temperature derivatives by means of simple recurrence formulae, as we here establish in theorem \ref{theo::WMP Doppler broadening derivatives}.

\begin{theorem}\label{theo::WMP Doppler broadening derivatives}\textsc{Temperature derivatives of Windowed Multipole cross sections} \\
Consider the approximate Doppler broadened Windowed Multipole Representation of R-matrix cross sections (\ref{eq::WPR T}) from theorem \ref{theo::WMP Doppler broadening}, upon change of variables $\theta \triangleq \frac{1}{\beta}$
\begin{equation*}
\begin{IEEEeqnarraybox}[][c]{rCl}
\sigma_{T}(z)  & \underset{\mathcal{W}(E)}{\simeq} &   \frac{1}{z^2}\Re\left[\mathrm{i}\sqrt{\pi} \sum_{j \geq 1}  r_j \cdot \theta \cdot  \mathrm{w}\Big(\theta\left(z-p_j\right)\Big) \right] \\
 & & + \sum_{n\geq -2}  a_n \mathrm{D}_\beta^n(z)
\IEEEstrut\end{IEEEeqnarraybox}
\label{eq::WPR T theta}
\end{equation*}
Then its $k$-th temperature derivative can be computed as
\begin{equation}
\begin{IEEEeqnarraybox}[][c]{rCl}
\partial^{(k)}_{T}\sigma_{T}(z)  & \underset{\mathcal{W}(E)}{\simeq} &   \frac{1}{z^2}\Re\left[\mathrm{i}\sqrt{\pi} \sum_{j \geq 1} r_j\cdot \mathrm{X}^{(k)}_\beta \left(z-p_j\right) \right] \\
 & & + \sum_{n\geq -2}  a_n \cdot  \partial^{(k)}_{T}\mathrm{D}_\beta^n(z)
\IEEEstrut\end{IEEEeqnarraybox}
\label{eq::WPR T temperature derivatives}
\end{equation}
$\mathrm{X}^{(k)}_\beta \left(z-p_j\right) $ are the $k$-th temperature derivatives of the Doppler broadened resonances:
\begin{equation}
\begin{IEEEeqnarraybox}[][c]{rcl}
\mathrm{X}^{(k)}_\beta \left(z-p_j\right) & \ \triangleq \ & \partial^{(k)}_{T}\left[ \theta \cdot  \mathrm{w}\Big(\theta\left(z-p_j\right)\Big) \right] \\
& = & \sum_{n=1 }^k \Bigg[ \left( \partial^{(n)}_{\theta} \theta \cdot \mathrm{w}\big(\theta (z-p_j)\big) \right) \quad \times \Bigg. 
\\
& & \quad \quad \Bigg.B_{k,n}\left( \theta^{(1)},  \theta^{(2)} , \hdots , \theta^{(k-n+1)} \right) \Bigg] 
\IEEEstrut\end{IEEEeqnarraybox}
\label{eq:: Temperature derivative of Doppler broadened resonances and Arbogast formula}
\end{equation}
where the sum is the Arbogast composite derivatives (Fa\`a di Bruno) formula \cite{Arbogast_1800}, linking the $\theta$-derivatives
\begin{equation}
\begin{IEEEeqnarraybox}[][c]{rCl}
\partial^{(n)}_{\theta} \theta \cdot \mathrm{w}\big(\theta (z-p_j)\big) & \underset{\forall n\geq 1}{=} & -\frac{(z-p_j)^{n-1}}{2}\mathrm{w}^{(n+1)}\big(\theta (z-p_j)\big)
\IEEEstrut\end{IEEEeqnarraybox}
\label{eq:: Faddeyeva resonances theta derivatives}
\end{equation}
to the $\theta^{(n)}$ temperature derivatives of $\theta$
\begin{equation}
\begin{IEEEeqnarraybox}[][c]{rCl}
\theta^{(n)} & \triangleq & \partial^{(n)}_{T}\theta = \frac{1}{\beta} \left( \frac{-1}{2}\right)^n \frac{\left(2n-1\right)!!}{\left( T - T_0 \right)^{n}}
\IEEEstrut\end{IEEEeqnarraybox}
\label{eq:: theta derivatives with T}
\end{equation}
by means of the partial exponential Bell polynomials $ B_{k,n}\left( \theta^{(1)},  \theta^{(2)} , \hdots , \theta^{(k-n+1)} \right)$ \cite{bellExponentialPolynomials1934, abbasNewIdentitiesBell2005, boyadzhievExponentialPolynomialsStirling2009}. \\
The derivatives of the Faddeyeva function can be computed using recurrence formulae (c.f. 7.10 in \cite{NIST_DLMF}):
\begin{equation}
\begin{IEEEeqnarraybox}[][c]{rCl}
\mathrm{w}^{(1)}(z) & = & - 2 z \mathrm{w}(z) + \frac{2\mathrm{i}}{\sqrt{\pi}} \\
\mathrm{w}^{(n+2)}(z)  & = & - 2z \mathrm{w}^{(n+1)}(z) - 2 (n+1) \mathrm{w}^{(n)}(z)
\IEEEstrut \end{IEEEeqnarraybox}
\label{eq: Faddeyeva function derivatives recurrence}
\end{equation}

$\partial^{(k)}_{T}\mathrm{D}_\beta^n(z)$ are the temperature derivatives of the Doppler broadened monomials, which are subject to the following recurrence formulae, defining $a \triangleq \frac{k_\mathbb{B}}{A}$:
\begin{equation}
\begin{IEEEeqnarraybox}[][c]{rCl}
\partial^{(k)}_{T}\mathrm{D}^{n+2}_\beta(z) & \underset{\forall n\geq 1}{=} & \left[\frac{\beta^2}{2}(2n+1) + z^2 \right] \partial^{(k)}_{T} \mathrm{D}^n_\beta(z)  \\
 & & +  \frac{a}{2}(2n+1) k \; \partial^{(k-1)}_{T} \mathrm{D}^n_\beta(z)  \\ 
  & & - \frac{n(n-1)}{4}\Bigg[ \beta^4 \partial^{(k)}_{T}\mathrm{D}^{n-2}_\beta(z)  \\ & &  \quad \quad \quad + \quad 2a\beta k \;  \partial^{(k-1)}_{T}\mathrm{D}^{n-2}_\beta(z) \\ & &  \quad \quad \quad + \quad  a^2k(k-1)\partial^{(k-2)}_{T}\mathrm{D}^{n-2}_\beta(z)  \Bigg]\\
 \partial^{(k)}_{T}\mathrm{D}^0_\beta(z) & = & \mkern-10mu \left[\frac{\beta^2}{2} \mkern-3mu + \mkern-3mu z^2\right]\mkern-4mu\partial^{(k)}_{T}\mathrm{D}^{-2}_\beta \mkern-2mu (z) + \mkern-3mu \frac{a}{2}k \; \partial^{(k-1)}_{T}\mathrm{D}^{-2}_\beta \mkern-2mu (z) \\ & & \mkern-10mu +\mkern-1mu \frac{1}{z\sqrt{\pi}}\mkern-6mu\left[ \mkern-1mu \beta^2 \partial^{(k)}_{T} \mkern-3mu \theta \mathrm{e}^{-\left(z \theta \right)^2} \mkern-6mu + \mkern-2mu a k \, \partial^{(k-1)}_{T} \mkern-3mu \theta \mathrm{e}^{-\left(z \theta \right)^2} \mkern-2mu \right] \\
\partial^{(k)}_{T} \mathrm{D}^{-1}_\beta(z) & = & \frac{1}{z} \delta_{k,0}\\
 \partial^{(k)}_{T}\mathrm{D}^{-2}_\beta(z) & = & \frac{1}{z^2}\partial^{(k)}_{T}\mathrm{erf}\left(z\theta\right)
\IEEEstrut \end{IEEEeqnarraybox}
\label{eq:: recurrence formula for temperature derivatives of Doppler boradened monomials}
\end{equation}
In recurrence relations (\ref{eq:: recurrence formula for temperature derivatives of Doppler boradened monomials}), the terms $\partial^{(k)}_{T} \theta \mathrm{e}^{-\left(z \theta \right)^2}$ can themselves be computed using Arbogast's formula: 
\begin{equation}
\begin{IEEEeqnarraybox}[][c]{rCl}
\partial^{(k)}_{T} \theta \mathrm{e}^{-\left(z \theta \right)^2} & = & \mathrm{e}^{-\left(z \theta \right)^2} \sum_{n=1 }^k \Bigg[ F_z^{(n)}(\theta) \quad \times \Bigg. \\
& & \quad \quad \Bigg.B_{k,n}\left( \theta^{(1)},  \theta^{(2)} , \hdots , \theta^{(k-n+1)} \right) \Bigg]
\IEEEstrut \end{IEEEeqnarraybox}
\label{eq:: Arbogast formula for temperature derivatives of Doppler boradened Gaussian}
\end{equation}
where $F_z^{(n)}(\theta) $ are polynomials of degree $n+1$ defined as 
\begin{equation}
\begin{IEEEeqnarraybox}[][c]{rCl}
F_z^{(n)}(\theta) & \triangleq &  \mathrm{e}^{\left(z \theta \right)^2} \partial^{(n)}_{\theta} \theta \mathrm{e}^{-\left(z \theta \right)^2}  = \sum_{i=0}^{n+1}\alpha_i^{(n)} \theta^i 
\IEEEstrut\end{IEEEeqnarraybox}
\label{eq:: F polynomial definitions}
\end{equation}
which are recursively constructed from $F_z^{(0)}(\theta) = \theta$ as 
\begin{equation}
\begin{IEEEeqnarraybox}[][c]{rCl}
F_z^{(n+1)}(\theta) & = & \partial_{\theta}  F_z^{(n)}(\theta)  - 2z^2\theta F_z^{(n)}(\theta) 
\IEEEstrut\end{IEEEeqnarraybox}
\label{eq:: F polynomial recursive derivative construction}
\end{equation}
entailing these recurrence formulae on their coefficients:
\begin{equation}
\begin{IEEEeqnarraybox}[][c]{rClrCl}
\alpha_0^{(0)} & =  & 0  & \alpha_1^{(0)} & =  & 1\\
\alpha_{n+1}^{(n+1)} & =  & -2z^2 \alpha_{n}^{(n)} & \alpha_{n+2}^{(n+1)} & =  & -2z^2 \alpha_{n+1}^{(n)} \\
\alpha_{i}^{(n+1)} & \underset{1 \leq i \leq n}{=}  & (i+1)\alpha_{i+1}^{(n)} -2z^2 \alpha_{i-1}^{(n)}
\IEEEstrut \end{IEEEeqnarraybox}
\label{eq:: recurrence formula for the F coefficients}
\end{equation}

Finally, the terms $\partial^{(k)}_{T}\mathrm{erf}\left(z\theta\right)$ in recurrence relations (\ref{eq:: recurrence formula for temperature derivatives of Doppler boradened monomials}) can also be computed using Arbogast's formula: 
\begin{equation}
\begin{IEEEeqnarraybox}[][c]{rCl}
\partial^{(k)}_{T} \mkern-2mu \mathrm{erf}\left(z\theta\right) & = & \mkern-3mu \sum_{n=1 }^k \mkern-3mu \left( \partial^{(n)}_{\theta}\mathrm{erf}\left(z\theta\right)\right) \mkern-1mu \cdot \mkern-1mu B_{k,n}\mkern-4mu\left(\mkern-1mu \theta^{(1)} \mkern-3mu , \hdots , \theta^{(k-n+1)} \mkern-1mu \right)
\IEEEstrut\end{IEEEeqnarraybox}
\label{eq:: Arbogast composite derivatives Faa di Bruno formula for error function term}
\end{equation}
in which the $\theta$ derivatives can be expressed as
\begin{equation}
\begin{IEEEeqnarraybox}[][c]{rCl}
\partial^{(n)}_{\theta}\mathrm{erf}\left(z\theta\right) & \underset{n\geq 1}{=} & z^n (-1)^{n-1}\frac{2}{\sqrt{\pi}}\mathrm{H}_{n-1}\left( z\theta \right) \mathrm{e}^{-\left(z\theta\right)^2}
\IEEEstrut\end{IEEEeqnarraybox}
\label{eq:: theta derivatives of error function as function of Hermite polynomials}
\end{equation}
where the Hermite polynomials $\mathrm{H}_n(z)$ are recursively calculable from $\mathrm{H}_{0} = 1$ and $\mathrm{H}_{1} =  2z$ as:
\begin{equation}
\begin{IEEEeqnarraybox}[][c]{rCl}
\mathrm{H}_{n+1} & \underset{n\geq 1}{=} & 2z\mathrm{H}_{n} - 2n \mathrm{H}_{n-1}
\IEEEstrut\end{IEEEeqnarraybox}
\label{eq:: Hermite polynomial recurrence}
\end{equation}

\end{theorem}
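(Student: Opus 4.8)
The plan is to prove Theorem \ref{theo::WMP Doppler broadening derivatives} by straightforward repeated differentiation of the analytic Doppler-broadened expression (\ref{eq::WPR T}), splitting the cross section into its resonant (Faddeyeva) part and its Laurent (threshold) part, and treating each term separately. In the change of variables $\theta \triangleq 1/\beta = \sqrt{A/(k_\mathbb{B}(T-T_0))}$, the resonant contribution of a single pole $p_j$ is $\theta\,\mathrm{w}\!\left(\theta(z-p_j)\right)$, an analytic function of $\theta$ composed with the analytic function $\theta \mapsto \theta(z-p_j)$ composed with $T \mapsto \theta(T)$; the $k$-th $T$-derivative is then the Fa\`a di Bruno (Arbogast) expansion (\ref{eq:: Temperature derivative of Doppler broadened resonances and Arbogast formula}), whose ingredients are (i) the $\theta$-derivatives $\partial_\theta^{(n)}\big[\theta\,\mathrm{w}(\theta(z-p_j))\big]$ and (ii) the successive $T$-derivatives $\theta^{(n)}$ of $\theta$. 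Both are elementary: (ii) follows from $\theta = \sqrt{a}\,(T-T_0)^{-1/2}$ by induction, giving (\ref{eq:: theta derivatives with T}) with the $(2n-1)!!$ factor from differentiating a power $-1/2$; and (i) follows by Leibniz's rule applied to the product $\theta \cdot \mathrm{w}(\theta(z-p_j))$ together with the chain rule, using that $\mathrm{w}'(u) = -2u\,\mathrm{w}(u) + 2\mathrm{i}/\sqrt{\pi}$, which telescopes the product rule into (\ref{eq:: Faddeyeva resonances theta derivatives}) — the single surviving term $-\tfrac12(z-p_j)^{n-1}\mathrm{w}^{(n+1)}$ — and one then records the standard three-term recurrence (\ref{eq: Faddeyeva function derivatives recurrence}) for $\mathrm{w}^{(n)}$, which is just the ODE $\mathrm{w}' + 2z\mathrm{w} = 2\mathrm{i}/\sqrt{\pi}$ differentiated $n$ times.

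Next I would handle the Laurent/threshold part $\sum_{n\geq -2} a_n\,\mathrm{D}_\beta^n(z)$ by differentiating the zero-temperature recurrence (\ref{eq:: recurrence formula for polynomial Doppler broadening}) for the Doppler-broadened monomials $\mathrm{D}_\beta^n$ with respect to $T$. Since $\beta^2 = a(T-T_0)$ is \emph{linear} in $T$ (with $a = k_\mathbb{B}/A$), the coefficients appearing in (\ref{eq:: recurrence formula for polynomial Doppler broadening}) — namely $\tfrac{\beta^2}{2}(2n+1)+z^2$ and $(\tfrac{\beta^2}{2})^2 n(n-1)$ — are polynomials of degree $1$ and $2$ in $T$ respectively; applying Leibniz's rule to the products then produces exactly the finite sums in (\ref{eq:: recurrence formula for temperature derivatives of Doppler boradened monomials}), where only the $k$, $(k-1)$, and $(k-2)$ derivatives survive because $\partial_T^{(m)}\beta^2 = 0$ for $m\geq 2$ and $\partial_T^{(m)}\beta^4 = 0$ for $m\geq 3$. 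The base cases reduce to differentiating $\mathrm{D}_\beta^{-1} = 1/z$ (constant in $T$, hence the $\delta_{k,0}$), $\mathrm{D}_\beta^{-2} = z^{-2}\mathrm{erf}(z\theta)$, and the Gaussian remainder $\tfrac{\beta}{z\sqrt\pi}\mathrm{e}^{-(z/\beta)^2} = \tfrac{1}{z\sqrt\pi}\,\theta^{-1}\mathrm{e}^{-(z\theta)^2}$ — wait, more precisely $\tfrac{\beta}{z\sqrt\pi}\mathrm{e}^{-(z\theta)^2}$, whose derivatives, after pulling out the linear-in-$T$ factor $\beta^2$ versus the $\theta^{-1}$ piece, feed back the auxiliary quantities $\partial_T^{(k)}\big[\theta\,\mathrm{e}^{-(z\theta)^2}\big]$ and $\partial_T^{(k)}\mathrm{erf}(z\theta)$.

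Finally I would dispatch these two auxiliary families. For $\partial_T^{(k)}\big[\theta\,\mathrm{e}^{-(z\theta)^2}\big]$: again Fa\`a di Bruno in the single inner variable $\theta$, so I need the $\theta$-derivatives $\partial_\theta^{(n)}\big[\theta\,\mathrm{e}^{-(z\theta)^2}\big] = \mathrm{e}^{-(z\theta)^2}F_z^{(n)}(\theta)$; differentiating this identity once gives the recursion (\ref{eq:: F polynomial recursive derivative construction}) $F_z^{(n+1)} = \partial_\theta F_z^{(n)} - 2z^2\theta F_z^{(n)}$ starting from $F_z^{(0)} = \theta$, and reading off coefficients of powers of $\theta$ yields (\ref{eq:: recurrence formula for the F coefficients}) (note $F_z^{(n)}$ has degree $n+1$ because the $-2z^2\theta$ multiplication raises degree by one while differentiation lowers it by one). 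For $\partial_T^{(k)}\mathrm{erf}(z\theta)$: Fa\`a di Bruno with inner variable $\theta$ and the classical identity $\partial_\theta^{(n)}\mathrm{erf}(z\theta) = \tfrac{2}{\sqrt\pi}z^n(-1)^{n-1}\mathrm{H}_{n-1}(z\theta)\mathrm{e}^{-(z\theta)^2}$, which is just the definition of Hermite polynomials via $\tfrac{d^m}{du^m}\mathrm{e}^{-u^2} = (-1)^m\mathrm{H}_m(u)\mathrm{e}^{-u^2}$ applied to $u = z\theta$, together with the standard Hermite recurrence (\ref{eq:: Hermite polynomial recurrence}). Assembling the resonant part, the Laurent part, and these auxiliaries gives (\ref{eq::WPR T temperature derivatives}). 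The only mildly delicate point — and the one I would write most carefully — is bookkeeping the Fa\`a di Bruno / partial Bell polynomial indices consistently across the nested compositions (the outer function is a function of $\theta$ alone, and $\theta$ itself is the ``inner'' function of $T$, so a single application of Arbogast's formula with the $\theta^{(n)}$ as arguments suffices; one must not double-count the $z$-dependence, which enters only through the coefficients and not through the composition). No genuine analytic obstacle arises: every object is real-analytic in $T$ on $T>T_0$ (and the Faddeyeva function is entire), so all term-by-term differentiations are justified, and the result is a finite closed recurrence.
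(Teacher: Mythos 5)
Your proposal is correct and follows essentially the same route as the paper's proof: direct differentiation via the Leibniz rule and the Arbogast--Fa\`a di Bruno formula with partial Bell polynomials, the Faddeyeva recurrence for the $\theta$-derivatives of the resonant term, and the $F_z^{(n)}$ and Hermite polynomials for the auxiliary quantities, with your account merely making explicit why the Leibniz sums truncate (linearity of $\beta^2$ in $T$) and the analyticity justifying term-by-term differentiation. One incidental observation: carrying out your Leibniz expansion of the $\beta^4/4$ coefficient literally yields a middle term $2a\beta^2 k\,\partial^{(k-1)}_T$, so the $2a\beta k$ printed in (\ref{eq:: recurrence formula for temperature derivatives of Doppler boradened monomials}) appears to be a typographical slip that your derivation would correct.
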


\begin{proof}
The underlying assumption of the proof is that one can neglect the derivatives of the correction term (\ref{eq: C-function correction term}).
The proof consists of a series of derivatives expanded using the general Leibniz rule and the Arbogast formula for composite derivatives (Fa\`a di Bruno) (c.f. p.43 of \cite{Arbogast_1800}), in which the Bell polynomials can be computed as referenced in \cite{bellExponentialPolynomials1934, abbasNewIdentitiesBell2005, boyadzhievExponentialPolynomialsStirling2009}.
Direct differentiation yields the temperature derivatives of $\theta$ (\ref{eq:: theta derivatives with T}).
Expression (\ref{eq:: Faddeyeva resonances theta derivatives}) is obtained using the Faddeyeva function recurrence formula (\ref{eq: Faddeyeva function derivatives recurrence}), documented in 7.10 of \cite{NIST_DLMF}. 
The $F_z^{(n)}(\theta)$ polynomials (\ref{eq:: F polynomial definitions}) are defined from $\partial^{(n)}_{\theta} \theta \mathrm{e}^{-\left(z \theta \right)^2}  = F_z^{(n)}(\theta)  \mathrm{e}^{-\left(z \theta \right)^2}$ and their degree $n+1$ stems from the recursive derivatives (\ref{eq:: F polynomial recursive derivative construction}) initialized at  $F_z^{(0)}(\theta) = \theta$, entailing the recurrence formula for the coefficients (\ref{eq:: recurrence formula for the F coefficients}).
Similarly, expression (\ref{eq:: theta derivatives of error function as function of Hermite polynomials}) is derived from change of variable $z \to \theta z$, and using the derivative formula for the error function (c.f. Abramowitz \& Stegun, p.298, eq. 7.1.19 \cite{Abramowitz_and_Stegun}, or 7.10.1 in \cite{NIST_DLMF}):
\begin{equation*}
\begin{IEEEeqnarraybox}[][c]{rCl}
\IEEEstrut
 \mathrm{erf}^{(n+1)}(z) & = & (-1)^n \frac{2}{\sqrt{\pi}} \mathrm{H}_n(z) \mathrm{e}^{-z^2}
\end{IEEEeqnarraybox}
\end{equation*}
while the Hermite polynomials recurrence relation (\ref{eq:: Hermite polynomial recurrence}) is well known and documented (c.f. 18.9 of \cite{NIST_DLMF}).
\end{proof}

Underpinning this direct differentiation approach is the assumption that the $\mathrm{C}$-function correction term (\ref{eq: C-function correction term}), itself negligible, also has negligible temperature derivatives. It is nonetheless possible to extend this method to explicitly include thermal derivatives of the correction term (\ref{eq: C-function correction term}), by noticing that these derivatives follow a similar polynomial structure as (\ref{eq:: F polynomial definitions}) and are subject to a recurrence relation similar to (\ref{eq:: F polynomial recursive derivative construction}).

\subsection{\label{subsec: Fourier transform approach to temperature treatment} Fourier transform approach to \\ temperature treatment}

Ferran developed a more general approach, based on Fourier transforms, to Doppler broaden nuclear nuclear cross sections (we here only discussed Doppler broadening of angle-integrated cross sections)\cite{ferranNewMethodDoppler2015a}.
In theorem \ref{theo::WMP Fourier transform Doppler broadening of Windowed Multipole cross sections}, we generalize Ferran's method, begetting arbitrary-order temperature derivatives of Doppler broadened cross sections, while setting a more general framework for temperature treatments such as low-energy thermal neutrons scattering with the phonons of the target's crystalline structure. Moreover, when applied to the Windowed Multipole Representation of R-matrix cross sections, this Fourier transform approach exactly accounts for the entire nuclear cross section, without neglecting the $\mathrm{C}$-function correction term (\ref{eq: C-function correction term}). This generality comes at the additional cost of having to compute Fourier transforms on-the-fly. Also, Fourier transforms can be numerically sensitive to the tails of distributions, meaning one has to be careful as to how the cross sections are extended beyond the treated windows (c.f. Ferran's discussion in section IV.B.2 of \cite{ferranNewMethodDoppler2015a}).

We here recall Ferran's general Fourier transform method from \cite{ferranNewMethodDoppler2015a}.
The function $ f \star g $ designates the convolution product between functions $f$ and $g$, defined as:
\begin{equation}
\begin{IEEEeqnarraybox}[][c]{lrl}
\IEEEstrut
 f \star g \; (x) \underset{\forall x \in \mathbb{R}}{\triangleq} \int_{\mathbb{R}} f(t)g(x-t)\mathrm{d}t
\end{IEEEeqnarraybox}
\label{def: convolution product}
\end{equation}
Ferran expressed Solbrig's kernel (\ref{eq::Solbrig_Kernel}) Doppler broadening operation as a convolution product by introducing the \textit{Ferran representation} odd-parity function \cite{ferranNewMethodDoppler2015a}:
\begin{equation}
s_T : z \in \mathbb{R} \mapsto \left\{
\begin{array}{rl}
  z^2 \sigma_T(z) &  \forall z \in \mathbb{R}^*_+ \\
  0 & \text{if  }  z=0 \\
  - z^2 \sigma_T(-z) & \forall z \in \mathbb{R}^*_-
  \end{array}
\right.
\label{defining the generalized cross section}
\end{equation}
Applying Solbrig's Kernel to $s_T$ yields a linear convolution product operator that transforms the Ferran representation $s_0$  of the cross section at temperature $T_0$, to $s_T$ at temperature $T > T_0$ as follows \cite{ferranNewMethodDoppler2015a}:
\begin{equation}
\begin{IEEEeqnarraybox}[][c]{lrl}
s_T = s_0 \star \mathcal{K}^\mathbb{B}_T
\end{IEEEeqnarraybox} \IEEEstrut
\label{Generalized cross section convolution product}
\end{equation}
where $\mathcal{K}^\mathbb{B}_T$ is the Maxwell-Bolztmann distribution of energies of the target
  \begin{equation}
\mathcal{K}^\mathbb{B}_T(z) \underset{\forall z \in \mathbb{R}}{\triangleq} \frac{1}{\beta\sqrt{\pi}}\mathrm{e}^{-\left(\frac{z}{\beta}\right)^2}
\label{eq: Maxwell-Bolztmann distibution of energies}
\end{equation}
The Fourier transform of a function $f$ is defined as (unitary, ordinary frequency convention)\cite{fourierTheorieAnalytiqueChaleur1822}:
\begin{equation}
\begin{IEEEeqnarraybox}[][c]{C}
\widehat{f}(\nu) \triangleq  \int_{\mathbb{R}} f(t)\cdot e^{-\mathrm{i}2\pi\nu t}\mathrm{d}t
\IEEEstrut \end{IEEEeqnarraybox}
\label{eq: Fourier transform}
\end{equation}
for which the inverse Fourier transform is: 
\begin{equation}
\begin{IEEEeqnarraybox}[][c]{C}
\IEEEstrut
f(x) = \int_{\mathbb{R}} \widehat{f}(\nu)\cdot e^{i2\pi\nu x}\mathrm{d}\nu
\IEEEstrut\end{IEEEeqnarraybox}
\label{eq: inverse fourier}
\end{equation}
The Fourier transform of any odd-parity function $g$ can be expressed as
\begin{equation}
\begin{IEEEeqnarraybox}[][c]{C}
\widehat{g}(\nu) =  - 2 i   \int_{\mathbb{R}_+} g(t) \sin \left(2 \pi \nu t \right) \mathrm{d}t
\IEEEstrut
\end{IEEEeqnarraybox}
\label{eq: odd-parity Fourier transform}
\end{equation}
Fourier transforms satisfy the convolution property:
\begin{equation}
\begin{IEEEeqnarraybox}[][c]{C}
\widehat{f \star g} = \widehat{f} \cdot \widehat{g}
\IEEEstrut
\end{IEEEeqnarraybox}
\label{eq: Fourier transform convolution}
\end{equation}
The Doppler broadening operation can therefore be performed by calculating the inverse Fourier transform of
\begin{equation}
\begin{IEEEeqnarraybox}[][c]{C}
\widehat{s_T} = \widehat{s_0 \star \mathcal{K}^\mathbb{B}_T} = \widehat{s_0} \cdot \widehat{\mathcal{K}^\mathbb{B}_T}
\IEEEstrut\end{IEEEeqnarraybox}
\label{eq: Ferran Fourier transform method Doppler boradening}
\end{equation}
Since the Fourier transform of Boltzmann kernel $\mathcal{K}^\mathbb{B}_T$ is well-known
\begin{equation}
\begin{IEEEeqnarraybox}[][c]{C}
\widehat{\mathcal{K}^\mathbb{B}_T}(\nu) = \mathrm{e}^{- \left(\pi \beta \nu\right)^2}
\IEEEstrut
\end{IEEEeqnarraybox}
\label{eq: Fourier of Boltzmann}
\end{equation}
given $\widehat{s_0}$, Doppler broadening can therefore be performed as the inverse Fourier transform of $\widehat{s_0}\cdot \mathrm{e}^{- \left(\pi \beta \nu\right)^2}$.

In theorem \ref{theo::WMP Fourier transform Doppler broadening of Windowed Multipole cross sections}, we derive the Fourier transform of windowed multipole cross sections, and generalize Ferran's method to account for arbitrary order temperature derivatives, as an alternative to theorem \ref{theo::WMP Doppler broadening derivatives}.

\begin{theorem}\label{theo::WMP Fourier transform Doppler broadening of Windowed Multipole cross sections}\textsc{Fourier transform Doppler broadening of Windowed Multipole cross sections} \\
Consider the zero Kelvin ($0$ $\mathrm{K} $) Ferran representation of Windowed Multipole R-matrix cross sections (\ref{eq:: sigma(z) Windowed Multipole representation}), i.e. the odd-parity function $s_0(z) = - s_0(-z)$ locally of the form:
\begin{equation}
\begin{IEEEeqnarraybox}[][c]{C}
     s_0(z) \underset{z>0}{\triangleq} z^2\cdot \sigma_0(z)  \underset{\mathcal{W}(E)}{=} \Re_{\mathrm{conj}}\mkern-6mu\left[\sum_{j \geq 1} \frac{r_{j}}{z-p_j}\right] \mkern-3mu + \mkern-3mu \sum_{n\geq 0} a_{n-2} z^n
\label{eq:: Ferran z^2 sigma(z) Windowed Multipole representation}
\IEEEstrut\end{IEEEeqnarraybox}
\end{equation}
Then its Fourier transform (\ref{eq: Fourier transform}) can be expressed as
\begin{equation}
\begin{IEEEeqnarraybox}[][c]{C}
     \widehat{s_0}(\nu)  \underset{\mathcal{W}(E)}{=} \Re_{\mathrm{conj}}\mkern-6mu\left[\sum_{j \geq 1} r_{j} \widehat{\mathrm{V}}_{p_j}(\nu) \right] + \sum_{n\geq 0} a_{n-2} \widehat{\mathrm{F}}_{n}(\nu)
\label{eq:: Ferran fourier transform of Windowed Multipole representation}
\IEEEstrut\end{IEEEeqnarraybox}
\end{equation}
where the Fourier transforms of the Laurent expansions $\widehat{\mathrm{F}}_{n}(\nu)$ can be expressed for either even or odd positive integers $n \geq 0$ as ($\delta^{(n)}(\nu)$ designates the $n$-th derivative of Dirac's Delta distribution):
\begin{equation}
\begin{IEEEeqnarraybox}[][c]{rCl}
     \widehat{\mathrm{F}}_{2n}(\nu) & \triangleq &  (-1)^{n+1}\frac{2\mathrm{i} (2n)!}{(2\pi\nu)^{2n+1}}\\
     \widehat{\mathrm{F}}_{2n+1}(\nu) & \triangleq &  \frac{(-1)^{n}\mathrm{i}}{(2\pi)^{2n+1}} \delta^{(n)}(\nu)
\label{eq:: Ferran fourier transform of Windowed Multipole representation Lauren term}
\IEEEstrut\end{IEEEeqnarraybox}
\end{equation}
and the Fourier transforms of the resonances at pole $p_j$  $\widehat{\mathrm{V}}_{p_j}(\nu)$ can be expressed as
\begin{equation}
\begin{IEEEeqnarraybox}[][c]{rCl}
    \widehat{\mathrm{V}}_{p_j}(\nu) & \underset{\left|\mathrm{ph}(p_j)\right| < \pi}{\triangleq} & - 2 \mathrm{i} \cdot \mathrm{sgn}(\nu) \cdot  f\left(- 2\pi |\nu| p_j \right)
\label{eq:: Ferran fourier transform of Windowed Multipole representation Resonance term}
\IEEEstrut\end{IEEEeqnarraybox}
\end{equation}
where $\mathrm{sgn}(z)$ designates the sign function, and $f$ is the auxiliary function defined in 6.2.17 of \cite{NIST_DLMF}.

The $k^{\text{th}}$-order temperature derivative of Windowed Multipole R-matrix cross sections is the convolution:
\begin{equation}
\begin{IEEEeqnarraybox}[][c]{lrl}
\IEEEstrut
\partial_T^{(k)}s_T = s_0 \star \partial_T^{(k)} \mathcal{K}^\mathbb{B}_T
\label{eq: high-order temperature derivative Fourier method Ferran}
\end{IEEEeqnarraybox}
\end{equation}
which is the inverse Fourier transform (\ref{eq: inverse fourier}) of product 
\begin{equation}
\begin{IEEEeqnarraybox}[][c]{lrl}
\IEEEstrut
\widehat{\partial_T^{(k)}s_T} = \widehat{s_0} \cdot \partial_T^{(k)} \widehat{\mathcal{K}^\mathbb{B}_T}
\label{eq: Fourier transform of Ferran high-order temperature derivative}
\end{IEEEeqnarraybox}
\end{equation}
whose expressions are (\ref{eq:: Ferran fourier transform of Windowed Multipole representation}) for $\widehat{s_0}$ and, defining $a \triangleq \frac{k_\mathbb{B}}{A}$, 
\begin{equation}
\begin{IEEEeqnarraybox}[][c]{lrl}
\IEEEstrut
\partial_T^{(k)} \widehat{\mathcal{K}^\mathbb{B}_T} = a^k (i \pi  \nu )^{2k} \widehat{\mathcal{K}^\mathbb{B}_T}  = a^k ( i \pi \nu )^{2k}  \mathrm{e}^{- \left(\pi \beta \nu\right)^2}
\label{eq: high-order temperature derivative of Boltzmann kernel}
\end{IEEEeqnarraybox}
\end{equation}

\end{theorem}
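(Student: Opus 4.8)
The plan is to leverage Ferran's convolution identity (\ref{Generalized cross section convolution product}), $s_T = s_0 \star \mathcal{K}^\mathbb{B}_T$, which already recasts Doppler broadening as a convolution, and to transfer all temperature dependence onto the Maxwell--Boltzmann Gaussian. First I would establish (\ref{eq: high-order temperature derivative Fourier method Ferran}): the zero-Kelvin Ferran representation $s_0$ carries no temperature dependence, while $\mathcal{K}^\mathbb{B}_T(z) = \frac{1}{\beta\sqrt{\pi}}\mathrm{e}^{-(z/\beta)^2}$ with $\beta^2 = a(T-T_0)$, $a = k_\mathbb{B}/A$, is $C^\infty$ in $T$ on any interval bounded away from $T_0$, with all $T$-derivatives dominated by integrable Gaussian-times-polynomial bounds; dominated convergence therefore lets me differentiate under the convolution integral, $\partial_T^{(k)} s_T = s_0 \star \partial_T^{(k)}\mathcal{K}^\mathbb{B}_T$. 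Applying the Fourier transform and the convolution property (\ref{eq: Fourier transform convolution}) gives (\ref{eq: Fourier transform of Ferran high-order temperature derivative}), $\widehat{\partial_T^{(k)}s_T} = \widehat{s_0}\cdot\partial_T^{(k)}\widehat{\mathcal{K}^\mathbb{B}_T}$, the exchange of $\partial_T$ and the Fourier integral being justified by the same domination estimate. The kernel factor (\ref{eq: high-order temperature derivative of Boltzmann kernel}) then follows by straight differentiation of the known transform (\ref{eq: Fourier of Boltzmann}): $\partial_T^{(k)}\mathrm{e}^{-\pi^2 a(T-T_0)\nu^2} = (-\pi^2 a\nu^2)^k\,\mathrm{e}^{-\pi^2 a(T-T_0)\nu^2} = a^k(\mathrm{i}\pi\nu)^{2k}\,\widehat{\mathcal{K}^\mathbb{B}_T}(\nu)$, using $(\mathrm{i}\pi\nu)^{2k}=(-1)^k(\pi\nu)^{2k}$. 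The $k=0$ case recovers Ferran's broadening (\ref{eq: Ferran Fourier transform method Doppler boradening}).

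Next I would compute $\widehat{s_0}$ term by term, using linearity of the Fourier transform on the local form (\ref{eq:: Ferran z^2 sigma(z) Windowed Multipole representation}) and treating the windowed expression as the global function whose transform is sought (window-edge effects being handled as in section IV.B.2 of \cite{ferranNewMethodDoppler2015a}). For the Laurent part $\sum_{n\geq 0} a_{n-2} z^n$, I replace each monomial $z^n$ (defined for $z>0$) by its odd extension --- which equals $z^n$ itself when $n$ is odd and $\mathrm{sgn}(z)\,z^n$ when $n$ is even --- and evaluate the resulting tempered-distribution transforms: for odd $n$ this is a derivative of Dirac's $\delta$, while for even $n=2m$ I start from $\widehat{\mathrm{sgn}}(\nu)=\frac{1}{\mathrm{i}\pi\nu}$ and apply the frequency-differentiation rule $2m$ times to produce the $\nu^{-(2m+1)}$ terms, giving (\ref{eq:: Ferran fourier transform of Windowed Multipole representation Lauren term}). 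For the resonance part I use the odd-parity transform formula (\ref{eq: odd-parity Fourier transform}), so that each pole contributes $-2\mathrm{i}\int_0^\infty \frac{\sin(2\pi\nu t)}{t-p_j}\mathrm{d}t$; the substitution $u=2\pi|\nu|t$ together with the sign bookkeeping for positive and negative $\nu$ collapses this to $-2\mathrm{i}\,\mathrm{sgn}(\nu)\int_0^\infty \frac{\sin u}{u-2\pi|\nu|p_j}\mathrm{d}u = -2\mathrm{i}\,\mathrm{sgn}(\nu)\,f(-2\pi|\nu|p_j)$, where $f$ is the auxiliary sine-integral function of NIST DLMF §6.2.17 whose integral representation converges, and defines the analytic continuation, precisely for $|\mathrm{ph}(\cdot)|<\pi$ --- a condition met because the Siegert-Humblet poles $p_j$ never lie on $\mathbb{R}_+$ (they are sub-threshold on $\mathrm{i}\mathbb{R}_+$ or in the lower half-plane, per the discussion after (\ref{eq:E_j pole def})). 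Regrouping each $p_j$ with its conjugate $p_j^*$ and invoking Schwarz reflection $f(\bar z)=\overline{f(z)}$ reassembles the $\Re_{\mathrm{conj}}[\,\cdot\,]$ bracket, yielding (\ref{eq:: Ferran fourier transform of Windowed Multipole representation}).

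I expect the main obstacle to be the rigor of these distributional transforms in conjunction with the windowing. The Laurent monomials $z^n$ are not integrable, so $\widehat{s_0}$ genuinely lives in $\mathcal{S}'(\mathbb{R})$ (the $\delta^{(n)}$ and finite-part $\nu^{-(n+1)}$ pieces), and one must either interpret all identities in the tempered-distribution sense and verify that convolution with the Schwartz-class Gaussian is well defined and commutes with the transform, or --- as Ferran does in practice --- work with the cross section restricted to a finite window and quantify the Gibbs-type contamination of $\widehat{s_0}$ from the window boundary. A secondary technical point is confirming that the conditionally convergent integral $\int_0^\infty \frac{\sin(2\pi\nu t)}{t-p_j}\mathrm{d}t$ truly equals the intended branch of $f$ rather than a neighbouring one, which I would settle by contour deformation in the $t$-plane around the non-real pole $p_j$. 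Everything else --- the temperature-derivative recursion, the Gaussian differentiation, the linearity splitting of $s_0$, and the conjugate-pair bookkeeping for the $\Re_{\mathrm{conj}}$ structure --- is routine once these analytic foundations are secured.
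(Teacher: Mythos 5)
Your proposal is correct and follows essentially the same route as the paper's proof: direct term-by-term computation of $\widehat{s_0}$ using parity arguments for the Laurent monomials (odd extension via the sign/Heaviside function) and identification of the half-line resonance integral with the NIST auxiliary function $f$, together with the convolution property and direct differentiation of the Gaussian's transform for the temperature derivatives. The only cosmetic difference is that you argue the phase condition $\left|\mathrm{ph}(p_j)\right|<\pi$ is always met because the poles never lie on $\mathbb{R}_+$, whereas the paper instead supplies the reflection identity $f(z\mathrm{e}^{\pm\mathrm{i}\pi})=\pi\mathrm{e}^{\mp\mathrm{i}z}-f(z)$ as a fallback; both resolutions are adequate.
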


\begin{proof}
The proof consists of directly calculating the corresponding Fourier transforms by developing the linear operators.
Equation (\ref{eq: Fourier transform of Ferran high-order temperature derivative}) stems from the Fourier transform linear property $\widehat{\partial_T^{(k)}\mathcal{K}^\mathbb{B}_T} = \partial_T^{(k)} \widehat{\mathcal{K}^\mathbb{B}_T}$ applied to (\ref{eq: high-order temperature derivative Fourier method Ferran}). Expression (\ref{eq: high-order temperature derivative of Boltzmann kernel}) is obtained by direct differentiation of (\ref{eq: Fourier of Boltzmann}).
In key expression (\ref{eq:: Ferran fourier transform of Windowed Multipole representation}), the Fourier transforms of the Laurent development part (\ref{eq:: Ferran fourier transform of Windowed Multipole representation Lauren term}) are obtained by noticing that odd parity polynomials are already odd functions, while the even parity ones must be written as the difference of (\ref{eq:: Ferran z^2 sigma(z) Windowed Multipole representation}) multiplied by the Heaviside function for domains $\mathbb{R}_-$ and $\mathbb{R}_+$, and then applying standard Fourier transform properties. 
The Fourier transforms of resonance terms (\ref{eq:: Ferran fourier transform of Windowed Multipole representation Resonance term}) are obtained by identifying the integral representation 6.7.13 in \cite{NIST_DLMF}, and using identity $f\left( z \mathrm{e}^{\pm \mathrm{i}\pi}\right) = \pi \mathrm{e}^{\mp \mathrm{i}z} - f(z) $ (c.f. 6.4.6 \cite{NIST_DLMF}) if the phase of the pole $p_j$ does not respect $\left| \mathrm{ph}\left(p_j \right)\right| < \pi$. 
\end{proof}

The Fourier transform approach of theorem \ref{theo::WMP Fourier transform Doppler broadening of Windowed Multipole cross sections} to arbitrary order temperature derivatives is conceptually more elegant than the direct differentiations of theorem \ref{theo::WMP Doppler broadening derivatives}: there is no need for Arbogast - Fa\`a di Bruno composition expansions nor recurrences. It is also more general, as the correction $\mathrm{C}$-function term (\ref{eq: C-function correction term}) is not neglected in the Doppler broadening, and that the Fourier transform approach could potentially be expanded to treat thermal scattering with the phonon distributions of targets: one would then need to replace the Boltzmann distributions $\widehat{\mathcal{K}^\mathbb{B}_T}$ (\ref{eq: Fourier of Boltzmann}) with the corresponding phonon Fourier spectra (c.f. ``Neutron Slowing Down and Thermalization'' chapter in \cite{cacuciHandbookNuclearEngineering2010} or \cite{ballingerDirectAlphaBeta1995, vineyardScatteringSlowNeutrons1958}).
In practice, theorem \ref{theo::WMP Fourier transform Doppler broadening of Windowed Multipole cross sections} also runs into its own hurdles: nothing guarantees that numerically performing the on-the-fly Fourier transforms of theorem \ref{theo::WMP Fourier transform Doppler broadening of Windowed Multipole cross sections} -- using the Fast Fourier Transform FFT and subsequent algorithms \cite{cooleyAlgorithmMachineCalculation1965, baileyFastMethodNumerical1994, inverarityFastComputationMultidimensional2002} -- is more computationally efficient than calling the Faddeyeva functions -- which also have benefited of great algorithmic and computational performance gains \cite{weidemanComputationComplexError1994, poppeMoreEfficientComputation1990, zaghloulAlgorithm916Computing2012, Faddeyeva_2017, website_faddeevamit} -- and the recursive formulae of theorem \ref{theo::WMP Doppler broadening derivatives}. This is all the more so true than theorem \ref{theo::WMP Fourier transform Doppler broadening of Windowed Multipole cross sections} requires the computation of the $f$ auxiliary function (\ref{eq:: Ferran fourier transform of Windowed Multipole representation Resonance term}), which could be more costly than calling the Faddeyava function.
Also, Fourier transforms are global integrals, so the windowing process complicates this approach, and the windows have now to be selected according to the method Ferran discussed in IV.B.2 of \cite{ferranNewMethodDoppler2015a}, considering that the Doppler broadening only affects the cross section $\sigma(E)$ at a given energy $E$ for a convolution over an interval commensurate to the temperature energy $\beta^2$, say four times $E \pm 4 \beta^2$ \cite{solbrig1961doppler, cullenExactDopplerBroadening1976a, Hwang_1987, ferranNewMethodDoppler2015a}. 
Note that this locality problem already exists in the direct Doppler broadening of theorem \ref{theo::WMP Doppler broadening} and by extension theorem \ref{theo::WMP Doppler broadening derivatives}, and even in the windowing process itself, when selecting which poles $p_j$ to include in window $\mathcal{W}(E)$ as discussed in section \ref{subsubsec:Windowing process: Laurent background fit} and established in \cite{Forget_2013, Josey_JCP_2016}. Though in theory the Mittag-Leffler expansion converges on the entire energy domain between two thresholds $\left[ E_{T_c} , E_{T_c+1} \right]$, in practice it is too costly to compute the Faddeyeva functions for all poles, the essence of the windowing process is therefore to only account for the poles which affect the cross section in window $\mathcal{W}(E)$ upon Doppler broadening, in practice extending the domain (``external window'' in \cite{Forget_2013, Josey_JCP_2016}) for a couple of temperature energy variances in the Boltzmann distribution ($\left[\mathcal{W}(E_{\mathrm{min}} - 4 \beta_{\mathrm{max}(T)}^2) , \mathcal{W}(E_{\mathrm{max}} + 4 \beta_{\mathrm{max}(T)}^2) \right]$): this is a very similar process than Ferran's continuation of the function for the Fourier transform, discussed in section IV.B.2 of \cite{ferranNewMethodDoppler2015a}. 
Therefore, if the windowing process is well performed, the expression of theorem \ref{theo::WMP Fourier transform Doppler broadening of Windowed Multipole cross sections} will be valid within each window. Otherwise, one would need to truncate the Fourier transforms at the boundary of each energy window, and laboriously concatenate the Ferran representation window by window in the Fourier transforms. 

Finally, note that Ferran's Doppler broadening method presents similarities with the optimal temperature kernel reconstruction quadratures developed in \cite{Ducru_JCP_2017}: both are kernel methods operating on the cross sections, in particular the Boltzmann kernel eq. (6) of \cite{Ducru_JCP_2017}. 
Appendix D of \cite{Ducru_JCP_2017} studies the consequences the Windowed Multipole Representation of R-matrix cross sections on the Fourier transforms involved in theorem \ref{theo::WMP Fourier transform Doppler broadening of Windowed Multipole cross sections}. In particular are discussed the general shapes of the Fourier transforms of the nuclear resonances, compared to the $\mathcal{K}^\mathbb{B}_T$ Boltzmann kernel (\ref{eq: Fourier of Boltzmann}), and how this can entail properties of interest, such as frequency separation in $\mathrm{L_2}$ norm (c.f. eq. (D.9) and sections D.2 and D.3 of appendix D in \cite{Ducru_JCP_2017}).

\section{\label{sec:Conclusion}Conclusion}

This article establishes the theoretical foundations for the Windowed Multipole Library.

We derive how the Windowed Multipole Representation of R-matrix cross sections can be constructed by finding the poles of the Kapur-Peierls operator and performing Hwang's albebraic continuation (theorem \ref{theo::WMP Representation}).
In the process, we connect the Windowed Multipole Representation to both the Bloch and Wigner-Eisenbud R-matrix theory and to the Humblet-Rosenfeld pole expansions in wavenumber space.

We establish a method to convert R-matrix resonance parameters covariance matrices into Windowed Multipole covariances (theorem \ref{theo::WMP covariance}), and show they generate the same uncertainty distribution on nuclear cross sections, either through the sensitivity approach or by sampling stochastic cross sections.

We recall Windowed Multipole cross sections can be Doppler broadened analytically to high accuracy (theorem \ref{theo::WMP Doppler broadening}), and expand this on-the-fly capability to arbitrary-order temperature derivatives (theorem \ref{theo::WMP Doppler broadening derivatives}), whist deriving new capabilities for temperature treatment by means of Fourier transforms of Windowed Multipole cross sections (theorem \ref{theo::WMP Fourier transform Doppler broadening of Windowed Multipole cross sections}).

The Windowed Multipole Representation of R-matrix cross sections has already proved its efficacy on a vast range of nuclear physics applications.
We hope the foundational results of this article will allow for the widespread adoption of the Windowed Multipole Library, and underpin new research efforts to expand its capabilities.

\begin{acknowledgments}

This work was partly funded by the Consortium for Advanced Simulation of Light Water Reactors (CASL), an Energy Innovation Hub for Modeling and Simulation of Nuclear Reactors under U.S. Department of Energy Contract No. DE-AC05-00OR22725. 

The first author was also partly funded by: the 2019-2020 AXA Fellowship of the Schwarzman Scholars Program at Tsinghua University.
We would like to thank: Mark Paris and Gerald Hale from Los Alamos National Laboratory for their help on R-matrix theory and the R-matrix 2016 summer workshop in Santa Fe; Gr\'egoire Allaire from \'Ecole Polytechnique for his help on Fredholm's alternative and Perron-Frobenius theory; Semyon Dyatlov from MIT and U.C. Berkeley for his help on Gohberg-Sigal theory; Javier Sesma from Universidad de Zaragoza for his help on properties of the Hankel functions; Yoann Desmouceaux for his help in proving the diagonal divisibility and capped multiplicities lemma 3 of \cite{Ducru_shadow_Brune_Poles_2019}; Andrew Holcomb for his help in testing theorem 1 of \cite{Ducru_shadow_Brune_Poles_2019}; and Haile Owusu for his help on Hamiltonian degeneracy.

\end{acknowledgments}

\appendix

\section{\label{appendix: Single Breit-Wigner resonance}Single Breit-Wigner capture resonance}

In order to derive a simple reference case that is tractable analytically, we here study the multipole representation of the first radiative capture s-wave resonance of uranium $^{\text{238}}\text{U}$. 
We neglect the energy dependence of the widths in the resonance (this constitutes the B=S approximation), and denote $\Gamma_\lambda \triangleq \Gamma_\gamma + \Gamma_n$, so that the $\gamma$-channel cross section takes the form:

        
        


\begin{equation}
\begin{IEEEeqnarraybox}[][c]{C}
     \sigma_\gamma(E)  =  \pi g_{J^\pi} a_c^2 \frac{\Gamma_\gamma \Gamma_n}{\rho_0^2 \sqrt{E_\lambda}}\frac{1}{\sqrt{E}}\frac{1}{\left(E_\lambda-E\right)^2 +  \Gamma_\lambda^2/4}
\label{eq:: caputre SLBW}
\IEEEstrut\end{IEEEeqnarraybox}
\end{equation}
which is a Single-Level Breit-Wigner resonance (\ref{eq:: SLBW Single Breit Wigner resonance}) with $\mathcal{E_\lambda} \triangleq E_\lambda - \mathrm{i}\frac{\Gamma_\lambda}{2}$, $a=0$, and $b \triangleq 2\pi \frac{\Gamma_\gamma \Gamma_n}{\rho_0^2 \sqrt{E_\lambda}\Gamma_\lambda}$, i.e.
\begin{equation}
\begin{IEEEeqnarraybox}[][c]{C}
     \sigma_\gamma(E)  = \frac{1}{\sqrt{E}}\Re\left[\frac{\mathrm{i}b}{ E - \mathcal{E_\lambda}}\right]
\label{eq:: simplified caputre SLBW Real part notation}
\IEEEstrut\end{IEEEeqnarraybox}
\end{equation}
Let us now cast (\ref{eq:: caputre SLBW}) into the multipole representation (\ref{eq:: sigma(z) Windowed Multipole representation}). We perform this by change of variables $z^2 = E$, and $p^2 = \mathcal{E}_\lambda$, and partial fraction decomposition:
\begin{equation}
\begin{IEEEeqnarraybox}[][c]{rcl}
\frac{1}{\sqrt{E}}\Re\left[ \frac{\mathrm{i}b}{E-\mathcal{E}_\lambda }\right] & \ = \ & \frac{1}{z^2}\Re\left[ \frac{\mathrm{i}b/2}{z-p} + \frac{\mathrm{i}b/2}{z+p} \right]
\IEEEstrut\end{IEEEeqnarraybox}
\label{eq:: Breit Wigner resonance multipole representation}
\end{equation}
So that the multipole cross section in $z$-space is then:
\begin{equation}
\begin{IEEEeqnarraybox}[][c]{C}
     \sigma_\gamma(z)  =\frac{1}{z^2}\Re_{\mathrm{conj}}\left[ \frac{r}{z-p} + \frac{r}{z+p} \right]
\label{eq::caputre SLBW Real multipole representation}
\IEEEstrut\end{IEEEeqnarraybox}
\end{equation}
with
\begin{equation}
\begin{IEEEeqnarraybox}[][c]{rCl}
    r & \triangleq  & \mathrm{i}\pi \frac{\Gamma_\gamma \Gamma_n}{\rho_0^2 \sqrt{E_\lambda}\; \Gamma_\lambda} \\
    p & \triangleq & \sqrt{E_\lambda - \mathrm{i}\frac{\Gamma_\lambda}{2}}
\label{eq::caputre SLBW Real multipoles (poles and residues)}
\IEEEstrut\end{IEEEeqnarraybox}
\end{equation}
One can then verify the results of theorem \ref{theo::WMP Representation} with these explicit formulae.

In theorem \ref{theo::WMP covariance}, we develop a method to compute the Jacobian matrix $\left( \frac{\partial \Pi}{ \partial \Gamma}\right) $, using the sensitivities $\frac{\partial \sigma}{\partial \Gamma} (E)$ of the cross section $\sigma(E)$ to resonance parameters $\big\{\Gamma\big\}$.
These can here be derived by direct differentiation of (\ref{eq:: caputre SLBW}), yielding the relative sensitivities (derivatives):
\begin{equation}
\begin{IEEEeqnarraybox}[][c]{rCl}
   \frac{1}{\sigma_\gamma} \frac{\partial \sigma_\gamma}{\partial E_\lambda} & = &  \frac{-1}{2E_\lambda} + 2(E - E_\lambda) \sigma_\gamma \frac{\rho_0^2 \sqrt{E}\sqrt{E_\lambda}}{\pi \Gamma_n \Gamma_\gamma} \\
   \frac{1}{\sigma_\gamma} \frac{\partial \sigma_\gamma}{\partial \Gamma_n} & = &  \frac{1}{\Gamma_n} - \frac{\Gamma_\lambda}{2} \sigma_\gamma \frac{\rho_0^2 \sqrt{E}\sqrt{E_\lambda}}{\pi \Gamma_n \Gamma_\gamma} \\
    \frac{1}{\sigma_\gamma} \frac{\partial \sigma_\gamma}{\partial \Gamma_\gamma} & = &  \frac{1}{\Gamma_\gamma} - \frac{\Gamma_\lambda}{2} \sigma_\gamma \frac{\rho_0^2 \sqrt{E}\sqrt{E_\lambda}}{\pi \Gamma_n \Gamma_\gamma} 
\label{eq:: sensitivities in BS approximation}
\IEEEstrut\end{IEEEeqnarraybox}
\end{equation}
Alternatively, these same cross section sensitivities $\frac{\partial \sigma}{\partial \Gamma} (E)$ can be computed using (\ref{eq:: simplified caputre SLBW Real part notation}). For real $b\in\mathbb{R}$, the partial derivatives to any real coefficient $\Lambda \in \mathbb{R}$ follow 
\begin{equation*}
\begin{IEEEeqnarraybox}[][c]{rCl}
   \frac{1}{\sigma_\gamma} \frac{\partial \sigma_\gamma}{\partial \Lambda} & = &  \frac{1}{b}\frac{\partial b}{\partial \Lambda} + \frac{\Re\left[ \frac{\mathrm{i}}{( E - \mathcal{E_\lambda})^2}\frac{\partial\mathcal{E}_\lambda}{\partial \Lambda}\right]}{\Re\left[ \frac{\mathrm{i}}{ E - \mathcal{E_\lambda}}\right]}
\label{eq:: sensitivity to generic parameter}
\IEEEstrut\end{IEEEeqnarraybox}
\end{equation*}
Since we have:
\begin{equation}
\begin{IEEEeqnarraybox}[][c]{lCr}
 \frac{\partial\mathcal{E}_\lambda}{\partial E_\lambda} =  1 \quad & , & \quad \frac{1}{b}\frac{\partial b}{\partial E_\lambda}  =  - \frac{1}{2E_\lambda}  \\
        \frac{\partial\mathcal{E}_\lambda}{\partial \Gamma_n}   =   \frac{\partial\mathcal{E}_\lambda}{\partial \Gamma_\gamma}  =  -\frac{\mathrm{i}}{2}  \quad  & , & \quad \frac{1}{b}\frac{\partial b}{\partial \Gamma_n}  =  \frac{1}{\Gamma_n}- \frac{1}{\Gamma_\lambda} \\
    \quad & , & \quad \frac{1}{b}\frac{\partial b}{\partial \Gamma_\gamma} =  \frac{1}{\Gamma_\gamma}- \frac{1}{\Gamma_\lambda}  
\label{eq:: sensitivities to generic parameter}
\IEEEstrut\end{IEEEeqnarraybox}
\end{equation}
the cross section sensitivities $\frac{\partial \sigma}{\partial \Gamma} (E)$ to resonance energy $E_\lambda$, neutron scattering width $\Gamma_n$, and radiative capture width $\Gamma_\gamma$ are thus respectively
\begin{equation}
\begin{IEEEeqnarraybox}[][c]{rCl}
   \frac{1}{\sigma_\gamma} \frac{\partial \sigma_\gamma}{\partial E_\lambda} & = &  \frac{-1}{2E_\lambda} + \frac{\Re\left[ \frac{\mathrm{i}}{( E -\mathcal{E_\lambda})^2}\right]}{\Re\left[ \frac{\mathrm{i}}{E - \mathcal{E_\lambda} }\right]} \\
      \frac{1}{\sigma_\gamma} \frac{\partial \sigma_\gamma}{\partial \Gamma_n} & = & \frac{1}{\Gamma_n} - \frac{1}{\Gamma_\lambda} + \frac{1}{2} \frac{\Re\left[ \frac{1}{ (E -\mathcal{E_\lambda})^2}\right]}{\Re\left[ \frac{\mathrm{i}}{ E -\mathcal{E_\lambda}}\right]} \\
          \frac{1}{\sigma_\gamma} \frac{\partial \sigma_\gamma}{\partial \Gamma_\gamma} & = & \frac{1}{\Gamma_\gamma} - \frac{1}{\Gamma_\lambda} +  \frac{1}{2} \frac{\Re\left[ \frac{1}{ (E -\mathcal{E_\lambda})^2}\right]}{\Re\left[ \frac{\mathrm{i}}{ E -\mathcal{E_\lambda}}\right]}
\label{eq:: sensitivity to resonance energy, scattering, and gamma widths}
\IEEEstrut\end{IEEEeqnarraybox}
\end{equation}
where the derivatives could be taken within the real part because all the parameters were real.
Using the cross section sensitivities $\frac{\partial \sigma}{\partial \Gamma} (E)$ -- either from (\ref{eq:: sensitivities in BS approximation}) or (\ref{eq:: sensitivity to resonance energy, scattering, and gamma widths}) -- and performing the corresponding Hwang's conjugate continuation (section \ref{subsec:Hwang's conjugate continuation}), one can therefore compute the multipole sensitivities $\left( \frac{\partial \Pi}{ \partial \Gamma}\right) $ of theorem \ref{theo::WMP covariance} using the contour integrals system (\ref{eq:: dΠ/dΓ contour integrals system}).

In this simple case of a Single-Level Breit-Wigner resonance in multipole representation (\ref{eq::caputre SLBW Real multipole representation}), we are also able to explicitly calculate the multipole sensitivities to resonance parameters -- i.e. Jacobian  $\left( \frac{\partial \Pi}{ \partial \Gamma}\right) $ -- by direct differentiation of the explicit formulae (\ref{eq::caputre SLBW Real multipoles (poles and residues)}), yielding:
\begin{equation}
\begin{IEEEeqnarraybox}[][c]{rCl}
     \frac{\partial p_+}{\partial E_\lambda} = - \frac{\partial p_-}{\partial E_\lambda} & = & \frac{1}{2 p_+} \\
     \frac{\partial r_+}{\partial E_\lambda} = \frac{\partial r_-}{\partial E_\lambda} & = & - \frac{r_+}{2 E_\lambda} \\
     \text{and} \quad \quad \quad \quad \quad \quad \quad & & \\
     \frac{\partial p_+}{\partial \Gamma_n} = - \frac{\partial p_-}{\partial \Gamma_n} & = & \frac{-\mathrm{i}}{4 p_+} \\
     \frac{\partial r_+}{\partial \Gamma_n} =  \frac{\partial r_-}{\partial \Gamma_n}  & = & r_+ \left( \frac{1}{\Gamma_n} - \frac{1}{\Gamma_\lambda}\right) \\
     \text{and} \quad \quad \quad \quad \quad \quad \quad  & & \\
     \frac{\partial p_+}{\partial \Gamma_\gamma} = - \frac{\partial p_-}{\partial \Gamma_\gamma} & = & \frac{-\mathrm{i}}{4 p_+} \\
     \frac{\partial r_+}{\partial \Gamma_\gamma} =  \frac{\partial r_-}{\partial \Gamma_\gamma}  & = & r_+ \left( \frac{1}{\Gamma_\gamma} - \frac{1}{\Gamma_\lambda}\right) 
\label{eq:: multipole sensitivities SLBW case}
\IEEEstrut\end{IEEEeqnarraybox}
\end{equation}
The latter multipole sensitivities (\ref{eq:: multipole sensitivities SLBW case}) can then be used to validate theorem \ref{theo::WMP covariance}.

For verification and reproducibility purposes, we generated figures \ref{fig:ENDF Uncertainty}, \ref{fig:LARGE ENDF Uncertainty}, and \ref{fig:Doppler boradening of WMP} using cross section (\ref{eq:: caputre SLBW}) with the parameters from the neutron slowdown analytic benchmark \cite{Analytic_Benchmark_1_2020}, which we here report in table \ref{tab:resonance parameters U-238 first resonance}. These parameters are similar (but not identical) to those of ENDF/B-VIII.0 evaluations, yielding the same cross section to a multiplicative constant. The resonance energies and widths are those of ENDF/B-VIII.0, as well as their covariance matrix. The enlarged covariance matrix in table \ref{tab:resonance parameters U-238 first resonance} is that of the analytic benchmark \cite{Analytic_Benchmark_1_2020}, and was designed to bring the neutron slowdown problem past the linear regime in resonance sensitivity. 

\begin{table}[h]
\caption{\label{tab:resonance parameters U-238 first resonance} 
Resonance parameters of the first s-wave radiative $\gamma$-capture resonance of $^{\mathrm{238}}\mathrm{U}$ used for generating temperature tolerance plot (FIG. \ref{fig:Doppler boradening of WMP}) and sensitivities demonstration (FIG.\ref{fig:ENDF Uncertainty} and FIG.\ref{fig:LARGE ENDF Uncertainty}). The resonance energies and widths, as well as their covariance matrix, are those of ENDF/B-VIII.0 evaluation \cite{ENDFBVIII8th2018brown}. The enlarged covariance matrix, as well as the channel radius $a_c$, the atomic weight $A$, and $\rho_0$, are those of the analytic neutron slowdown benchmark \cite{Analytic_Benchmark_1_2020}}
\begin{ruledtabular}
\begin{tabular}{l}
$ z = \sqrt{E}$ with $E$ in ($\mathrm{eV}$) \\
$A = 238$ \\ 
$a_c = 0.000948$ : channel radius, in Fermis \\ 
$\rho_0 / a_c =  0.002196807122623/2$ ($\sqrt{\mathrm{eV}}^{\mathrm{-1}}$) \\ 
$E_\lambda = 6.674280$ : first resonance energy ($\mathrm{eV}$) \\
$\Gamma_{n} = 0.00149230$ : neutron width of first resonance \\ 
$\Gamma_{\gamma} = 0.0227110$ : eliminated capture width ($\mathrm{eV}$) \\
$g_{J^\pi} = 1$ : spin statistical factor \\

ENDF/B-VIII.0 covariance matrix:\\
$\begin{IEEEeqnarraybox}[][c]{l}
\mathbb{V}\mathrm{ar}\left(\left[E_0, \Gamma_{n}, \Gamma_{\gamma}\right]\right)   =   \\ 
\left[ \begin{array}{ccc}
 1.1637690 \mkern+2mu\mathrm{E}\mkern-5mu - \mkern-5mu 7 & -2.7442070 \mkern+2mu\mathrm{E}\mkern-5mu - \mkern-5mu 10 & 1.8617500 \mkern+2mu\mathrm{E}\mkern-5mu - \mkern-5mu 8 \\
-2.7442070 \mkern+2mu\mathrm{E}\mkern-5mu - \mkern-5mu 10 & 3.9366000 \mkern+2mu\mathrm{E}\mkern-5mu - \mkern-5mu 10 & -6.5102670 \mkern+2mu\mathrm{E}\mkern-5mu - \mkern-5mu 9 \\
 1.8617500 \mkern+2mu\mathrm{E}\mkern-5mu - \mkern-5mu 8 & -6.5102670 \mkern+2mu\mathrm{E}\mkern-5mu - \mkern-5mu 9 & 1.6255630 \mkern+2mu\mathrm{E}\mkern-5mu - \mkern-5mu 7 
\end{array} \right]
\IEEEstrut\end{IEEEeqnarraybox}\label{eq:ENDFVIII_covariance_matrix}$\\

Enlarged covariance matrix (same correlation):\\
$\begin{IEEEeqnarraybox}[][c]{l}
\mathbb{V}\mathrm{ar}\left(\left[E_0, \Gamma_{n}, \Gamma_{\gamma}\right]\right)   =   \\ 
\left[ \begin{array}{ccc}
 1.2373892     & -1.1217107 \mkern+2mu\mathrm{E}\mkern-5mu - \mkern-5mu 5 & 5.6993358 \mkern+2mu\mathrm{E}\mkern-5mu - \mkern-5mu 4 \\ 
-1.1217107 \mkern+2mu\mathrm{E}\mkern-5mu - \mkern-5mu 5 & 6.1859980 \mkern+2mu\mathrm{E}\mkern-5mu - \mkern-5mu 8 & -7.6617177 \mkern+2mu\mathrm{E}\mkern-5mu - \mkern-5mu 7 \\
 5.6993358  \mkern+2mu\mathrm{E}\mkern-5mu - \mkern-5mu 4 & -7.6617177 \mkern+2mu\mathrm{E}\mkern-5mu - \mkern-5mu 7& 1.4327486 \mkern+2mu\mathrm{E}\mkern-5mu - \mkern-5mu 5
\end{array} \right]
\IEEEstrut\end{IEEEeqnarraybox} \label{eq:ENDFVIII_covariance_matrix_ENLARGED}$
\end{tabular}
\end{ruledtabular}
\end{table}


\clearpage

\bibliography{Windowed_multipole_representation_of_R_matrix_cross_sections}

\end{document}